\newif\ifSyntaxChecking
\newif\ifeurocg
\newif\ifLipics
\newif\ifIntThreeCon
\newif\ifFalse
\newcommand{\proofof}[1] {Proof of #1}
\newcommand*\patchAmsMathEnvironmentForLineno[1]{%
  \expandafter\let\csname old#1\expandafter\endcsname\csname #1\endcsname
  \expandafter\let\csname oldend#1\expandafter\endcsname\csname end#1\endcsname
  \renewenvironment{#1}%
     {\linenomath\csname old#1\endcsname}%
     {\csname oldend#1\endcsname\endlinenomath}}% 
\newcommand*\patchBothAmsMathEnvironmentsForLineno[1]{%
  \patchAmsMathEnvironmentForLineno{#1}%
  \patchAmsMathEnvironmentForLineno{#1*}}%
\def\comment#1{}%
\def\withcomments{%
  \newcounter{mycommentcounter}%
   \def\comment##1{\refstepcounter{mycommentcounter}%
    \ifhmode%
     \unskip%
     {\dimen1=\baselineskip \divide\dimen1 by 2 %
       \raise\dimen1\llap{\tiny
	{-\themycommentcounter-}}}\fi%
     \marginpar[{\renewcommand{\baselinestretch}{0.8}%
       \hspace*{-2em}\begin{minipage}{1.5\marginparwidth}\footnotesize%
[\themycommentcounter]:%
\raggedright ##1\end{minipage}}]{\renewcommand{\baselinestretch}{0.8}%
       \begin{minipage}{1.5\marginparwidth}\footnotesize%
[\themycommentcounter]: \raggedright%
##1\end{minipage}}}%
  }
\newcommand{\remove}[1]{{}}
\newcommand{\changed}[1]{#1}
\newcommand{\changednew}[1]{{#1}}
\newtheorem{theorem}{Theorem}
\newtheorem{lemma}[theorem]{Lemma}
\newtheorem{corollary}[theorem]{Corollary}
\newtheorem{obs}[theorem]{Observation}
\theoremstyle{remark}
\title{Convexity-Increasing Morphs of Planar Graphs\footnote{A preliminary version of this paper appeared in the proceedings of WG 2018~\cite{Kleist-WG-2018}.}} 
\author{
\hspace{6em}Linda Kleist\footnote{Technische Universit\"at Braunschweig, Germany}
\and
 Boris Klemz\footnote{Institut f\"ur Informatik, Freie Universit\"at Berlin, Germany}
  \and
Anna Lubiw\footnote{University of Waterloo, Canada}\hspace{6em}
\and
\hspace{5em}Lena Schlipf\footnote{FernUniversit\"at in Hagen, Germany}
\and
Frank Staals\footnote{Utrecht University, The Netherlands}
\and
Darren Strash\footnote{Hamilton College, USA}\hspace{5em}
} 
\date{}
\begin{document}

\maketitle
\thispagestyle{empty}

 \begin{abstract}
We study the problem of \emph{convexifying} drawings of planar graphs.
Given any planar straight-line drawing of an internally 3-connected graph, we show how to 
morph the drawing to one with strictly convex faces while maintaining planarity at all times. 
Our morph is \emph{convexity-increasing}, meaning that 
once an angle is convex, it remains convex.
We give an efficient algorithm that constructs such a
morph as a composition of a linear number of steps where each step either moves vertices along
horizontal lines or moves vertices along vertical lines. Moreover, we show that a linear number of steps is worst-case optimal.

 To obtain our result, we use a well-known technique by Hong and Nagamochi for finding redrawings with convex faces while preserving $y$-coordinates.
Using a variant of Tutte's graph drawing algorithm, we obtain a new proof of Hong and Nagamochi's result which comes with a better running time.
 This is of independent interest, as Hong and Nagamochi's technique serves as a building block in existing morphing algorithms.
\end{abstract}

\section{Introduction}

 A \emph{morph} between two planar straight-line drawings $\Gamma_0$ and $\Gamma_1$ of a plane graph $G$ is a continuous movement of the vertices from their positions in~$\Gamma_0$ to their positions in~$\Gamma_1$, 
with the edges following along as straight-line segments between their endpoints. 
A morph is \emph{planar} if it preserves planarity of the drawing at all times.

Motivated by applications in animation and in reconstruction of 3D shapes from 2D slices, the study of morphing has focused on finding a morph between two given planar drawings.
The existence of planar morphs was established long ago~\cite{Cairns,Thomassen}, followed by algorithms that produce good visual results~\cite{Floater-Gotsman,Gotsman-S}, and algorithms that find ``piece-wise linear'' morphs with a  linear number of steps~\cite{alamdari2016morph}. 
Our focus is somewhat different, and  
more aligned with graph drawing goals---our input is a planar graph drawing and our aim is to morph it to a better drawing, in particular to a convex drawing. 
A morph \emph{convexifies} a given straight-line graph drawing if the result is a \emph{(strictly) convex graph drawing}, i.e.~a planar straight-line graph drawing in which 
every face is a (strictly) convex polygon.
\ifeurocg \else For a survey on convex graph drawing, see~\cite{Rahman2015}.\fi{}

We first observe that it is easy, using known results, to find a planar morph that convexifies a given drawing---we can just create a convex drawing with the same faces (assuming such a drawing exists), and morph to that specific drawing 
using the known planar morphing algorithms.
(For a discussion of the techniques used, see the section on related work.)
In this paper, we are interested in a stronger condition: we want to find a convexifying morph which is also 
\emph{convexity-increasing},
meaning that once an angle of an inner face is convex, it remains convex. 
We illustrate a convexity-increasing morph in Fig.~\ref{fig:exMorphSeq}.

\begin{figure}[thb]
\centering
\includegraphics{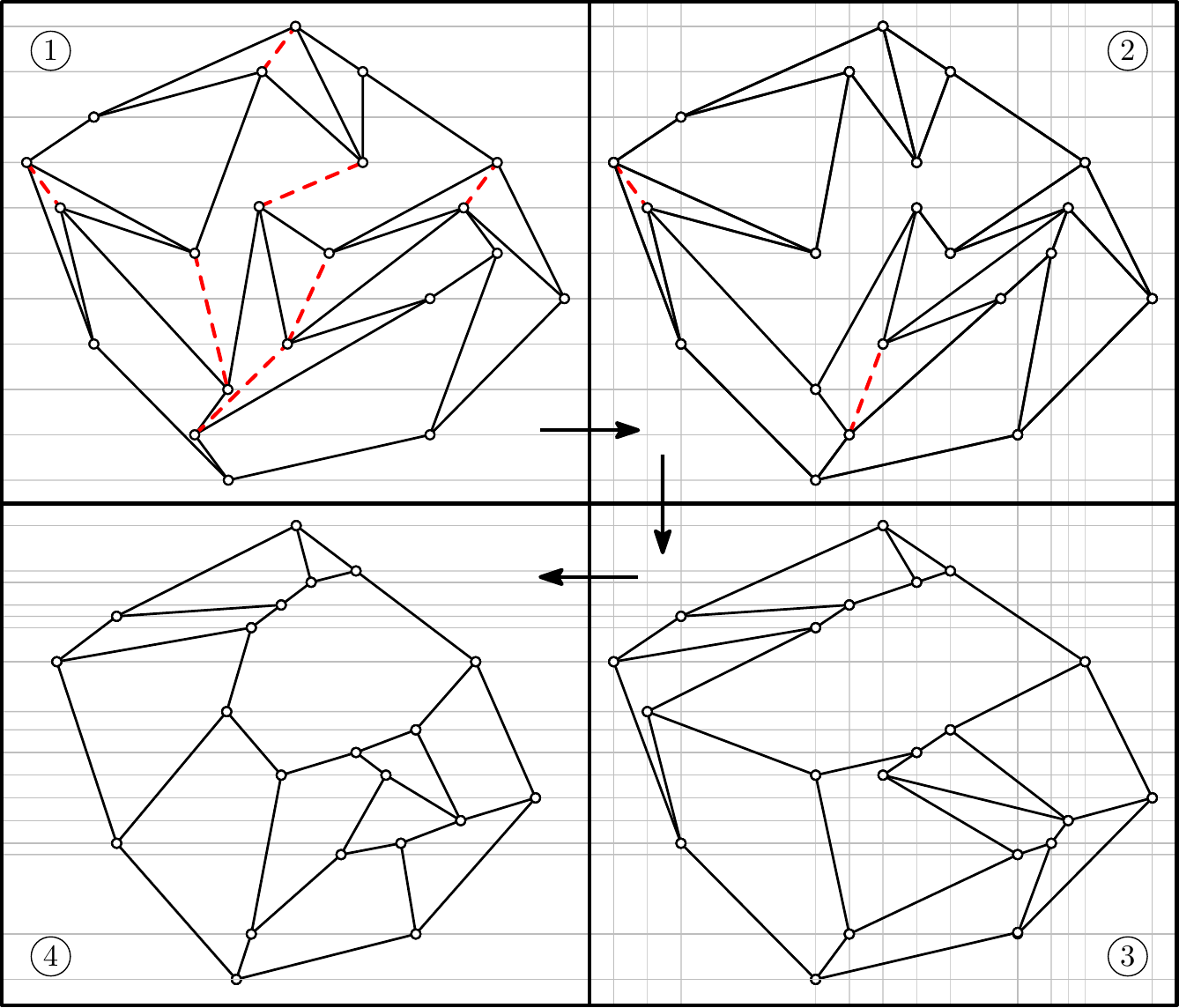}
\caption{A sequence of convexity-increasing morphs (horizontal, vertical,
  horizontal) that morph a straight-line drawing of a graph $G$ (1)
  into a strictly convex drawing of $G$ (4).
  The dashed segments are auxilliary edges added by our algorithm before each morphing step.}
\label{fig:exMorphSeq}
\end{figure}

Besides the theoretical goal of studying continuous motion that is monotonic in some measure (e.g.~edge lengths~\cite{iben2009refolding}), another motivation comes from 
visualization---a morph of a graph drawing should maintain the user's ``mental map''~\cite{purchase2006important} which means changing as little as possible, and making observable progress towards a goal. 
Most previous morphing algorithms fail to provide convexity-increasing morphs even if the target is a convex drawing because they 
start by triangulating the drawing.  Therefore, an original convex angle may be subdivided by new triangulation edges, so there is no constraint that keeps it convex.

 \smallskip\noindent{\bf{Related work.}}
To the best of our knowledge,  previous work on convexity-increasing morphs only considers the case when the input graph is a cycle (or a path).
Connelly et al.~\cite{Connelly} and Canterella et al.~\cite{Cantarella-Demaine} gave algorithms to convexify a simple polygon 
while preserving edge lengths.  Since their motions are ``expansive'', they are convexity-increasing.  
Aichholzer et al.~\cite{aichholzer2011convexifying} gave an algorithm to find a ``visibility-increasing'' morph of a simple polygon to a convex polygon; this condition 
implies the condition of being convexity-increasing.

In related work, there is an algorithm to morph a convex drawing to another convex drawing of the same graph while preserving planarity and convexity~\cite{angelini-convex-drawings-2015}.  Such morphs are convexity-increasing by default, but do not address our problem since our initial drawing is not convex.

\changed{Many previous morphing algorithms find}
``piece-wise linear'' morphs, where the morph is composed of discrete steps and each step moves vertices along straight lines.  
A morph is called \emph{linear}
if each vertex moves along a straight line at \changed{constant speed;
different vertices are allowed to move at different speeds, and some may remain stationary.}
A linear morph is completely specified by the initial and final drawings.  
If, in addition, all the lines along which vertices move are parallel, then the morph is called \emph{unidirectional}~\cite{alamdari2016morph}.  

Alamdari et al.~\cite{alamdari2016morph} gave an algorithm with runtime $O(n^3)$ that takes as input two planar straight-line drawings of a graph on $n$ vertices with the same combinatorial embedding, i.e., the drawings  have the same outer face and for each face (every boundary component) has the same cyclic ordering of edges. It then constructs a planar morph between the two drawings that consists of a sequence of $O(n)$ 
unidirectional morphs.

\smallskip\noindent{\bf{Our contribution.}} In this paper, we give the first algorithm that convexifies a given straight-line planar drawing~$\Gamma$ via a planar convexity-increasing morph. The only requirement is that the plane graph~$G$ represented by~$\Gamma$ admits a strictly convex drawing. This is the case if and only if~$G$ is internally 3-connected; see Section~\ref{sec:convex-exists} for the definition and related discussions.

In fact, we achieve the following stronger property---our morphs are 
 composed of a linear number of \emph{horizontal} and \emph{vertical} morphs.   A \emph{horizontal} morph moves all vertices at constant speeds along horizontal lines; a
\emph{vertical} morph is defined analogously. 
These are special cases of unidirectional morphs. See Fig.~\ref{fig:exMorphSeq} for an illustration.

Orthogonality is a very desirable and well-studied criterion for graph drawing~\cite{eiglsperger2001orthogonal}, in part because there is evidence that the human visual cortex comprehends orthogonal lines more easily~\cite{appelle1972perception,marriott2012memorability,purchase2012graph}. 
Similarly, it seems natural that orthogonal \emph{motion} should be easier to comprehend, though this criterion has not been explored in previous work.

Our main result is summarized in the following theorem.

\begin{theorem}
\label{thm:morph-to-convex}
Let $\Gamma$ be a planar straight-line drawing of an internally 3-connected graph $G$ on $n$ vertices.  Then $\Gamma$ can be morphed to a strictly 
convex drawing via a sequence of at most $3.5n+2$ convexity-increasing planar morphing steps
each of which is either  horizontal or vertical.  

In the special cases that $G$ is 3-connected or $\Gamma$ has a convex outer face, the upper bound on the  number of morphing steps can be improved to $1.5n+2$ or $\mathrm{max}\lbrace 2,r+1\rbrace$, respectively, where $r$ denotes the number of internal reflex angles.

Furthermore, there is an 
$O(n^{1 + \omega/2})$
time algorithm to find the sequence of morphs,
where $\omega$ is the matrix multiplication exponent.
\end{theorem}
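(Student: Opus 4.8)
The plan is to realize the morph as the concatenation of a short preprocessing phase followed by two ``one-dimensional'' phases: a \emph{horizontal phase} during which every $y$-coordinate stays fixed, and a \emph{vertical phase} during which every $x$-coordinate stays fixed. The engine of each phase is the $y$-coordinate-preserving convex-redrawing result of Hong and Nagamochi, in the strengthened, $O(n^{\omega/2})$-time form obtained via a variant of Tutte's algorithm: for a plane graph whose outer face is drawn suitably and whose vertices have distinct $y$-coordinates, there is a straight-line drawing with the same $y$-coordinates in which every inner face is a strictly convex, $y$-monotone polygon, and it is computed by solving a single planar Laplacian system. First I would preprocess $\Gamma$ with $O(1)$ axis-parallel morphing steps: a vertical morph perturbing the $y$-coordinates so that all vertices have distinct $y$-coordinates (hence no edge is horizontal), and, when $G$ is only internally $3$-connected, the addition of auxiliary edges --- the dashed edges of Fig.~\ref{fig:exMorphSeq} --- placed entirely in the outer face so as to make the augmented graph $3$-connected with a convex outer boundary \emph{without subdividing any inner angle}. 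The cost of this augmentation is what separates the three stated bounds: none in the convex-outer-face case (leaving only the $\max\{2,r+1\}$ reflex-angle-driven term), a moderate amount in the $3$-connected case ($1.5n+2$), and more in general ($3.5n+2$).

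The heart of the argument is a \emph{convexity-increasing one-dimensional morph lemma}: given the current drawing and a Hong--Nagamochi target with the same $y$-coordinates, there is a sequence of $O(n)$ horizontal morphing steps from the former to the latter that is planar throughout and convexity-increasing. I would prove this by a recursive reduction in the spirit of Alamdari et al.~\cite{alamdari2016morph}, adapted to the axis-parallel setting: at each level identify a vertex that can be slid horizontally into a canonical position and then either suppressed or absorbed so that the remainder is again a legal smaller instance, recurse, and undo the reductions, spending $O(1)$ horizontal moves per level. Two elementary observations make each move safe. First, a horizontal linear morph between two planar drawings that share their $y$-coordinates and in which every face is $y$-monotone with matching topmost and bottommost vertices stays planar: at each height the $x$-coordinate of the left boundary chain and that of the right boundary chain are affine in time, so their correct ordering at times $0$ and $1$ forces it at all times. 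Second, along a horizontal morph the signed area of the triangle spanned by any inner angle is affine in time (the $y$-coordinates being frozen), so an angle convex at both ends of a step is convex throughout it, while an angle convex only at the end becomes convex exactly once and then stays convex --- precisely the convexity-increasing property --- provided the step's two drawings agree on every angle that is already convex; this provision is guaranteed by choosing the Hong--Nagamochi targets (using the perturbation) so that they never destroy a convex angle.

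With the lemma in hand the rest is bookkeeping. Running the horizontal phase brings us to a drawing whose inner faces are all strictly convex, hence also $x$-monotone; running the vertical phase --- the same lemma with the axes swapped, now targeting a Hong--Nagamochi drawing that additionally has a convex outer boundary --- only increases convexity, so the inner faces stay convex while the outer face, freed in the $y$-direction, becomes convex; deleting the auxiliary edges yields the required strictly convex drawing of $G$. Summing the $O(1)$ preprocessing steps and the reductions of the two phases gives the stated step counts, and the running time is dominated by the $O(n)$ Tutte-type solves, each a planar Laplacian system that nested dissection handles in $O(n^{\omega/2})$ time, for a total of $O(n^{1+\omega/2})$. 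I expect the main obstacle to be the one-dimensional lemma: keeping every step axis-parallel while simultaneously (i) preserving planarity of each horizontal move, which forces all intermediate drawings to stay $y$-monotone and thereby constrains which reductions are legal, and (ii) certifying the convexity-increasing invariant over all $O(n)$ steps without ever having to ``undo'' convexity --- in particular, showing that one Hong--Nagamochi target suffices for an entire phase rather than having to be recomputed adaptively --- all within a step budget that is linear with the small constants claimed.
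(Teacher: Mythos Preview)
Your proposal has a structural gap at its core. You plan two monolithic phases: a horizontal phase that is supposed to end with all \emph{inner} faces strictly convex, followed by a vertical phase that fixes the outer face. But a horizontal morph cannot change the convexity status of any angle at a local $y$-extremum: if both $f$-neighbours of $v$ lie above $v$, they stay above $v$ throughout any horizontal motion, and the angle at $v$ keeps its convex/reflex status (this is exactly Lemma~\ref{lem:convexity-inc}). Hence no sequence of horizontal morphs, however long, can convexify a reflex local minimum or maximum of an inner face, and your horizontal phase cannot terminate in a drawing with strictly convex inner faces. The paper confronts precisely this obstruction by \emph{alternating} directions: one horizontal step convexifies all h-reflex angles (those that are not local extrema), a shear then guarantees some remaining reflex angle is v-reflex, one vertical step convexifies that one, and so on. Progress is counted by the number of internal reflex angles, which gives the $\max\{2,r+1\}$ bound directly.

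A related gap is your use of Hong--Nagamochi. That lemma requires every face of the input to already be $y$-monotone, which the initial drawing need not satisfy. The paper therefore first augments each non-monotone inner face with $y$-monotone curves joining reflex local extrema (Lemma~\ref{lem:decomp}); these are the dashed \emph{interior} edges in Fig.~\ref{fig:exMorphSeq}, not outer-face edges meant to restore $3$-connectivity as you suggest. After applying Hong--Nagamochi to the augmented graph and deleting the auxiliaries, the only remaining reflex inner angles sit at the original local extrema---exactly the angles the next, orthogonal step will attack.

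Finally, your ``one-dimensional morph lemma'' is both unnecessary and mis-budgeted. Once the (augmented) source has $y$-monotone faces and the Hong--Nagamochi target shares all $y$-coordinates, a \emph{single} horizontal linear morph is already planar (Observation~\ref{lem:uni-morph2} with Lemma~\ref{lem:uni-morph}) and convexity-increasing (Lemma~\ref{lem:convexity-inc}); no Alamdari-style recursion is needed. The constants $1.5n+2$ and $3.5n+2$ do not come from decomposing a phase into $O(n)$ substeps; they come from the alternation above together with an iterative ``pocket-popping'' procedure (Lemma~\ref{lem:outerFace3con}) that removes convex-hull edges one at a time using at most three morphs each, plus---in the internally $3$-connected case---a buffer-vertex construction around each pocket and its subsequent removal at two morphs per vertex (Lemma~\ref{lem:removeLabelBmorph}).
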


The run time is $O(n^{2.5})$ with Gaussian elimination, improved to $O(n^{2.1865})$
using the current fastest matrix multiplication method with $\omega \approx 2.3728639$~\cite{LeGall}.
Our model of computation is the real-RAM---we do not have a polynomial bound on the bit-complexity of the coordinates of the vertices in the sequence of drawings that specify the morph.
However, previous morphing algorithms had no such bounds either.

Our algorithm has another advantage in terms of visualization over previous morphing algorithms such as the one by Alamdari et al.~\cite{alamdari2016morph}.  These algorithms tend to   
``almost contract'' vertices, which destroys the user's ``mental map'' of the graph. 
We do not use
contractions, 
and therefore
expect our morphs to 
be useful for visualizations.

A main ingredient of our proof
is a result of Hong and Nagamochi~\cite{hn-2012} that gives conditions (and an algorithm) for redrawing a planar straight-line drawing to have convex faces, while preserving the $y$-coordinates of the vertices (``level planar drawings of hierarchical-st plane graphs,'' in their terminology).  
Angelini et al.~\cite{angelini-convex-drawings-2015} strengthened Hong and Nagamochi's result to strictly convex faces. 
We give a new proof of the strengthened result using Tutte's graph drawing algorithm.
Thereby, we improve the runtime of  Hong and Nagamochi's result from $O(n^2)$ to  $O(n^{\omega/2})$.
This is of independent interest, as Hong and Nagamochi's technique serves as a building block in several
other morphing algorithms~\cite{alamdari2016morph,angelini-convex-drawings-2015,da-lozzo2018upward}.
In particular, our improvement also speeds up the run-time of the morphing algorithm of Alamdari et al.~\cite{alamdari2016morph} from $O(n^3)$ to $O(n^{1 + \omega/2})$:

\begin{theorem} [Theorem 1.1 in~\cite{alamdari2016morph} with an improved runtime]
\label{thm:improved-morph}
Given a planar graph $G$ on $n$ vertices and two straight-line planar drawings of
$G$ with the same combinatorial embedding, there is a planar morph between the two drawings that consists of $O(n)$ unidirectional morphs. Furthermore,
the morph can be found in time $O(n^{1 + \omega/2})$.
\end{theorem}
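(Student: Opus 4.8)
The plan is to treat the morphing algorithm of Alamdari et al.~\cite{alamdari2016morph} essentially as a black box and merely replace one of its subroutines by the faster procedure developed earlier in this paper. The existence of a planar morph consisting of $O(n)$ unidirectional morphs is exactly Theorem~1.1 of~\cite{alamdari2016morph}, so nothing new is required there; only the running-time bound needs justification.

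First I would recall the structure of the algorithm of~\cite{alamdari2016morph} and locate its bottleneck. That algorithm proceeds by repeatedly picking a vertex $v$ of degree at most $5$ in the current plane graph, contracting it into a suitable neighbour in both drawings, recursing on the resulting graph with one fewer vertex, and then ``un-contracting'' $v$ along the recursively computed morph. Turning the contracted drawing into one in which $v$ can be reinserted --- so that the short initial and final morphs that move $v$ between its original and its contracted position are planar --- is done by a Hong--Nagamochi-type redrawing that fixes one coordinate of every vertex and makes the relevant faces convex. Over the whole recursion this redrawing is invoked $O(n)$ times, once per recursion level, on (sub)graphs with $O(n)$ vertices; with the $O(n^2)$-time implementation of Hong and Nagamochi~\cite{hn-2012} this accounts for the $O(n^3)$ total running time of~\cite{alamdari2016morph}. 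Every remaining part of the algorithm --- finding low-degree vertices, performing the contractions, the constant-size geometric computations that track $v$ during each of the $O(n)$ linear morphs, and writing down the $O(n)$ output drawings of size $O(n)$ each --- costs $O(n^2)$ in total.

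Next I would plug in the faster redrawing procedure. Earlier in the paper we showed, via a variant of Tutte's barycentric drawing algorithm, that a strictly convex redrawing preserving the prescribed coordinates can be computed in $O(n^{\omega/2})$ time, this being the cost of solving the associated linear system by fast matrix multiplication. Substituting this for each of the $O(n)$ Hong--Nagamochi calls yields a total cost of $O(n \cdot n^{\omega/2}) = O(n^{1+\omega/2})$ for that part; since $\omega \ge 2$ we have $n^{1+\omega/2} \ge n^2$, so the remaining $O(n^2)$ work is absorbed and the overall running time is $O(n^{1+\omega/2})$, as claimed. As remarked after Theorem~\ref{thm:morph-to-convex}, the model of computation is the real-RAM and no bound on the bit-complexity of the intermediate coordinates is claimed, consistent with~\cite{alamdari2016morph}.

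The main obstacle I expect is bookkeeping rather than conceptual: one has to verify carefully against~\cite{alamdari2016morph} that the Hong--Nagamochi redrawing is genuinely the \emph{only} step whose total cost exceeds $O(n^2)$, i.e.\ that no other subroutine --- in particular the machinery that reinstates $v$ and stitches the per-level morphs together --- secretly incurs more than $O(n^2)$ work, and that the faster redrawing procedure of this paper can be used as a drop-in replacement under exactly the hypotheses in which~\cite{alamdari2016morph} invokes it (same fixed coordinates, same convexity requirement on the faces). Once this is checked, the running-time improvement follows immediately by the arithmetic above.
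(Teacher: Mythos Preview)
Your proposal is correct and takes essentially the same approach as the paper: treat the algorithm of~\cite{alamdari2016morph} as a black box, identify the Hong--Nagamochi redrawing as the bottleneck invoked $O(n)$ times, and substitute the $O(n^{\omega/2})$ procedure from Lemma~\ref{lem:H&N}. The paper's version is slightly more specific in that it names the relevant subroutine as \emph{Quadrilateral Convexification} and breaks the accounting into three phases (a compatible-triangulation preprocessing step, the main contraction loop, and a clean-up step), confirming that each phase uses $O(n)$ Hong--Nagamochi calls or runs in $O(n^2)$ outright---exactly the bookkeeping you flagged as the remaining obligation.
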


In Appendix~\ref{appendix:previousMorphing} we describe the algorithm by Alamdari et al.~and justify the improved runtime due to our version of Hong and Nagamochi's result.

Theorem~\ref{thm:morph-to-convex} guarantees the existence of a convexity-increasing morph to a strictly convex drawing where the morph is composed of~$O(n)$ horizontal/vertical morphs.
This is optimal in the worst case.
In fact we show something stronger:

\begin{restatable}{theorem}{spiral}
\label{thm:spiral}
For any $n\geq 3$, there exists a drawing of an internally $3$-connected graph on $n$ vertices for which any convexifying planar morph composed of a sequence of linear morphing steps requires $\Omega (n)$ steps.
\end{restatable}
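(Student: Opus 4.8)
The plan is to exhibit an explicit family of drawings -- a \emph{spiral} -- for which any linear-step convexifying morph must ``unroll'' the spiral one layer at a time, and each linear step can only undo a bounded amount of the winding. Concretely, I would build an internally $3$-connected graph on $n$ vertices drawn so that its boundary traces a polygonal spiral that winds around a central region $\Theta(n)$ times; think of a path $v_1,\dots,v_k$ of reflex vertices coiling inward, completed to an internally $3$-connected graph by adding a few vertices/edges in the center and on the outer boundary (e.g.\ a ``wheel-like'' completion) so that a strictly convex drawing exists and the graph count stays linear. The key invariant is a \emph{total turning number}: sum the exterior turning angles along the spiral boundary path; in the initial drawing this total is $\Omega(n)$ in absolute value (the spiral winds $\Omega(n)$ times), whereas in any strictly convex drawing the boundary of a face is a convex polygon and its total turning is exactly $\pm 2\pi$.

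The heart of the argument is a lemma bounding how much a single linear morphing step can change this turning number while remaining planar. In a linear morph each vertex moves at constant speed along a segment, so every edge direction is a continuous (indeed, projective-linear-in-$t$) function of time, and an individual angle at a vertex changes monotonically except that it can ``pass through'' straight only boundedly often; more importantly, planarity forbids any face from collapsing, so I would argue that during one linear step the turning contributed by any single vertex changes by $O(1)$ -- intuitively, a vertex angle can sweep past $\pi$ (a reflex angle becoming convex, or vice versa) at most a constant number of times, because the sign of the relevant determinant is a low-degree polynomial in $t$. Summing over the $O(n)$ spiral vertices, one linear step changes the total turning number by $O(n)$, but with a \emph{per-vertex} constant; the sharper statement I actually want is that each step ``fixes'' only $O(1)$ layers of the spiral, so that reducing the winding from $\Omega(n)$ down to $O(1)$ needs $\Omega(n)$ steps. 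To make that precise I would instead track a more local potential: assign to the spiral a sequence of ``active reflex vertices'' and show (i) in the initial drawing there are $\Omega(n)$ of them, (ii) in any convex drawing there are none, and (iii) a single planar linear morph can change the convex/reflex status of only $O(1)$ of them, because once a vertex is straight/reflex its neighbours are geometrically constrained by the still-reflex part of the spiral, and a constant-degree polynomial in $t$ governs each status flip.

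The step I expect to be the main obstacle is precisely (iii): proving that one linear step cannot simultaneously convexify many spiral vertices. The danger is a ``pantograph''-style motion that scales the whole spiral and flips lots of angles at once; I would rule this out by using planarity together with the rigidity of the spiral's nesting -- the $j$-th coil is trapped between the $(j-1)$-st and $(j+1)$-st, so it cannot open up until the outer coils have, which serializes the process. Formally I would set up, for each coil $j$, a geometric quantity (e.g.\ the signed area of a designated sub-quadrilateral, or the winding of one coil around the next) that is a polynomial of bounded degree in the morph parameter $t$, is bounded away from ``convex'' initially for $\Omega(n)$ values of $j$, and whose sign-change pattern across $j$ is forced to be ``staircase-like'' by the containment of coils; a bounded-degree polynomial per step then flips only $O(1)$ entries of this staircase, giving the $\Omega(n)$ lower bound. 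I would then note that the construction can be tuned for every $n\ge 3$ (padding with a few extra vertices when $n$ is small, which does not affect the asymptotics), and remark that the same family already defeats \emph{any} convexifying planar morph by a sequence of linear steps, which is the stated strengthening over Theorem~\ref{thm:morph-to-convex}'s optimality.
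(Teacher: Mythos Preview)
Your construction and the overall strategy (a spiral that must be unwound layer by layer) are the right intuition, but your proof plan attempts something substantially harder than what the paper does, and your first version of the key step (iii) has a genuine gap.

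The paper's proof is a one-paragraph \emph{reduction}: it cites a known lower bound (Theorem~\ref{thm:spiral-2}, due to Alamdari et al.) stating that morphing a spiral drawing $\Gamma^\Delta(n')$ of a \emph{path} to the straight drawing $\Gamma^-(n')$ requires $\Omega(n')$ linear steps. The paper simply embeds this spiral path inside a cycle on $n=2n'$ vertices, observes that any convexifying morph of the cycle restricts to a morph of the path ending in a convex (reflex) chain, and that a convex chain can be flattened to $\Gamma^-(n')$ in $O(1)$ further linear steps. This immediately contradicts the cited bound. No new potential function, no per-step analysis.

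By contrast, you are trying to prove the hard part from scratch. Your first formulation of (iii) --- ``a single planar linear morph can change the convex/reflex status of only $O(1)$ of them'' --- is false as stated: nothing prevents many vertices from crossing the straight-angle threshold simultaneously in one linear step (imagine a near-convex polygon with many barely-reflex vertices). The ``bounded-degree polynomial in $t$'' argument gives an $O(1)$ bound per vertex, but that only yields $O(n)$ flips per step, which is useless. Your refined version --- tracking the winding of one coil around the next --- is the correct idea, and is exactly how Alamdari et al.\ prove Theorem~\ref{thm:spiral-2}: during a linear morph the vector between two moving points is linear in $t$, so its direction can sweep through at most $\pi$ of angle, bounding the change in winding number by $O(1)$ per pair per step. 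So your second approach would work, but you would essentially be re-deriving the cited theorem rather than using it. If you intend to make the proof self-contained, commit to the winding-number potential from the start and drop the reflex-vertex-count argument entirely.
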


\smallskip\noindent{\bf{Organization.}} 
Our paper is structured as follows: We begin with preliminaries in Section~\ref{sec:preliminaries}.
The proof of Theorem~\ref{thm:morph-to-convex} is presented in Section~\ref{sec:main-proof}, and the proof of the improved running time of Theorem~\ref{thm:improved-morph} is given in Section~\ref{sec:Tutte}.
The lower bound on the number of morphs, namely Theorem~\ref{thm:spiral}, is shown in Section~\ref{sec:spiral}.
Finally, a discussion of the size of the grid needed for the 
intermediate drawings of our morph can be found in 
Section~\ref{sec:grid}. We conclude with open problems in Section~\ref{sec:openProblems}.

\section{Preliminaries}
\label{sec:preliminaries}

In this section, we introduce the concepts we will use.  
We formally define convex drawings and internally 3-connected graphs in Section~\ref{sec:convex-exists} as well as $y$-monotone drawings in Section~\ref{sec:yMonotone}.
We proceed by stating several useful properties of unidirectional morphs in Section~\ref{sec:unidirectional}. 
Finally, we address the concept of finding convex drawings in Section~\ref{sec:H&N}.

\subsection{Convex Drawings and Internal $3$-Connectivity}
\label{sec:convex-exists}

Given a planar straight-line drawing $\Gamma$ of a graph, its \emph{angles} are formed by pairs of consecutive edges around a face, with the angle measured inside the face. 
An \emph{internal angle} is an angle of an inner face. 
\changed{We say an angle is \emph{reflex} it it exceeds~$\pi$, \emph{convex} if it is at most $\pi$, and  \emph{strictly convex} if it is less than $\pi$.}
A drawing $\Gamma$ is \emph{convex} if the boundary of every face is a convex polygon, i.e., angles of the inner faces are convex and angles of the outer face are reflex or of size $\pi$.
The drawing is \emph{strictly} convex if the boundary of every face is a strictly convex polygon. 

\smallskip\noindent{\bf{Conditions for the existence of convex drawings.}} 
Throughout, we assume that our input is a drawing of a graph that admits a strictly convex drawing  with the same combinatorial embedding.
Necessary and sufficient conditions for the existence of a strictly convex drawing were given by Tutte~\cite{tutte-1960}, Thomassen~\cite{Thomassen-1984}, and Hong and Nagamochi~\cite{hn-2012}. 
These conditions can be tested in linear time by the algorithm of Chiba et al.~\cite{chiba1985}.

Such conditions are usually stated for a fixed convex drawing of the outer face, but the conditions become simpler when, as in our case, the drawing of the outer face may be freely chosen---in particular, may be chosen to have no 3 consecutive collinear vertices.
Internal vertices of degree $2$ can also be dealt with directly:  In a convex drawing,
an internal vertex of degree $2$ must be drawn as a point in the interior of the
straight line segment formed by its two incident edges.
This has two implications.
Firstly, a graph with an
internal vertex of degree 2 has no strictly convex drawing.  Secondly, for a convex drawing we may eliminate every internal degree 2 vertex by repeatedly replacing a path
of two edges by a single edge.  However, if this produces multiple edges, then there
exists no convex drawing.

With these observations, the necessary and sufficient conditions for the existence
of a (strictly) convex drawing become quite simple to state.
A plane graph~$G$ is \emph{internally 3-connected} if the graph is 2-connected and
any separation pair $\lbrace u,v\rbrace$ is \emph{external}, meaning that $u$ and
$v$ lie on the outer face and that every connected component of $(G- u-v)$
contains a vertex of the outer face of~$G$.
Observe that the two neighbours of an internal vertex of degree 2 form a separation
pair that is not external. 
The results of Tutte~\cite{tutte-1960}, Thomassen~\cite{Thomassen-1984}, and Hong and Nagamochi~\cite{hn-2012}
 become:  

\begin{lemma}\label{lem:convConn}
Let $G$ be a plane graph with outer face $C$.  
Then 
\begin{compactenum}
\item $G$ has a strictly convex drawing with outer face $C$ if and only if $G$ is
internally 3-connected. 
\item $G$ has a convex drawing with outer face $C$ if and only if repeatedly
eliminating internal vertices of degree 2 produces a graph that has no multiple
edges and is internally 3-connected.
\end{compactenum}
 \end{lemma}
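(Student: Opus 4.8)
The plan is to prove Lemma~\ref{lem:convConn} by reducing both statements to classical characterizations and then dispatching the remaining gap (the freedom to choose the outer face, and the treatment of internal degree-2 vertices) with the elementary observations already gathered in the preamble. For part~(1), I would first recall Tutte's and Thomassen's theorems in the form: given a fixed strictly convex polygon $C^\ast$ drawn for the outer cycle $C$, there is a strictly convex planar drawing of $G$ extending it if and only if $G$ is internally $3$-connected and no inner face is ``forced flat'' — more precisely, Thomassen's condition that for every separation pair the relevant component is split appropriately. The key simplification is that when we are free to choose $C^\ast$, we may take it to be a regular $|C|$-gon (or any strictly convex polygon with no three consecutive vertices collinear), and under that choice Thomassen's extra side conditions are automatically satisfied whenever $G$ is internally $3$-connected. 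So the forward direction ``strictly convex drawing $\Rightarrow$ internally $3$-connected'' is the standard necessity argument (a non-external separation pair $\{u,v\}$ forces a component of $G-u-v$ into the interior of a region bounded by at most the two edges/paths through $u$ and $v$, which cannot happen in a strictly convex drawing), and the reverse direction is: pick the regular polygon for $C$, apply Tutte's barycentric embedding (or directly Thomassen), and observe that with a strictly convex outer face the resulting drawing has all inner faces strictly convex exactly under internal $3$-connectivity. I would cite \cite{tutte-1960,Thomassen-1984,hn-2012} for the hard analytic content and only reprove the ``free outer face'' reduction.

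For part~(2), the plan is to use the reduction already spelled out in the text: in any convex (not necessarily strict) drawing an internal degree-$2$ vertex must sit in the relative interior of the segment joining its two neighbours, so contracting such a path of two edges into a single edge yields a convex drawing of the reduced graph $G'$ with the same outer face, provided no multi-edge is created; conversely, given a convex drawing of $G'$ we can reinsert each eliminated vertex as a point on the corresponding edge to recover a convex drawing of $G$. Hence $G$ has a convex drawing with outer face $C$ iff $G'$ does, and since $G'$ has no internal degree-$2$ vertices, ``convex'' and ``strictly convex'' drawings of $G'$ coincide up to perturbation — more carefully, $G'$ has a convex drawing iff it has a strictly convex drawing, because any convex drawing can be perturbed to a strictly convex one when there are no degree-$2$ internal vertices (the only obstruction to strict convexity at a vertex of degree $\ge 3$ is removable). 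Then part~(1) applied to $G'$ finishes it: $G'$ has a (strictly) convex drawing with outer face $C$ iff $G'$ is internally $3$-connected, i.e.\ iff eliminating internal degree-$2$ vertices of $G$ produces a simple, internally $3$-connected graph.

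The main obstacle I anticipate is the delicate interplay between ``convex'' and ``strictly convex'' after the degree-$2$ reduction: I need the statement that a $2$-connected plane graph with no internal vertex of degree $2$ admits a convex drawing (outer face $C$) if and only if it admits a strictly convex one. One direction is trivial; for the other I would argue that given a convex drawing one can independently perturb the non-strict angles — each non-strict internal angle lives at a vertex of degree $\ge 3$, and one can locally bend the two incident edges slightly inward without destroying planarity or the convexity of neighbouring faces, using that the vertex has a third incident edge separating the flat angle from becoming reflex. Making this perturbation argument global and simultaneous (all flat angles at once) is the fiddly part; alternatively one can sidestep it entirely by invoking that Tutte's barycentric embedding on the regular outer polygon directly produces a strictly convex drawing for internally $3$-connected simple graphs, so only the ``necessity'' (convex drawing $\Rightarrow$ internally $3$-connected) needs the separation-pair argument, and that argument is identical for convex and strictly convex. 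I would therefore structure the write-up to avoid the perturbation lemma where possible and lean on \cite{tutte-1960,Thomassen-1984,hn-2012} plus \cite{chiba1985} for the testability remark, keeping the novel content confined to the ``free outer face'' and ``degree-$2$ elimination'' bookkeeping.
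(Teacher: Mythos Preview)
Your proposal is correct and matches the paper's approach: the paper does not give a standalone proof of this lemma but presents it as a restatement of the cited results of Tutte, Thomassen, and Hong--Nagamochi, simplified by the two observations made in the preceding paragraphs (freedom to choose a strictly convex outer polygon, and the degree-$2$ elimination). Your plan is precisely a fleshed-out version of that derivation, and your instinct to sidestep the perturbation argument by appealing directly to Tutte's embedding for the sufficiency direction is the right call.
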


Note that a separation pair which is not external \emph{can} have both of its vertices on the outer face, see Fig.~\ref{fig:externalSepPairs}(b,c).
For this reason, we refer to a separation pair which is not external as \emph{non-external}\footnote{instead of using the more canonical, but misleading term \emph{internal}}.

\begin{figure}[bht]
  \centering
  \includegraphics{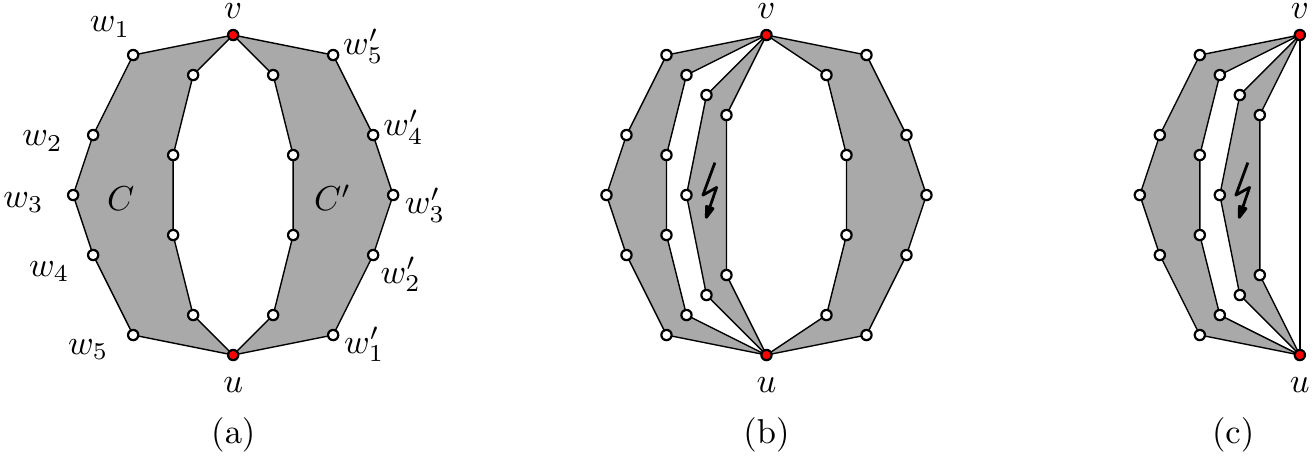}
  \caption{
  (a) An external separation pair $\lbrace u,v\rbrace$ and its two components $C$ and $C'$.  (b--c) In these cases $\lbrace u,v\rbrace$ is a non-external separation pair because the component marked with a jagged arrow has no vertex on the outer face.  In these cases there is no  convex drawing of $G$.
}
  \label{fig:externalSepPairs}
 \end{figure}

\smallskip\noindent{\bf{Structure of internally $3$-connected graphs.}}
There are multiple well-known equivalent definitions of internal $3$-connectivity.
Each of them provides a different perspective on the concept and it will be convenient to be able to refer to all of them.
Hence, we state the following characterization:

\begin{lemma}\label{lem:int3conDefs}
Let~$G$ be a plane $2$-connected graph and let~$f_o$ denote its outer face. The following statements are equivalent:
\begin{enumerate}[leftmargin=*,label={(I\arabic*)}]
\item \label{I1} $G$ is internally $3$-connected.
\item \label{I2} Inserting a new vertex~$v$ in~$f_o$ and adding edges between~$v$  and all vertices of~$f_o$ results in a $3$-connected graph.
\item \label{I3} From each internal vertex~$w$ of $G$ there exist three paths to~$f_o$ that are pairwisely disjoint except for the common vertex~$w$.
\end{enumerate}
\end{lemma}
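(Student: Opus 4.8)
The plan is to prove the two equivalences \ref{I1}~$\Leftrightarrow$~\ref{I2} and \ref{I2}~$\Leftrightarrow$~\ref{I3}, using \ref{I2} as the hub since it concerns ordinary $3$-connectivity and hence Menger's theorem applies directly. Write $G^+$ for the graph of~\ref{I2}, that is, $G$ together with an apex vertex $v$ adjacent to exactly the vertices of $f_o$; note $|V(G^+)|\ge 4$ because $G$ is $2$-connected. First I would record two elementary facts. (a)~Since $G$ is $2$-connected, $f_o$ is a simple cycle, so deleting two of its vertices leaves either a single path or two vertex-disjoint arcs; as $v$ is adjacent to \emph{all} of $f_o$, the set $\{v\}\cup(V(f_o)\setminus\{a,b\})$ induces a connected subgraph of $G^+-a-b$ for every $a,b\in V(G)$. (b)~A separation pair $\{a,b\}$ of $G$ is non-external if and only if some component of $G-a-b$ contains no vertex of $f_o$: if, say, $a\notin V(f_o)$ then $V(f_o)\setminus\{b\}$ is connected in $G-a-b$, so all outer vertices except possibly $b$ lie in one component and every further component avoids $f_o$.

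For \ref{I1}~$\Leftrightarrow$~\ref{I2} I would match non-external separation pairs of $G$ with separation pairs of $G^+$. If $G$ has a non-external separation pair $\{a,b\}$, then by~(b) some component $K$ of $G-a-b$ avoids $f_o$; as $v$ has no neighbour in $K$, the set $K$ is also a union of components of $G^+-a-b$, so $G^+$ is not $3$-connected. Conversely, let $\{a,b\}$ be a separation pair of $G^+$. Then $v\notin\{a,b\}$, for otherwise $G$ minus a single vertex would be disconnected, contradicting $2$-connectivity; by~(a) the component of $G^+-a-b$ containing $v$ also contains all of $V(f_o)\setminus\{a,b\}$, so any other component $C_1$ is disjoint from $\{v\}\cup V(f_o)$, and since $v$ has no neighbour in $C_1$ this $C_1$ is a component of $G-a-b$ too, witnessing that $\{a,b\}$ is a non-external separation pair of $G$. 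Hence $G$ is internally $3$-connected if and only if $G^+$ is $3$-connected.

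For \ref{I2}~$\Leftrightarrow$~\ref{I3} I would invoke Menger's theorem. If $G^+$ is $3$-connected and $w$ is an internal vertex of $G$, then $w$ and $v$ are distinct and non-adjacent, so $G^+$ contains three internally disjoint $w$--$v$ paths; the vertex immediately before $v$ on each of them lies on $f_o$, so truncating each path at its first vertex of $f_o$ gives a $w$--$f_o$ path inside $G$, and internal disjointness forces the three truncated paths to end at distinct vertices of $f_o$ and to be pairwise disjoint except for $w$---which is~\ref{I3}. Conversely, suppose $\{a,b\}$ is a separation pair of $G^+$. As before $v\notin\{a,b\}$, and $G^+-a-b$ has a component $C_1$ disjoint from $\{v\}\cup V(f_o)$ that is also a component of $G-a-b$. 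Pick $w\in C_1$; by~\ref{I3} there are three $w$--$f_o$ paths in $G$, pairwise disjoint except for $w$. Each starts in $C_1$ and ends on $f_o$ (hence outside $C_1$), so it passes through $a$ or $b$; by pigeonhole two of them share a vertex lying in $\{a,b\}$, which differs from $w$---contradicting disjointness. So $G^+$ is $3$-connected, establishing \ref{I3}~$\Rightarrow$~\ref{I2}.

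The argument has essentially no depth, so the main obstacle is bookkeeping: one must consistently keep straight whether a path or separating pair is being viewed inside $G$ or inside $G^+$, treat separately the case where the apex $v$ itself lies in a separation pair of $G^+$, and use the one mildly geometric point, namely fact~(a) that deleting two vertices can split $f_o$ into two arcs which $v$ then glues back together. With these in hand the rest follows from Menger's theorem and the definition of a non-external separation pair.
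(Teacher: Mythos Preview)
Your proof is correct and uses the same substantive ideas as the paper---Menger's theorem for \ref{I2}$\Rightarrow$\ref{I3}, and the pigeonhole observation that three disjoint $w$--$f_o$ paths cannot all pass through a two-vertex separator for the reverse direction. The only difference is organizational: the paper proves the cycle \ref{I1}$\Rightarrow$\ref{I2}$\Rightarrow$\ref{I3}$\Rightarrow$\ref{I1} (dismissing the first implication as obvious), whereas you use \ref{I2} as a hub and spell out \ref{I1}$\Leftrightarrow$\ref{I2} in full.
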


\begin{proof}
\ref{I1} obviously implies \ref{I2}, which in turn implies \ref{I3} by  Menger's theorem.
It remains to show that \ref{I3} implies \ref{I1}.
So suppose that from each internal vertex~$w$ there exist three paths to~$f_o$ that are disjoint except for the common vertex~$w$.
It follows that for each pair of vertices~$p,q$ each connected component of~$(G-p-q)$ has at least one vertex on~$f_o$, as otherwise there can be at most two disjoint paths from~$w$ to~$f_o$.
Moreover, both~$p$ and~$q$ belong to~$f_o$:
Assume otherwise.
Since~$G$ is $2$-connected,~$f_o$ is a \emph{simple} cycle and, hence, all vertices of~$(f_o-p-q)$ belong to the same connected component of~$(G-p-q)$.
However, by assumption from each internal vertex~$w$ there exist at least one path to~$f_o$ in~$(G-p-q)$ and, so,~$(G-p-q)$ is connected; a contradiction.
\end{proof}

The following statement gives a characterization of external separation pairs and describes the structure of internally 3-connected graphs. For an illustration see Fig.~\ref{fig:externalSepPairs}(a).

\begin{obs}\label{lem:externalSepPair}
Let $H$ be a plane 2-connected graph and let~$\lbrace u,v\rbrace$ be a separation pair of $H$. Then, $\lbrace u,v\rbrace$ is external if and only if all of the following conditions hold:
\begin{enumerate}[leftmargin=*,label={(E\arabic*)}]
\item \label{E1}Vertices $u$ and $v$ belong to the outer face of $H$.
\item \label{E2}The outer face of~$H$ decomposes into two internally disjoint paths $(u, w_1, \dots, w_j, v)$ and $(v, w'_1,\dots ,w'_\ell, u)$ each with at least 3 vertices, i.e.~$j\ge 1$and~$l\ge 1$.
\item \label{E3}Vertices $w_1,\dots,w_j$ belong to a connected component $C$ of $(H-u-v)$.
\item \label{E4}Vertices $w'_1,\dots,w'_j$ belong to a connected component $C'$ of $(H-u-v)$.
\item \label{E5}The graph $(H-u-v)$ has no connected component other than $C,C'$.
\item \label{E6}The components $C$ and $C'$ are distinct.
\end{enumerate}
\end{obs}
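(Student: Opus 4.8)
The plan is to establish the two directions of the biconditional separately, relying on the standard fact (already used in the proof of Lemma~\ref{lem:int3conDefs}) that the boundary of the outer face of a plane $2$-connected graph is a simple cycle; denote this outer cycle of $H$ by $\mathcal{C}$. I also record at the outset that, since $\{u,v\}$ is a separation pair of the $2$-connected graph $H$, the graph $(H-u-v)$ is disconnected and hence has at least two connected components; this is available in both directions.

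For the forward direction I would assume that $\{u,v\}$ is external. Then \ref{E1} is immediate from the definition. Deleting $u$ and $v$ from the cycle $\mathcal{C}$ leaves at most two arcs, and the first step is to rule out the possibility of a single arc: if $\mathcal{C}-u-v$ were a single path, then all vertices of $\mathcal{C}$ other than $u$ and $v$ would lie in one component of $(H-u-v)$, so a second component would, by externality, be forced to contain a vertex of $\mathcal{C}$ other than $u$ and $v$, a contradiction. Hence $\mathcal{C}-u-v$ consists of exactly two nonempty paths $(w_1,\dots,w_j)$ and $(w'_1,\dots,w'_\ell)$ with $j,\ell\ge 1$, which gives \ref{E2}. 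Each of these two paths is connected in $(H-u-v)$, so each lies inside a single component; calling these components $C$ and $C'$ yields \ref{E3} and \ref{E4}. Repeating the externality argument shows $C\neq C'$ — otherwise every vertex of $\mathcal{C}$ outside $\{u,v\}$ lies in $C$, so a further component would contradict externality — which is \ref{E6}; and the same argument shows that every component of $(H-u-v)$ contains a vertex of $\mathcal{C}$ outside $\{u,v\}$, hence a vertex of one of the two paths, hence equals $C$ or $C'$, which is \ref{E5}.

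For the converse I would simply unwind the definition of external from \ref{E1}--\ref{E6}. Condition \ref{E1} places $u$ and $v$ on the outer face. By \ref{E3}, \ref{E4} and \ref{E6}, the graph $(H-u-v)$ has the two distinct components $C$ and $C'$, and by \ref{E5} these are its only components; since $w_1$ and $w'_1$ lie on $\mathcal{C}$ by \ref{E2}, each of $C$ and $C'$ contains a vertex of the outer face. Therefore $\{u,v\}$ is external.

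The substance is entirely in the forward direction, and the only point requiring care is the degenerate bookkeeping: making sure both arcs of $\mathcal{C}$ between $u$ and $v$ are nonempty (equivalently that $u$ and $v$ are non-consecutive on $\mathcal{C}$) and that $C$ and $C'$ are genuinely distinct. I do not expect a real obstacle here — everything reduces to the single observation that the only vertices of $(H-u-v)$ lying on $\mathcal{C}$ are the interior vertices of the two $u,v$-arcs of $\mathcal{C}$, together with the defining property of an external separation pair that every component of $(H-u-v)$ meets the outer face.
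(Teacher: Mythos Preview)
Your proof is correct and follows essentially the same approach as the paper's own proof: both directions unwind the definition of an external separation pair against the structure of the outer cycle, using that each arc of $\mathcal{C}-u-v$ lies in a single component and that externality forces every component to meet $\mathcal{C}$. Your write-up is slightly more explicit about the degenerate case where $u$ and $v$ might be consecutive on $\mathcal{C}$, but the argument is the same.
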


\begin{proof}
If the six conditions hold, then clearly $\lbrace u,v\rbrace$ is an external separation pair.
On other hand, if $\lbrace u,v\rbrace$ is an external separation pair, $u$ and $v$ belong to the outer face of $H$ \ref{E1}.
Since~$H$ is $2$-connected, its outer face is a \emph{simple} cycle.
Further, the removal of~$u$ and $v$ splits the graph into at least two connected components each of which has a vertex that belongs to the outer face of $H$.
Hence, the removal of~$u$ and~$v$ decomposes the outer face into two internally disjoint paths $(u, w_1, \dots, w_j, v)$ and $(v, w'_1,\dots ,w'_\ell, u)$ each with at least 3 vertices \ref{E2}.
Since $(w_1, \dots, w_j)$ is a path in $(H-u-v)$, its vertices belong to a connected component $C$ \ref{E3}.
Similarly, $(w'_1, \dots, w'_j)$ is a path in $(H-u-v)$, its vertices belong to a connected component $C'$ \ref{E4}.
Since these two paths together with $u$ and $v$ cover the entire outer face of $H$, there can not be any more components \ref{E5}.
Finally, since there are at least two components, $C$ and $C'$ have to be distinct \ref{E6}.
\end{proof}

\subsection{$y$-Monotone Drawings}
\label{sec:yMonotone}

A face of a planar graph drawing is \emph{$y$-monotone} if the boundary of the face
consists of two $y$-monotone paths.
A path is \emph{$y$-monotone} if the $y$-coordinates along the curve realizing the path are strictly increasing.
These definitions apply to general planar graph drawings, not just straight-line drawings.
We note that the directed graphs that have drawings with $y$-monotone faces are the 
\emph{st-planar} graphs, which are well-studied~\cite{DiBattista-st-planar}.

We say a vertex $v$ is a \emph{local minimum} (\emph{local maximum}) of face $f$ in a drawing $\Gamma$ if the neighbors of  $v$ in $f$ lie above $v$ (below $v$, respectively).  
A \emph{local extremum} refers to a local minimum or a local maximum.
Note that a face $f$ is $y$-monotone if and only if it has exactly one local maximum and exactly one local minimum. Alternatively, a face is $y$-monotone if it has no reflex local extremum.

\subsection{Linear and Unidirectional Morphs}
\label{sec:unidirectional}

A linear morph is completely specified by the initial and the final drawing.  To denote the linear morph from a drawing $\Gamma_1$ to a drawing $\Gamma_2$, we use the notation $\langle  \Gamma_1,  \Gamma_2 \rangle$.
Restricting to linear morphs is a sensible way to discretize morphs---essentially, it asks for the vertex trajectories to be piece-wise linear.  At first glance, the restriction to unidirectional morphs  seems arbitrary and restrictive.  However, as discovered by Alamdari et al.~\cite{alamdari2016morph}, it  is easier to prove the existence of unidirectional morphs.  Also, unidirectional morphs have many nice properties,
as we explain in this section. 
Suppose we do a horizontal morph. 
Then every vertex keeps its $y$-coordinate.   Alamdari et al.~\cite{alamdari2016morph} gave conditions on the initial and final drawing that guarantee that the horizontal morph between them is planar:

\begin{lemma}~\cite[in the proof of Lemma 13]{alamdari2016morph}
If $\Gamma$ and $\Gamma'$ are two planar straight-line drawings of the same graph such that every line parallel to the $x$-axis crosses the same ordered sequence of edges and vertices in both drawings, then the linear morph from $\Gamma$ to $\Gamma'$ is planar.
\label{lem:uni-morph} 
\end{lemma}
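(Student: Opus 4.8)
The plan is to prove planarity of the morph pointwise: for an arbitrary fixed $t\in[0,1]$, let $\Gamma_t$ be the straight-line drawing in which each vertex sits at $(1-t)$ times its position in $\Gamma$ plus $t$ times its position in $\Gamma'$, and show $\Gamma_t$ is planar. The first step is to observe that the hypothesis forces every vertex $v$ to have the same $y$-coordinate in $\Gamma$ and $\Gamma'$: the horizontal line through $v$ in $\Gamma$ has $v$ in its crossing sequence, and since that sequence is the same in $\Gamma'$, vertex $v$ lies on the same horizontal line in $\Gamma'$. Hence $v$ keeps its $y$-coordinate throughout the morph, and — this is the point that makes everything work — for a horizontal line $\ell$ at height $h$ and an edge $e=uv$, whether $\ell$ meets the relative interior of $e$ depends only on the (now fixed) $y$-coordinates of $u$ and $v$. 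So the set of vertices lying on $\ell$ and the set of edges met by $\ell$ are the same in $\Gamma$, in $\Gamma'$, and in every $\Gamma_t$; call these the \emph{objects} on $\ell$.

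The second step is to track the left-to-right order of the objects on a fixed line $\ell$ as $t$ varies. A vertex on $\ell$ has $x$-coordinate affine in $t$ (linear interpolation). The point where $\ell$ meets an edge $e=uv$ has $x$-coordinate $(1-s)\,x_u(t)+s\,x_v(t)$, where $s=(h-y_u)/(y_v-y_u)$ is a constant independent of $t$ because $y_u,y_v$ are fixed; thus this too is affine in $t$ (in fact it is the interpolation between the crossing's position in $\Gamma$ and in $\Gamma'$). Consequently, for any two objects $o,o'$ on $\ell$, the difference of their $x$-coordinates is an affine function of $t$. Since $\Gamma$ and $\Gamma'$ are planar, in each of them the objects on $\ell$ occupy pairwise distinct points (a vertex never lies on a non-incident edge; two edges met by $\ell$ meet it at distinct points, since otherwise they would cross or overlap), and by hypothesis the objects occur in the same order in both. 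An affine function that is nonzero and of one fixed sign at $t=0$ and $t=1$ keeps that sign on all of $[0,1]$; therefore at every $t$ the objects on $\ell$ are still pairwise distinct and appear in the order inherited from $\Gamma$.

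The third step is to deduce planarity of $\Gamma_t$ from the invariant just established. If $\Gamma_t$ failed to be planar, there would be either a point $z$ in the relative interiors of two edges $e_1,e_2$, or a vertex $w$ in the relative interior of a non-incident edge $e$. In the first case the horizontal line through $z$ meets both $e_1$ and $e_2$ at the common point $z$, so two distinct objects on that line coincide — contradicting the invariant. In the second case the horizontal line through $w$ passes through the vertex $w$ and meets $e$ at $w$, again two distinct objects coinciding. A crossing occurring exactly at a vertex reduces to the second case, so $\Gamma_t$ is planar; as $t$ was arbitrary, the morph is planar.

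The core affine-monotonicity argument is robust, so I expect the only real care to be needed in the degenerate configurations: horizontal edges lying inside $\ell$, and triples of collinear vertices joined by a pair of incident edges. These require pinning down exactly what "crossing sequence" records for an edge contained in $\ell$ (an interval rather than a point), after which the same bookkeeping goes through; alternatively one can argue by a perturbation that removes such degeneracies from $\Gamma$ and $\Gamma'$. That is the step I would treat most carefully in a full write-up.
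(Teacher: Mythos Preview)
The paper does not supply its own proof of this lemma; it simply cites the result from Alamdari et al.~\cite{alamdari2016morph} and moves on to consequences. Your argument is correct and is essentially the standard one: the hypothesis pins every vertex to a fixed $y$-coordinate, so on each horizontal line the set of ``objects'' (vertices on the line, and edges properly crossing it) is time-invariant; each object's $x$-coordinate on that line is affine in $t$; and an affine function with the same sign at $t=0$ and $t=1$ keeps that sign on $[0,1]$, so the left-to-right order is preserved and no two objects can collide.

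Your caveat about degeneracies is well placed. Horizontal edges are the genuine nuisance, since such an edge is contained in its line rather than crossing it, and the notion of ``ordered sequence'' then needs to record an interval; the paper's own applications sidestep this by always arranging (via shears) that there are no horizontal edges before invoking the lemma. The other degeneracy you mention---incident edges meeting the line at a common endpoint---is harmless under the natural convention that an edge with an endpoint on $\ell$ contributes only that vertex, not an edge-object, to the sequence on $\ell$; with that convention the distinctness claims in your second step go through without further work.
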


Observe that the conditions of the lemma imply that every vertex is at the same $y$-coordinate in $\Gamma$ and $\Gamma'$ so the linear morph between them is horizontal.
Also note that the lemma generalizes in the obvious way to any direction, not just the direction of the $x$-axis.
We note several useful consequences of Lemma~\ref{lem:uni-morph}.

\begin{lemma}\label{lem:union}
 Let $\Gamma_1,\Gamma_2,\Gamma_3$ be three planar straight-line drawings where the linear morphs $\langle \Gamma_1,\Gamma_2 \rangle$ and $\langle \Gamma_2,\Gamma_3 \rangle$  are horizontal and planar.  
Then the linear morph $\langle \Gamma_1, \Gamma_3 \rangle$ is a horizontal planar morph.
\end{lemma}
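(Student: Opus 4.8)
The statement to prove is Lemma~\ref{lem:union}: if $\langle \Gamma_1,\Gamma_2\rangle$ and $\langle \Gamma_2,\Gamma_3\rangle$ are both horizontal planar morphs, then $\langle \Gamma_1,\Gamma_3\rangle$ is a horizontal planar morph. The plan is to reduce everything to the characterization furnished by Lemma~\ref{lem:uni-morph}, namely that a linear morph between two drawings is horizontal and planar precisely when every horizontal line crosses the same ordered sequence of vertices and edges in both drawings.

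First I would observe that being a \emph{horizontal} morph is the easy half: since $\langle\Gamma_1,\Gamma_2\rangle$ is horizontal, every vertex has the same $y$-coordinate in $\Gamma_1$ and $\Gamma_2$; since $\langle\Gamma_2,\Gamma_3\rangle$ is horizontal, every vertex has the same $y$-coordinate in $\Gamma_2$ and $\Gamma_3$. Hence every vertex has the same $y$-coordinate in $\Gamma_1$ and $\Gamma_3$, so the linear morph $\langle\Gamma_1,\Gamma_3\rangle$ moves each vertex along a horizontal line, i.e.\ it is horizontal. (One should note here that the lemma as quoted gives this observation explicitly: the conditions of Lemma~\ref{lem:uni-morph} imply equal $y$-coordinates.)

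Next, for planarity I would invoke Lemma~\ref{lem:uni-morph} in both directions. Because $\langle\Gamma_1,\Gamma_2\rangle$ is planar and horizontal, and $\Gamma_1,\Gamma_2$ are planar straight-line drawings of the same graph, I would like to know that every horizontal line crosses the same ordered sequence of vertices and edges in $\Gamma_1$ and in $\Gamma_2$. This is the converse direction of Lemma~\ref{lem:uni-morph}, which is the point that needs a little care: the lemma as stated is only the ``if'' direction. The clean way around this is to argue directly that during a planar horizontal morph the left-to-right order of the objects met by any fixed horizontal line $\ell$ cannot change. Two vertices, or a vertex and an edge, or two edges, met by $\ell$ can only swap order along $\ell$ by passing through a common point of $\ell$ at some intermediate time, which would be a crossing or a coincidence of two vertices, contradicting planarity of the morph (the morph is of a simple drawing, so distinct vertices stay distinct and edges stay non-crossing). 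Hence the ordered sequence along $\ell$ is an invariant of a planar horizontal morph, so it agrees between $\Gamma_1$ and $\Gamma_2$, and between $\Gamma_2$ and $\Gamma_3$; by transitivity it agrees between $\Gamma_1$ and $\Gamma_3$. Applying the ``if'' direction of Lemma~\ref{lem:uni-morph} to the pair $\Gamma_1,\Gamma_3$ then yields that $\langle\Gamma_1,\Gamma_3\rangle$ is planar, completing the proof.

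The main obstacle is precisely this invariance step: turning ``planar horizontal morph'' into the combinatorial statement ``every horizontal line sees the same ordered sequence,'' since the cited lemma only supplies the reverse implication. I expect this to be routine once phrased as above — it is a standard continuity/intermediate-value argument on the one-dimensional arrangement induced on each horizontal line — but it is the step that requires an actual argument rather than a direct citation. An alternative, if one prefers to avoid re-deriving the converse, is to pick a generic horizontal line $\ell$ (avoiding all vertices), note that the edges it crosses and their cyclic-along-$\ell$ order are determined by the plane embedding together with the sign pattern of the $y$-coordinates of the endpoints relative to $\ell$ — data that is identical in $\Gamma_1$ and $\Gamma_2$ (same embedding, same $y$-coordinates) — and then handle lines through vertices by a limiting argument; but the continuity argument above is the most self-contained route and is the one I would write up.
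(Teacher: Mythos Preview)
Your proposal is correct and follows essentially the same route as the paper: argue that a planar horizontal morph preserves the ordered sequence of vertices and edges along every horizontal line, use transitivity to conclude that $\Gamma_1$ and $\Gamma_3$ induce the same sequences, and then apply Lemma~\ref{lem:uni-morph}. The paper's proof simply asserts the invariance step (``so every line parallel to the $x$-axis crosses the same ordered sequence\ldots'') without further comment, whereas you correctly flag that this is the converse direction of Lemma~\ref{lem:uni-morph} and supply the continuity argument that justifies it; in that sense your write-up is slightly more complete than the original.
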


\begin{proof} 
The morphs $\langle \Gamma_1, \Gamma_2 \rangle$ and $\langle \Gamma_2, \Gamma_3 \rangle$ are horizontal and planar, so every line parallel to the $x$-axis crosses the same ordered sequence of edges and vertices in $\Gamma_1$ and $\Gamma_3$.  Then by Lemma~\ref{lem:uni-morph} the morph $\langle \Gamma_1, \Gamma_3 \rangle$ is horizontal and planar.
\end{proof}

\begin{lemma}
Let $\Gamma_1,\Gamma_2$ be two planar straight-line drawings such that $\langle\Gamma_1,\Gamma_2\rangle$ is a horizontal morph. 
Then the convexity status of each angle changes at most once in the morph $\langle\Gamma_1,\Gamma_2\rangle$, i.e.,~an angle cannot change more than once between reflex and convex or vice versa.   
If additionally, every convex internal angle of $\Gamma_1$ is also 
convex in $\Gamma_2$ then the morph $\langle\Gamma_1,\Gamma_2\rangle$ is convexity-increasing.
\label{lem:convexity-inc}
\end{lemma}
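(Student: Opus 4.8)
The plan is to reduce the whole statement to one algebraic observation: along a horizontal morph, the orientation determinant that decides the status of any fixed angle is an \emph{affine} function of the time parameter, and an affine function changes sign at most once. Concretely, fix an angle at a vertex $b$ formed by two edges $ab$ and $bc$ that are consecutive on some face, with the boundary of that face traversed so that $a,b,c$ occur in this order and the face lies to the left. With this convention the angle at $b$ is convex (of size at most $\pi$) exactly when
\[
D(t)\;=\;\bigl(x_b(t)-x_a(t)\bigr)\bigl(y_c-y_b\bigr)-\bigl(x_c(t)-x_b(t)\bigr)\bigl(y_b-y_a\bigr)\;\ge\;0,
\]
and reflex exactly when $D(t)<0$; the collinear instants $D(t)=0$ count as convex, so the status is well defined at every time. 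In a horizontal morph every $y$-coordinate stays fixed while every $x$-coordinate moves at constant speed, hence $x_b(t)-x_a(t)$ and $x_c(t)-x_b(t)$ are affine in $t$, and therefore so is $D(t)$. An affine function is either identically zero, nowhere zero, or has exactly one root, so the sign of $D$ --- equivalently, the convexity status of the angle --- changes at most once over $[0,1]$. Nothing here used that the angle is internal, so this proves the first statement for all angles.

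For the second statement I would apply the first to an arbitrary internal angle and read off the possibilities for the affine function $D$ on $[0,1]$. If $D$ has constant sign the status never changes; if $D$ is strictly increasing with a root in $[0,1]$, the angle passes from reflex to convex; in both situations, once the angle is convex it stays convex. The only remaining possibility is that $D$ is strictly decreasing with a root $\tau\in[0,1)$, which would make the angle convex at $t=0$, i.e.\ in $\Gamma_1$, yet reflex at $t=1$, i.e.\ in $\Gamma_2$, contradicting the hypothesis that every convex internal angle of $\Gamma_1$ is convex in $\Gamma_2$. Hence this case cannot occur: no internal angle ever turns from convex to reflex, so the morph is convexity-increasing.

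I do not expect a serious obstacle. The one place that needs care is bookkeeping around the orientation convention, so that ``convex'' corresponds uniformly to $D\ge 0$ (and similarly for the outer face with the opposite sign), together with treating the boundary instants $t\in\{0,1\}$ and the degenerate collinear instants carefully enough that the set of times at which the angle is convex is genuinely a subinterval of $[0,1]$; that the morph is horizontal but not necessarily planar is irrelevant, since $D(t)$ is defined and affine regardless.
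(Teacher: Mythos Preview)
Your argument is correct and is genuinely different from the paper's. The paper proceeds geometrically: it splits into two cases according to whether $b$ is a local extremum of the face (both neighbours on the same side vertically) or not. In the non-extremum case it adds the chord $ac$ to form a triangle and invokes Lemma~\ref{lem:uni-morph} to conclude that the triangle morph is planar between any two instants at which the angle is convex, hence the angle stays convex in between. In the local-extremum case it asserts without further computation that the status never changes. The second statement is then obtained by contraposition.

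Your route is purely algebraic: you observe that the orientation determinant $D(t)$ is an affine function of $t$ because the $y$-coordinates are frozen and the $x$-coordinates move linearly, so it changes sign at most once; the second statement follows by ruling out the single decreasing-through-zero case. This buys you several things. You avoid the case split and the detour through Lemma~\ref{lem:uni-morph}. You also avoid a subtle dependence on planarity that creeps into the paper's local-extremum case: if the morph is not assumed planar, two upward edges at $b$ can pass through each other, $D$ vanishes at an angle of $0$ rather than $\pi$, and the status \emph{can} flip once; your affine argument handles this uniformly, while the paper's one-line claim in that case tacitly relies on planarity. On the other hand, the paper's triangle argument makes the geometric content visible and ties the lemma to the same horizontal-sections machinery used elsewhere.

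One small polish: your case analysis for the second statement covers everything, but you might make explicit that a decreasing $D$ with root exactly at $\tau=1$ leaves $D\ge 0$ on $[0,1]$ and so is harmless; you implicitly exclude it by writing $\tau\in[0,1)$, which is fine.
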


\begin{proof} 
\changed{This result is a generalization of~\cite[Lemma 7]{angelini-convex-drawings-2015}, and both are proved using basic properties of unidirectional morphs from~\cite{alamdari2016morph}.}
Consider an angle formed by points $a, b, c$.  If $a$ and $c$ both lie above $b$, or both lie below $b$, then the angle maintains its convexity status (convex or reflex) during any horizontal morph.
So suppose that the ordering of the points by $y$-coordinate is $a, b, c$.  Suppose that the clockwise angle $abc$ is convex at two time points $t$ and $t'$ with $t < t'$ during the horizontal morph. 
If we add the edge $ac$ we obtain a triangle and the horizontal line through $b$ crosses $ac$ and $b$ in the same order at both time points.  Thus, the morph between $t$ and $t'$ is planar by Lemma~\ref{lem:uni-morph}, so the angle is convex at all times between $t$ and $t'$. 

We prove the second statement by contraposition.  Suppose the horizontal morph  $\langle \Gamma_1, \Gamma_2 \rangle$ is not convexity-increasing.  Then some internal angle changes from convex to reflex during the morph.  By the above statement, the angle must be convex in $\Gamma_1$ and reflex in~$\Gamma_2$.  
\end{proof}

Alamdari et al.~gave the following further condition that implies the hypothesis of Lemma~\ref{lem:uni-morph}.
We emphasize that the statement applies to planar graph drawings in general, that is, edges are not required to be straight-line.

\begin{obs}~\cite[in the proof of Lemma 13]{alamdari2016morph}
Let $\Gamma$ be a planar graph drawing of a graph~$G$ in which all faces are $y$-monotone
and let $\Gamma'$ be another planar drawing of $G$ that has the same combinatorial embedding, the same $y$-coordinates of vertices, and $y$-monotone edges. Then every line parallel to the $x$-axis crosses the same ordered sequence of edges and vertices in both drawings.
\label{lem:uni-morph2} 
\end{obs}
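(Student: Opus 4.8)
The plan is to show that for every horizontal line $\ell$, say at height $y_0$, the left-to-right sequence of edges and vertices that $\ell$ meets is determined by the combinatorial embedding of $G$ together with the common $y$-coordinates, hence must agree in $\Gamma$ and $\Gamma'$. First I would dispose of the unordered information: both drawings have $y$-monotone edges (in $\Gamma$ this is inherited from the $y$-monotonicity of its faces), so an edge is met by $\ell$ in its interior if and only if its two endpoints lie strictly on opposite sides of $\ell$, and then exactly once, while a vertex lies on $\ell$ if and only if its $y$-coordinate is $y_0$. Since the $y$-coordinates agree, $\ell$ meets the same edges and the same vertices in both drawings. I would also record that $\Gamma'$ has all faces $y$-monotone as well, since whether a face is $y$-monotone depends only on the cyclic sequence of $y$-coordinates along its boundary (it has exactly one local maximum and one local minimum, cf.~Section~\ref{sec:yMonotone}), which is shared. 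To keep the presentation clean I would assume every face is bounded by a simple cycle, which holds whenever $G$ is $2$-connected, and in particular in every application of this statement; the general case is analogous with extra bookkeeping. If $y_0$ is above the topmost or below the bottommost vertex there is nothing to prove, so assume otherwise.

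The structural core is this: a $y$-monotone face $f$ has a unique bottommost vertex $b_f$ and a unique topmost vertex $t_f$ on its boundary (its only local extrema), and for each $y_0\in(y(b_f),y(t_f))$ the boundary cycle of $f$ meets $\ell$ in exactly two points; moreover, which of these two points $\ell$ uses to enter $f$ (coming from the left) and which to leave $f$ is already fixed by the embedding together with the $y$-coordinates---it is read off from the boundary cycle oriented so that $f$ lies on its left (the entry point is the one at which this oriented walk is locally decreasing in $y$). In particular, once the point through which $\ell$ enters $f$ is known, the point through which it leaves is simply the other of the two, and that point is either a prescribed vertex at height $y_0$ or the interior of a prescribed edge---a combinatorial function of $f$ and $y_0$.

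With this I would perform a combinatorial sweep along $\ell$. At $x=-\infty$ the line is in the outer face $f_o$, which is $y$-monotone, so $\partial f_o$ is a $y$-monotone polygon and meets $\ell$ in at most two points: $\ell$ enters the graph exactly once and later leaves it exactly once. The point where $\ell$ enters the graph is one of the (at most) two points of $\partial f_o$ at height $y_0$; which one is again determined by the embedding and $y$-coordinates, by the same orientation argument applied to $f_o$. Thereafter I maintain the current face $g$ and the point $p$ at which $\ell$ entered $g$; the point $q$ where $\ell$ leaves $g$ is the other of $g$'s two boundary points at height $y_0$, computed as above; and the next face is the face on the far side of $q$---the other face incident to $q$ when $q$ is interior to an edge, and otherwise, when $q$ is a vertex $v$, the face at the appropriate ``horizontal seam'' at $v$. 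For the latter I use that in any planar drawing with $y$-monotone edges the edges leaving $v$ upward occupy a contiguous arc in the rotation at $v$, as do those leaving downward; the two arcs between them are exactly the two faces at $v$ that $\ell$ meets at $v$---the one it comes from and the one it goes to---and which is which is read off from the oriented rotation at $v$. The sweep halts when $g$ becomes $f_o$ again, whereupon $\ell$ runs off to $x=+\infty$. Every step of the sweep is a function of the current face, $y_0$, the embedding, and the $y$-coordinates---data common to $\Gamma$ and $\Gamma'$---so the sweep produces the same sequence of edges and vertices in both drawings; and in either drawing this sequence is precisely the left-to-right order in which $\ell$ meets edges and vertices. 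This yields the claim.

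The main difficulty is not conceptual but is bookkeeping: pinning down the orientation conventions so that the ``entry'' point of a face, the point where $\ell$ first enters the graph, and the face $\ell$ passes into at a vertex $v$ are genuinely functions of the rotation system (and handling the boundary cases $y_0=y(b_f)$ or $y_0=y(t_f)$), and proving the contiguity of the up- and down-arcs at each vertex---the one place planarity together with $y$-monotonicity of the edges is essential. I would isolate these as two short auxiliary lemmas. One could instead prove the statement first for heights $y_0$ avoiding all vertices and then pass to the general case by perturbing $y_0$, but formalizing that limit is no easier than treating vertices directly, so I would keep the direct sweep.
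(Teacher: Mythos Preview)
The paper does not give its own proof of this statement: it is recorded as an observation and attributed to Alamdari et al.~\cite[in the proof of Lemma 13]{alamdari2016morph}, with no argument supplied here. So there is no in-paper proof to compare against.

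Your proposal is correct and is a natural way to establish the result. The key moves---(i) the set of edges/vertices hit by a horizontal line at height $y_0$ depends only on the $y$-coordinates, (ii) $\Gamma'$ inherits $y$-monotone faces because that property is read off from the cyclic sequence of $y$-coordinates around each face, and (iii) a left-to-right sweep of $\ell$ through the drawing, advancing face by face, is governed entirely by the rotation system plus $y$-coordinates---are all sound. Your observation that a $y$-monotone face meets $\ell$ in exactly two boundary points and that the entry/exit roles are fixed by the oriented boundary walk is the crux, and it is right. The contiguity of the up-edges and down-edges in the rotation at each vertex follows immediately from $y$-monotonicity of the edges in a planar drawing (locally they lie in the upper and lower half-planes at $v$), and since the rotation and the up/down classification are shared, so is the partition.

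Two minor remarks. First, your parenthetical that $\Gamma$ has $y$-monotone edges ``inherited from the $y$-monotonicity of its faces'' is correct given the paper's definition (the two boundary chains are \emph{strictly} $y$-monotone curves, hence so is every edge on them); it is worth saying this explicitly since the statement is phrased for non-straight-line drawings. Second, your restriction to $2$-connected $G$ is harmless here---every use of the observation in the paper is for internally $3$-connected graphs---but you could also just run the sweep on the connected component containing the face you are in, which sidesteps the issue without extra bookkeeping.
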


Note that the property of $y$-monotone faces is a necessary condition---imagine a rectangle with a stalactite from the  top and a stalagmite from the bottom such that a horizontal line through the middle of the rectangle intersects first the stalactite and then the stalagmite.  This order of intersection can be switched without changing the combinatorial embedding nor the $y$-coordinates of the vertices. 

The final ingredient we need in order to make use of the above lemmas is a way to redraw a graph to preserve the combinatorial embedding and the $y$-coordinates of the vertices, while improving convexity.  
We will again follow Alamdari et al.~\cite{alamdari2016morph} and make use of a result of Hong and Nagamochi which is described in the next section.

\subsection{Redrawing with Convex Faces while Preserving $y$-Coordinates}
\label{sec:H&N}

We build upon an 
$O(n^2)$ time
algorithm due to Hong and Nagamochi~\cite{hn-2012} that redraws a given drawing with $y$-monotone faces  such that all faces become convex while preserving the $y$-coordinates of the vertices.   
Angelini et al.~\cite{angelini-convex-drawings-2015} extended
the result to strictly convex faces by perturbing vertices to avoid angles of $\pi$.  They did not analyze the run-time.
Both~\cite{hn-2012} and~\cite{angelini-convex-drawings-2015} expressed their results in terms of \emph{level planar drawings} of \emph{hierarchical-st plane graphs}. 
Their original statements and further explanations can be found in Appendix~\ref{appendix:H&N}.

In Section~\ref{sec:Tutte}, we give a new proof of Hong and Nagamochi's result using Tutte's graph drawing 
method, which finds the vertex coordinates by solving a linear system.  
Tutte's original linear system gives rise to a symmetric matrix that can be solved quickly using the nested dissection method of Lipton et al.~\cite{Lipton}.  For our case, the matrix is not symmetric, and we need a recent generalization  of nested dissection due to Alon and Yuster~\cite{alon2013matrix}. 
Our approach results in  
an improved running time of $O(n^{1.5})$ without, and $O(n^{1.1865})$ with, fast matrix multiplication:

\begin{lemma}
\label{lem:H&N}
Let $\Gamma$ be a planar drawing of an internally 3-connected graph $G$ such that every face is $y$-monotone.
Let $C$ be a strictly convex straight-line drawing of the outer face of $G$ such that every vertex of $C$ has the same $y$-coordinate as in $\Gamma$.   Then there is a  strictly convex  straight-line drawing $\Gamma'$ of $G$ that has $C$ as the outer face and such that every vertex of $\Gamma'$ has the same $y$-coordinate as in $\Gamma$.   Furthermore, $\Gamma'$ can be found in time 
$O(n^{\omega/2})$, where $\omega$ is the matrix multiplication exponent. 
\end{lemma}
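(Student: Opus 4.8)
The plan is to realize $\Gamma'$ as the drawing produced by a weighted version of Tutte's barycentric method, where the key trick is to decouple the two coordinate directions. For the $y$-coordinates we do nothing: we simply \emph{fix} every vertex at the $y$-coordinate it already has in $\Gamma$ (this is consistent with the boundary data, since $C$ agrees with $\Gamma$ in $y$). For the $x$-coordinates, we pin the boundary vertices to the $x$-coordinates prescribed by $C$ and let every interior vertex $v$ satisfy a convex-combination equation $x_v = \sum_{u \sim v} \lambda_{vu} x_u$ with positive weights $\lambda_{vu}$ summing to $1$. The weights will \emph{not} be symmetric; they will be chosen based on the fixed $y$-coordinates so that the resulting horizontal ``forces'' are balanced in a way that makes every face come out convex. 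This is exactly the Hong--Nagamochi setup recast as a linear system: one linear equation per interior vertex in the single unknown $x_v$, so an $n' \times n'$ system where $n'$ is the number of interior vertices.

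The key steps, in order, are as follows. First I would recall the structural facts we are allowed to use: by Lemma~\ref{lem:convConn} internal $3$-connectivity guarantees a strictly convex drawing exists with boundary $C$, and by Lemma~\ref{lem:int3conDefs}\ref{I3} every interior vertex has three internally disjoint paths to the outer face. Second, I would define the weights. For an interior vertex $v$ with neighbours in the embedding ordered cyclically, the $y$-monotone-face hypothesis means the neighbours split into those strictly above $v$ and those strictly below $v$ (none at the same height can be assumed, or handled by a tie-breaking perturbation of $C$'s companion values, but actually we only need the $y$-coordinates which are already fixed and we may perturb $\Gamma$ first so all vertices have distinct $y$-coordinates). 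Assign weight to each incident edge $vu$ proportional to something like the reciprocal of the vertical gap $|y_u - y_v|$, or more carefully a cotangent-type weight adapted from the target convex structure, normalized to sum to $1$; the precise recipe is the part I would lift from Hong--Nagamochi's argument, but the point is it depends only on $y$-coordinates and the embedding. Third, existence and uniqueness of the solution: the coefficient matrix is (a transpose of) a weakly diagonally dominant matrix that is irreducibly diagonally dominant because every interior vertex can reach the pinned boundary, hence nonsingular; so $\Gamma'$ is well-defined. Fourth, the planarity/convexity proof: I would run the standard Tutte/Floater argument — no interior vertex can be a local $x$-extremum on any face because it is a strict convex combination of its neighbours' $x$-coordinates, and combined with the fixed monotone $y$-structure this forces each face to be a convex polygon with no degenerate angle, with strict convexity coming from Angelini et al.'s perturbation or from making the weight inequalities strict. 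Finally, the running time: the system is $n' \times n'$, sparse, and — crucially — it is supported on a planar graph, so its $\sqrt{n}$-separators let us apply nested dissection. Since the matrix is \emph{not} symmetric, I would invoke the Alon--Yuster generalization of Lipton--Rose--Tarjan nested dissection, which solves a sparse linear system supported on a graph with $O(\sqrt n)$-separators in $O(n^{\omega/2})$ arithmetic operations; that yields the claimed $O(n^{\omega/2})$ bound.

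The main obstacle I expect is not the linear algebra or the planarity bookkeeping but getting the weights exactly right so that the output drawing is genuinely convex (not merely planar) while using \emph{only} the $y$-coordinates as input. In the classical Tutte theorem with uniform or symmetric positive weights one gets planarity but faces need not be convex unless the outer face is convex and the graph is $3$-connected; here the outer face $C$ is strictly convex and $G$ is only internally $3$-connected, so I have to verify that internal $3$-connectivity (via the three-disjoint-paths property) substitutes correctly for full $3$-connectivity in the no-local-extremum argument, handling the possibility that a separation pair lies on the outer boundary — this is where Lemma~\ref{lem:externalSepPair} and the external-separation-pair structure enter. A secondary technical point is ensuring \emph{strict} convexity: collinear triples can sneak in, and I would eliminate them exactly as Angelini et al.\ do, by a final infinitesimal perturbation of the boundary $x$-coordinates of $C$ (which does not disturb the $y$-coordinates), re-solving the system, and arguing by continuity that strict inequalities are preserved.
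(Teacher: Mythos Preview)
Your framework is the same as the paper's: weighted (asymmetric) Tutte plus Alon--Yuster nested dissection for the runtime. You also correctly flag that the matrix is not symmetric, so Lipton--Rose--Tarjan does not apply and Alon--Yuster is needed. That part is fine.

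The genuine gap is the choice of weights, which you hand-wave as ``reciprocal of the vertical gap'' or ``cotangent-type'' or ``lift from Hong--Nagamochi''. None of these work: Hong--Nagamochi's original proof is a recursive geometric construction, not a Tutte system, so there is no recipe there to lift; and an arbitrary positive weighting that makes $x_v$ a convex combination of the neighbours' $x$-coordinates does \emph{not} let you invoke the Tutte/Floater theorem, which requires the full 2D barycentric condition $(x_v,y_v)=\sum_u \lambda_{vu}(x_u,y_u)$. Your proposed convexity argument (``no interior vertex is a local $x$-extremum on any face'') also does not follow from $x$-barycentricity alone: the convex combination is over \emph{all} neighbours of $v$, so $v$ can still be $x$-extreme within a single incident face.

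The paper's key idea, which you are missing, is to choose the weights so that the \emph{prescribed $y$-coordinates already satisfy the barycentric equation}: for each interior $u$, pick positive $w_{u,v}$ with $\sum_v w_{u,v}=1$ and $y_u=\sum_v w_{u,v}y_v$. This is an underdetermined $2$-equation system in $d_u\ge 3$ unknowns, and a positive solution exists precisely because the $y$-monotone-face hypothesis guarantees $u$ has neighbours both above and below (take a convex combination of the average-above and average-below $y$-values). With these weights, solve the full 2D Tutte system~(\ref{eq:Tutte}); by uniqueness the $y$-solution coincides with the prescribed $y$'s, and the generalized Tutte theorem of Floater / Gortler--Gotsman--Thurston directly yields a \emph{strictly} convex drawing with outer face $C$. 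In particular, no Angelini-et-al.\ perturbation is needed, and the worry about internal versus full $3$-connectivity is already handled inside the cited Tutte-type theorem.
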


\def\w{\ensuremath{\text{w}}_c\xspace}
\section{Computing Convexity-Increasing Morphs}
\label{sec:main-proof}

In this section we prove  Theorem~\ref{thm:morph-to-convex}.
In fact, we show multiple variants of Theorem~\ref{thm:morph-to-convex}, starting with a highly specialized version and proceeding to more and more general ones, which use the more specialized cases as building blocks.

\subsection{A Simple Case: Morphing $y$-Monotone Drawings}  
\label{sec:t1-monotone}

To give some intuition about our general proof strategy, we first consider an easy case where the 
outer face $C$
of $\Gamma$ is strictly convex and all faces are $y$-monotone.
Then we can immediately apply Lemma~\ref{lem:H&N} with the outer face fixed to obtain a new straight-line strictly convex drawing~$\Gamma'$ with all vertices at the same $y$-coordinates.
By Observation~\ref{lem:uni-morph2}, every line parallel to the $x$-axis crosses the same ordered sequence of edges and vertices in $\Gamma$ and in $\Gamma'$.  Then by Lemma~\ref{lem:uni-morph}  the morph from $\Gamma$ to $\Gamma'$ is planar.  Also it is a horizontal morph.  Thus we have a morph from $\Gamma$ to a strictly convex drawing $\Gamma'$ by way of a single horizontal morph.  Furthermore, the morph is convexity-increasing by Lemma~\ref{lem:convexity-inc} since every internal angle is convex in $\Gamma$ and $\Gamma'$.

\subsection{Morphing Drawings with a Convex Outer Face}  
\label{sec:convex-outer}

\newcommand{\outerface}[0]{{\ensuremath f_o}}
\newcommand{\outerfaceprime}[0]{{\ensuremath f^\prime_o}}

We next consider the case of a planar straight-line drawing $\Gamma$ in which the outer face is convex (but not necessarily strictly convex) and the inner faces are 
not necessarily $y$-monotone.
Additionally, assume that $\Gamma$ has no horizontal edge; we will later show how to ensure this.

\smallskip\noindent{\bf{Overview.}}
As in Section~\ref{sec:t1-monotone}, we begin by performing a horizontal morph.
Observe that such a morph preserves the local extrema and does not change their
convexity status.  
Thus the only reflex angles that can be made convex via a horizontal morph are the \emph{h-reflex} angles,
where 
an angle of inner face $f$ is called  \emph{h-reflex} if it is reflex and occurs at a vertex 
that has one neighbor in $f$ above and the other below---equivalently, the angle is reflex and is not a local extremum of~$f$.
We will show that a single horizontal morph suffices to convexify all the internal angles that are not local extrema.
The plan 
is to then conceptually ``turn the paper'' by $90^\circ$  and perform a vertical morph to make any \changed{\emph{v-reflex}} angle convex, where an angle of inner face $f$ is called a \emph{v-reflex} angle if it is reflex and occurs at a vertex that has one neighbor in $f$ to the left and the other to the right.
By continuing to alternate between the horizontal and the vertical direction, we eventually end up with a convex drawing and, thus, prove  Theorem~\ref{thm:morph-to-convex} for the case of a convex outer face.

\smallskip\noindent{\bf{A horizontal morphing step.}}
To find the desired horizontal morph we will apply Lemma~\ref{lem:H&N}. Therefore, we 
must first augment~$\Gamma$ to a drawing with $y$-monotone faces by inserting $y$-monotone edges which are not necessarily straight-line.
For an example see Fig.~\ref{fig:exLocExReflex}.  
This is a standard operation in upward planar (or ``monotone'') drawing~\cite[Lemma 4.1]{DiBattista-st-planar}~\cite[Lemma 3.1]{pach2004monotone}, but we need the stronger property that the new edges are only incident 
to local extrema; otherwise we would relinquish control of convexity at that vertex.   We will use:

\begin{lemma}\label{lem:decomp}
Let $\Gamma$ be straight-line planar drawing  of an internally 3-connected graph. Then $\Gamma$ can be augmented by adding edges to obtain a drawing $\Gamma'$ such that each additional edge is a $y$-monotone curve joining two local extrema of some face in $\Gamma$, every inner face is $y$-monotone and the augmented graph is internally $3$-connected.
Furthermore, the additional edges can be found in $O(n \log n)$ time.
\end{lemma}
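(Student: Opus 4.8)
The plan is to process the faces of $\Gamma$ one at a time, and within each face add curves connecting reflex local extrema to other local extrema so that the face becomes $y$-monotone, while being careful that (a) the new curves are $y$-monotone, (b) each new curve is incident only to vertices that are local extrema \emph{of the face being processed}, and (c) internal $3$-connectivity is preserved. First I would recall the standard fact (e.g.\ \cite[Lemma 4.1]{DiBattista-st-planar}, \cite[Lemma 3.1]{pach2004monotone}) that a face with $k$ reflex local maxima and $k$ reflex local minima can be subdivided into $y$-monotone subfaces by adding $2k$ curves, one emanating ``upward'' from each reflex local minimum and one emanating ``downward'' from each reflex local maximum; the issue is where these curves land. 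The classical construction routes such a curve to the nearest point on the boundary directly above (resp.\ below), which may be the interior of an edge or a non-extremal vertex — precisely what we must avoid.

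The key idea to fix this is: whenever a curve launched from a reflex local minimum $w$ would hit the interior of an edge or a non-extremal vertex, continue routing it (staying $y$-monotone and inside the face) until it reaches a local extremum of the face. Concretely, I would argue that from $w$ one can always reach the (unique, if the face were already monotone — but in general, \emph{a}) local maximum of the appropriate monotone boundary portion by a $y$-monotone curve lying in the face: sweep a horizontal line upward from $w$; the connected region of the face immediately to one side of the launched curve is bounded above by boundary pieces, and following the ``upper envelope'' of that region gives a $y$-monotone path whose endpoint is a local maximum of the face. Equivalently, one can set this up cleanly by temporarily treating the partially-augmented face as an $st$-planar instance and observing that any sink other than the global top must be a reflex local maximum of $\Gamma$, so the augmentation never creates an edge incident to a new non-extremal vertex. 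I would carry this out inductively: each added curve strictly decreases the number of reflex local extrema (over all faces) by at least one, so the process terminates, and at the end every inner face has no reflex local extremum, hence is $y$-monotone.

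For internal $3$-connectivity, I would use characterization~\ref{I3} of Lemma~\ref{lem:int3conDefs}: adding edges only adds paths, so the three internally-disjoint paths from any internal vertex to $f_o$ in $G$ still exist in the augmented graph; one only needs that the \emph{outer} face is unchanged, which holds because all new curves lie strictly inside inner faces. (If an internal vertex of $\Gamma$ had degree $2$, internal $3$-connectivity of $\Gamma$ is already violated, so we may assume this does not occur; alternatively the lemma is applied after such vertices have been handled.) I expect the main obstacle to be the routing argument — proving rigorously that from every reflex local minimum there is a $y$-monotone curve inside the face ending at a local extremum of that same face, and that the resulting curves can be chosen pairwise non-crossing. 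I would handle non-crossing by adding the curves one per face in order of, say, increasing launch height, each time re-triangulating the region swept; since each curve splits one $y$-monotone-deficient face into two pieces with strictly fewer reflex extrema, an easy induction closes the argument. For the running time, the whole construction is an upward-planarity–style sweep: sorting vertices by $y$-coordinate and maintaining the sweep-line status in a balanced search tree yields $O(n\log n)$, matching the bound claimed.
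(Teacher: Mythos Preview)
Your approach is essentially the paper's: shoot a vertical ray from each reflex extremum, then continue along the face boundary to a genuine local extremum, and induct on the total number of reflex extrema; the $O(n\log n)$ bound via a sweep/trapezoidization is also how the paper proceeds. The internal $3$-connectivity argument (only inner edges are added, the outer face is unchanged) is likewise the same.

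There is, however, a concrete error in your routing. At a reflex local \emph{minimum} $w$ of an inner face $f$, both neighbors of $w$ on $\partial f$ lie above $w$ and the interior angle exceeds $\pi$; hence $f$ lies locally \emph{below} $w$, not above. A curve ``emanating upward'' from $w$ leaves $f$ immediately, and your subsequent description (``sweep a horizontal line upward from $w$ \ldots\ whose endpoint is a local maximum'') cannot be carried out inside $f$. The correct construction---and what the paper does---is to drop a vertical segment \emph{downward} from $w$ to the first boundary point $p_w$ of $f$, and then follow $\partial f$ monotonically \emph{downward} from $p_w$ to a local \emph{minimum} $v$. The resulting $y$-monotone curve joins two local minima of $f$ (both extrema, as required by the lemma), and splitting $f$ along it eliminates the reflex extremum at $w$ without creating new ones. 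Symmetrically, from a reflex local maximum one goes upward to a local maximum. With this directional fix your plan coincides with the paper's proof.
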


\begin{proof} 
Recall that a face is $y$-monotone if and only if it has no reflex local extrema.
Our proof is by induction on the number of reflex local extrema in all inner faces of the drawing.
If there are none, then all inner faces are $y$-monotone.
Otherwise, consider an inner 
face $f$ that has a local extremum $u$.  For an illustration consider  Fig.~\ref{fig:exLocExReflex}.
%
%----------
 \begin{figure}[bt]
  \centering
  \includegraphics{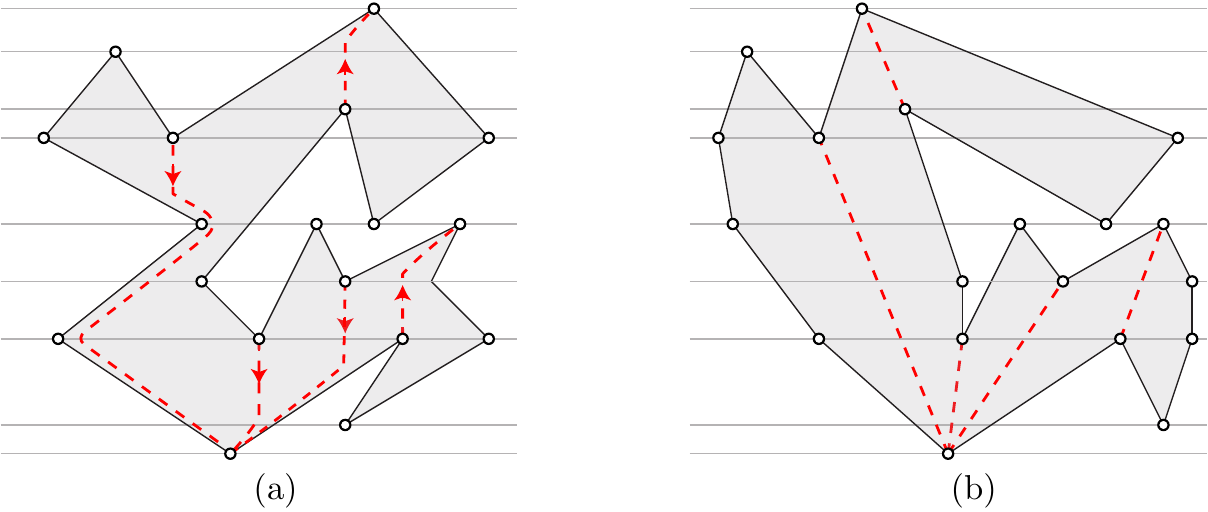}
  \caption{(a) A face that is not $y$-monotone.   The dashed edges inside the face are added by Lemma~\ref{lem:decomp}.
  (b) The face after application of Lemma~\ref{lem:morph-A}.}
  \label{fig:exLocExReflex}
 \end{figure}
 %----------
%
Assume without loss of generality that $u$ is a local minimum. 
We want to find a local extremum $v$ below $u$ such that we can insert a $y$-monotone curve from $v$ to $u$ within face $f$.
To construct the curve go vertically downwards from $u$ to the first point $p_u$ on the boundary of $f$, 
and then follow a $y$-monotone chain of $f$'s boundary downwards from $p_u$ to a local minimum $v$.
Adding the edge $(u,v)$ divides $f$ into two faces, and decreases the total number of  local extrema. 
\changed{We note that Pach and T\'oth~\cite{pach2004monotone} used a similar idea to triangulate with monotone curves, although their curves stopped at the first vertex on $f$'s boundary and we must continue to the first local minimum.} Since we will only insert inner edges, the graph remains internally 3-connected.

To complete the proof we briefly describe how the set of augmenting edges can be found in time $O(n \log n)$. 
We deal with the local reflex minima; the maxima can be dealt with in a second phase.
Find a trapezoidization of the drawing~\cite{CompGeom} in $O(n \log n)$ time.  This gives the point $p_u$ for each local reflex minimum $u$ in face $f$.  
We can preprocess the graph in time $O(n)$ to find, for each edge $e$ in face $f$, the local minimum $v$ that is reached by following a $y$-monotone chain downward from $e$ in $f$. 
This gives the set of augmenting edges.  We must still find the cyclic order of augmenting edges incident with each vertex $v$.  We separate into those that arrive at $v$ from the left and those that arrive from the right.  Within each of these sets, we sort by the $y$-coordinate of points $p_u$. 
 \end{proof}

This observation allows us to prove the following:

\begin{lemma}
\label{lem:morph-A}
 Let $\Gamma$ be a straight-line planar drawing of an internally 3-connected graph with a convex outer face and no horizontal edge.  There exists a horizontal planar morph to a straight-line drawing $\Gamma'$ with a strictly convex outer face and every internal angle that is not a local extremum is strictly convex in $\Gamma'$.  Furthermore, the morph $\langle\Gamma,\Gamma'\rangle$ is convexity-increasing, and can be found in time $O(n^{\omega/2})$, where $\omega$ is the matrix multiplication exponent.
\end{lemma}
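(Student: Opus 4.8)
\textbf{Proof plan for Lemma~\ref{lem:morph-A}.}
The plan is to combine the augmentation of Lemma~\ref{lem:decomp} with the $y$-coordinate-preserving redrawing of Lemma~\ref{lem:H&N}, and then argue that restricting back to the original edge set gives a valid horizontal convexity-increasing morph. First I would apply Lemma~\ref{lem:decomp} to $\Gamma$ to obtain an augmented drawing $\Gamma^+$ in which every inner face is $y$-monotone, the augmented graph $G^+$ is internally $3$-connected, and every new edge is a (possibly curved) $y$-monotone arc joining two local extrema of some face of $\Gamma$. Crucially, no new edge touches a vertex at an internal angle that is not a local extremum, so the angular structure we want to convexify at such vertices is unchanged by the augmentation. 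Since the outer face of $\Gamma$ is convex (but perhaps not strictly so), I would first choose a strictly convex polygon $C$ for the outer face with the same vertices, same cyclic order, and \emph{same $y$-coordinates} as in $\Gamma$; this is possible because the outer boundary is already convex, so one can perturb the $x$-coordinates of any run of collinear outer vertices outward by an arbitrarily small amount without changing the $y$-coordinates or the combinatorial embedding, and because $\Gamma$ has no horizontal edge there is no pair of consecutive outer vertices sharing a $y$-coordinate that would obstruct strict convexity in the vertical sense. (The perturbation of the outer face is itself realized by an auxiliary horizontal morph, which is trivially planar and convexity-increasing since it only relaxes angles on the outer face and leaves inner angles untouched; by Lemma~\ref{lem:union} it can be absorbed into the single final horizontal morph.)

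Next I would invoke Lemma~\ref{lem:H&N} on $\Gamma^+$ (which has $y$-monotone faces and is internally $3$-connected) with this outer face $C$, obtaining a \emph{strictly convex} straight-line drawing $\Gamma'^+$ of $G^+$ with the same $y$-coordinates of all vertices as in $\Gamma^+$, computed in time $O(n^{\omega/2})$. Deleting the augmenting edges from $\Gamma'^+$ yields a straight-line drawing $\Gamma'$ of $G$ on the same vertex positions. Its outer face is $C$, which is strictly convex. I then need two facts about $\Gamma'$: (i) the horizontal morph $\langle\Gamma,\Gamma'\rangle$ is planar, and (ii) every internal angle of $\Gamma'$ that is not a local extremum (of its face) is strictly convex. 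For (i): the augmenting edges are $y$-monotone in both $\Gamma^+$ and $\Gamma'^+$ (straight-line segments in $\Gamma'^+$ between endpoints of distinct $y$-coordinates are automatically $y$-monotone, and the original edges of $\Gamma$ had no horizontal segments so they too are $y$-monotone in $\Gamma'$), all faces of $\Gamma^+$ are $y$-monotone, the $y$-coordinates agree, and the embeddings agree; hence by Observation~\ref{lem:uni-morph2} every horizontal line meets the same ordered sequence of edges and vertices of $G^+$ in $\Gamma^+$ and $\Gamma'^+$, and therefore also of the subgraph $G$ in $\Gamma$ and $\Gamma'$. By Lemma~\ref{lem:uni-morph}, $\langle\Gamma,\Gamma'\rangle$ is a planar horizontal morph.

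For (ii), the key point is that an internal angle of a face $f$ of $G$ at a vertex $w$ that is not a local extremum of $f$ is \emph{subdivided} in $G^+$ by the augmenting edges into one or more sub-angles, each of which is an internal angle of a face of $G^+$ at a vertex that is again not a local extremum of that face; since $\Gamma'^+$ is strictly convex, each of these sub-angles is strictly less than $\pi$, and they sum to the original angle. Here is where I expect the main obstacle: I must argue that this sum is still less than $\pi$, i.e., that the augmenting edges cannot cause the total turning at $w$ to reach or exceed $\pi$ after redrawing. This follows because, by the construction in Lemma~\ref{lem:decomp}, every augmenting edge incident to $w$ leaves $w$ \emph{into} the face $f$ in a direction strictly between the two original edges of $f$ at $w$, and in $\Gamma'^+$ the augmenting arc from $w$ goes (straight) to a vertex that is a local extremum strictly below or strictly above $w$ — so consecutive sub-angles at $w$ inside $f$ share the property that their bounding rays are ordered monotonically around $w$ and all lie within the original wedge; consequently their sum equals the original angle $\angle_f(w)$ in $\Gamma'$, and since that is a genuine angle of $\Gamma'$ with the two neighbors of $w$ in $f$ lying one above and one below $w$ (a property preserved because $y$-coordinates and the not-a-local-extremum status are preserved), strict convexity of each piece forces $\angle_f(w) < \pi$. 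The same argument handles a local-extremum vertex of $f$ that becomes a non-extremum after augmentation only in the trivial sense that such a vertex is untouched by new edges, so no case is lost. Finally, convexity-increasing: every internal angle of $\Gamma$ that is already strictly convex is, in particular, either a convex local extremum (its convexity status is unchanged by any horizontal morph, by Lemma~\ref{lem:convexity-inc}) or a non-extremum angle, which we have just shown is strictly convex in $\Gamma'$; hence every convex internal angle of $\Gamma$ is convex in $\Gamma'$, and Lemma~\ref{lem:convexity-inc} gives that $\langle\Gamma,\Gamma'\rangle$ is convexity-increasing. The running time is dominated by the $O(n^{\omega/2})$ call to Lemma~\ref{lem:H&N}, with the $O(n\log n)$ augmentation of Lemma~\ref{lem:decomp} and the $O(n)$ deletion step absorbed into it. \qed
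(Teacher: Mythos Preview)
Your proof is correct and follows the same route as the paper: augment via Lemma~\ref{lem:decomp}, redraw with a strictly convex outer face via Lemma~\ref{lem:H&N}, delete the extra edges, and invoke Observation~\ref{lem:uni-morph2} together with Lemmas~\ref{lem:uni-morph} and~\ref{lem:convexity-inc} for planarity and the convexity-increasing property. The ``main obstacle'' you anticipate in step~(ii) is vacuous: as you yourself note earlier, no augmenting edge is incident to a vertex $w$ that is not a local extremum of its face $f$, so the angle at $w$ in $f$ is never subdivided and is directly a single strictly convex angle of~$\Gamma'^+$---this is precisely how the paper disposes of the point in one line.
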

\begin{proof} Use Lemma~\ref{lem:decomp} to augment $\Gamma$ with a set of edges $A$ such that $\Gamma \cup A$ is a planar drawing in which all faces are $y$-monotone, and any edge of $A$ goes between two local extrema in some inner face.  This takes $O(n \log n)$ time.
Let $C$ be the outer face of $\Gamma$.  Create a new drawing  $C'$ of $C$ such that $C'$  is strictly convex and preserves the $y$-coordinates of vertices.  

By Lemma~\ref{lem:H&N} with the outer face $C'$ we obtain (in time $O(n^{\omega/2})$) a new straight-line strictly convex drawing $\Gamma' \cup A'$ with all vertices at the same $y$-coordinates as in $\Gamma$.  (Here $A'$ is a set of straight-line edges corresponding to $A$.) 
By Observation~\ref{lem:uni-morph2} every line parallel to the $x$-axis crosses the same ordered sequence of edges and vertices in $\Gamma \cup A$ and in $\Gamma' \cup A'$. 
Then by Lemma~\ref{lem:uni-morph} the morph from $\Gamma$ to $\Gamma'$ is a planar horizontal morph. 

Any internal angle of $\Gamma$ that is not a local extremum has no edge of $A$ incident to it, and thus becomes strictly convex in $\Gamma'$.   Any internal angle of $\Gamma$ that is a local extremum maintains its convex/reflex status in $\Gamma'$.  Thus by Lemma~\ref{lem:convexity-inc} the morph is convexity-increasing.
The run-time to find the morph (i.e.,~to find $\Gamma'$) is $O(n^{\omega/2})$.
\end{proof}

\smallskip\noindent{\bf{Making progress.}}
Lemma~\ref{lem:morph-A} generalizes to directions other than the horizontal direction:   for any direction~$d$ that is not parallel to an edge of~$\Gamma$, there exists a convexity-increasing unidirectional (with respect to~$d$) morph that convexifies all internal angles that are not extreme in the direction orthogonal to~$d$.
Thus, if we do not insist on a sequence of \emph{horizontal and vertical} morphs, we immediately obtain a proof of Theorem~\ref{thm:morph-to-convex} for the case of a convex outer face, since, for each reflex angle, we can choose a direction $d$ that convexifies it.

In order to prove the stronger result 
that uses only two orthogonal directions, we need to alternate between the horizontal and the vertical direction.
Note that after one application of Lemma~\ref{lem:morph-A}, we do not necessarily obtain a drawing which contains a v-reflex vertex, see Fig.~\ref{fig:exHorizontalMorph}(a,b).
In order to ensure that our algorithm makes progress after every step, we  prove a strengthened version of Lemma~\ref{lem:morph-A}, which ensures that there is at least one h-reflex or v-reflex vertex
available after each step.

\begin{lemma}
\label{lem:morph-B}
 Let $\Gamma$ be a straight-line planar drawing of an internally 3-connected graph with a convex outer face and no horizontal edge.  There exists a horizontal planar morph to a straight-line drawing $\Gamma''$ such that 
 \begin{compactitem}
   \item[(i)] the outer face of $\Gamma''$ is strictly convex,
   \item[(ii)] every internal angle that is not a local extremum is convex in $\Gamma''$,
   \item[(iii)] $\Gamma''$ has no vertical edge, and
   \item[(iv)] if $\Gamma''$ is not convex, then it has at least one v-reflex angle.
\end{compactitem} 
Furthermore, the morph is convexity-increasing, and can be found in time $O(n^{\omega/2})$.
\end{lemma}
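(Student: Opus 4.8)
Our starting point is Lemma~\ref{lem:morph-A}, which already yields a horizontal, convexity-increasing, planar morph $\langle\Gamma,\Gamma'\rangle$ whose target $\Gamma'$ has a strictly convex outer face and all non-extremum internal angles (even strictly) convex; in particular every reflex angle of $\Gamma'$ sits at a vertex that is a $y$-local extremum of its incident face. Thus $\Gamma'$ satisfies (i) and (ii), and only (iii) and (iv) are missing. The plan is to post-process $\Gamma'$ by one further horizontal morph that repairs (iv) when it fails, and then by an arbitrarily small generic horizontal perturbation that establishes (iii). Since by Lemma~\ref{lem:union} a concatenation of horizontal planar morphs is again realized by a single horizontal planar morph between its endpoints, and since the convexity-increasing property is preserved under concatenation (a convex angle that stays convex along each piece stays convex overall), it suffices to control each step separately; the running-time bound $O(n^{\omega/2})$ then comes from the one invocation of Lemma~\ref{lem:H&N} used inside.

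The key to the perturbation step is that $\Gamma'$ has \emph{no} angle equal to exactly $\pi$: angles of the strictly convex outer face are strictly reflex; internal non-extremum angles are strictly convex by Lemma~\ref{lem:morph-A}; and an internal local extremum cannot have angle $\pi$, as that would force its two face-neighbours to be collinear on opposite sides of the vertex, contradicting that both lie strictly above it (or strictly below it). Hence every angle of $\Gamma'$ is strictly convex or strictly reflex, so all convexity statuses are stable under a small enough perturbation of the vertices. Perturbing only the $x$-coordinates, by a small generic amount, therefore keeps (i) and (ii), keeps the morph planar (the motion stays inside the open set of planar drawings with the fixed embedding; cf.\ Lemma~\ref{lem:uni-morph}) and convexity-increasing, and can be chosen to make all $x$-coordinates pairwise distinct, destroying every vertical edge. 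If $\Gamma'$ is convex it is in fact strictly convex, (iv) is vacuous, and this perturbation already produces $\Gamma''$.

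It remains to treat a non-convex $\Gamma'$. Its reflex angles all sit at $y$-local extrema; if one such vertex is not an $x$-local extremum of its face, that angle is already v-reflex, and choosing the perturbation above so that this vertex keeps one face-neighbour on each horizontal side gives $\Gamma''$ with (iv). The hard subcase is when \emph{every} reflex angle occurs at a vertex that is simultaneously an $x$- and a $y$-local extremum of its face, so that no Lemma~\ref{lem:morph-A}-type morph in either axis direction makes progress. Here I would select such a reflex angle extremally, say at the highest vertex $v$ that is a reflex $y$-local maximum of some face $f$ (symmetrically, the lowest reflex $y$-local minimum if no reflex local maximum exists); then the two $f$-neighbours $a,b$ of $v$ lie below $v$ and, after the above perturbation, strictly on the same horizontal side of $v$ with $x_a\neq x_b$. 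I would then perform one more horizontal planar morph sliding $v$ until its $x$-coordinate lies strictly between $x_a$ and $x_b$, after which $v$ is no longer an $x$-local extremum of $f$, so its angle, if still reflex, is v-reflex (and is otherwise convex). To realize this move without breaking planarity or the convexity already gained, I would re-invoke Lemma~\ref{lem:H&N} (after augmenting $y$-monotonically via Lemma~\ref{lem:decomp}) on $\Gamma'$ together with an auxiliary edge or vertex that pins $v$ into the target $x$-range while preserving all $y$-coordinates and a strictly convex outer face; the resulting strictly convex drawing of the augmented graph keeps all non-extremum angles of $G$ strictly convex. The main obstacle — and where the bulk of the work lies — is to show that this re-drawing turns no previously convex angle of $\Gamma'$ reflex, i.e.\ that the vertical order of the vertices and the convexity status at the remaining $y$-local extrema can be preserved while imposing the single horizontal constraint at $v$; the extremal choice of $v$ as the topmost reflex local maximum is precisely what should make this controllable. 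Once that is verified, a final application of the perturbation step of the second paragraph yields $\Gamma''$ satisfying (i)--(iv).
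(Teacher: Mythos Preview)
Your setup via Lemma~\ref{lem:morph-A} is right, and your observation that $\Gamma'$ has no angle equal to $\pi$ (so a small horizontal perturbation handles~(iii) without disturbing any convexity status) is clean and correct. But for~(iv) you have taken a long and ultimately incomplete detour. You correctly isolate the hard subcase---every remaining reflex angle sits at a simultaneous $x$- and $y$-local extremum---but the fix you sketch (re-augmenting, re-invoking Lemma~\ref{lem:H&N}, and ``pinning'' $v$ with an auxiliary edge or vertex) is not a proof: Lemma~\ref{lem:H&N} gives no control over the $x$-coordinate of an internal vertex, the pinning gadget is never specified, and you yourself concede that the crucial step---that no previously convex angle flips to reflex---is still to be verified. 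As written, this is a genuine gap.

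The paper's remedy is far simpler and avoids all of this: apply a single horizontal \emph{shear} $(x,y)\mapsto(x+cy,y)$ to $\Gamma'$. A shear is affine, hence preserves planarity and the convex/reflex status of \emph{every} angle automatically; and it is a horizontal linear morph (as $c$ varies), so by Lemma~\ref{lem:union} it combines with $\langle\Gamma,\Gamma'\rangle$ into one horizontal planar morph, convexity-increasing by Lemma~\ref{lem:convexity-inc}. For any reflex local extremum $u$ with face-neighbours $a,b$ (both above $u$, say), the edges $ua$ and $ub$ have distinct slopes---otherwise $a,u,b$ would be collinear and the angle would be $\pi$---so choosing $c$ strictly between $-(x_a-x_u)/(y_a-y_u)$ and $-(x_b-x_u)/(y_b-y_u)$ puts the sheared $u$ strictly between the sheared $a$ and $b$ in $x$, making the angle v-reflex. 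Avoiding the finitely many values of $c$ that create a vertical edge handles~(iii) at the same time. This replaces your entire ``hard subcase'' with a one-line computation and needs no second call to Lemma~\ref{lem:H&N}.
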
 

\begin{proof}
We first apply Lemma~\ref{lem:morph-A} to obtain a morph from $\Gamma$ to a drawing $\Gamma'$ that satisfies (i) and (ii). 
If $\Gamma'$ satisfies all the requirements, we are done.  
Otherwise we will achieve the properties (iii) and (iv) by shearing the drawing $\Gamma'$. 
Eliminating vertical edges via a horizontal shear is easy, so we concentrate on the requirement (iv) about v-reflex angles.
Suppose $\Gamma'$ is not convex and has no v-reflex angle.
Consider any reflex angle of $\Gamma'$, say at vertex $u$ in the inner face~$f$.
By property (ii), $u$ must be a local extremum in $\Gamma'$; otherwise it would be convex.  
We will apply a horizontal shear transformation to create a drawing $\Gamma''$ in which the angle at $u$ becomes v-reflex, i.e.,~in which the $x$-coordinate of $u$ is between the $x$-coordinates of its two neighbors in~$f$.
Furthermore, the shear should eliminate all vertical edges, e.g.,~see Fig.~\ref{fig:exHorizontalMorph}.
The shear can be found in linear time.

 \begin{figure}[htb]
  \centering
  \includegraphics{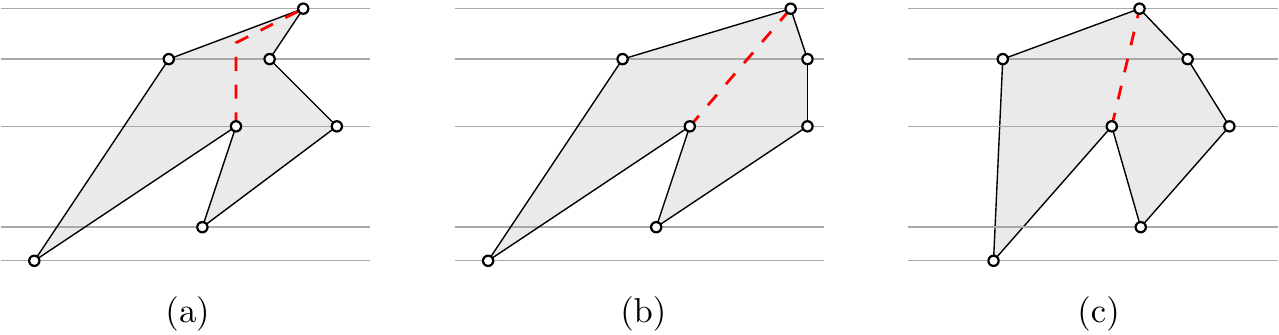}
  \caption{(a) A face that is not $y$-monotone. 
  (b) The face after application of Lemma~\ref{lem:morph-A}.  There is a vertical edge and the single reflex vertex is not v-reflex. (c) After applying a horizontal shear transformation, the reflex vertex is v-reflex and there are no vertical edges.
 }
  \label{fig:exHorizontalMorph}
 \end{figure}

Since shearing is an affine transformation, $\Gamma''$ has the same convex/reflex angles as $\Gamma'$.  Thus $\Gamma''$ satisfies all the properties.
The linear morph $\langle \Gamma' , \Gamma'' \rangle$ is a planar horizontal  morph that preserves the convex/reflex status of every angle.  
By Lemma~\ref{lem:union} the morph $\langle \Gamma, \Gamma'' \rangle$ is a horizontal planar morph.  By Lemma~\ref{lem:convexity-inc} it is convexity-increasing. 
The morph can be found in time $O(n^{\omega/2})$.
\end{proof}

We are now ready to prove Theorem~\ref{thm:morph-to-convex} for the case of a convex outer face.

\begin{proof}[\proofof{Theorem~\ref{thm:morph-to-convex} for the case of a convex outer face}]
If the given drawing $\Gamma$ has a horizontal edge and/or if~$\Gamma$ has internal reflex angles such that none of them is h-reflex, then we use one vertical shear as in the proof of Lemma~\ref{lem:morph-B} to remedy this.
Then, in the special case that there is no internal reflex angle, but there are angles of degree exactly~$180^\circ$, we may apply Lemma~\ref{lem:morph-B} once to obtain the desired strictly convex drawing.
Otherwise, there are internal reflex angles; and we apply Lemma~\ref{lem:morph-B} alternately in the horizontal and vertical directions until the drawing is strictly convex.
In each step there is at least one h-reflex or v-reflex angle that becomes convex.
Thus, the number of morphing steps is at most $\mathrm{max}\lbrace 2,r+1\rbrace \le n$, where $r$ is the number of inner reflex angles in $\Gamma$. The resulting total run-time is $O(n^{1 + \omega/2})$. 
 \end{proof}

\subsection{Morphing Drawings of $3$-Connected Graphs}
 \label{sec:non-convex-outer}

In this section we prove the case of Theorem~\ref{thm:morph-to-convex} where the outer face of~$\Gamma$ is not convex. However, we will assume that the given graph~$G$ is $3$-connected (instead of only \emph{internally} $3$-connected).

\noindent{\bf Overview.} 
On a high level, our approach works as follows:
We 
first augment 
the outer face of~$\Gamma$ with edges from its convex hull to obtain a drawing of 
an augmented graph with a convex outer face.
We then apply the results from Section~\ref{sec:convex-outer} to morph to a strictly convex drawing
and then remove the extra edges on the outer face one-by-one.
After each removal of an edge, we morph to a strictly convex drawing of the reduced graph using at most three horizontal or vertical morphs.

\noindent{\bf Augmenting the outer face.} 
Compute the convex hull of $\Gamma$.  Any segment of the convex hull that does not  correspond to an edge of $G$ becomes a new edge that we add to $G$.  Let $A$ denote the new edges and $G \cup A$ denote the augmented graph with straight-line planar drawing $\Gamma \cup \Gamma_A$.
Note that adding edges maintains 3-connectivity.
Each edge~$e\in A$ is part of the boundary of an inner face~$f_e$ of~$\Gamma \cup \Gamma_A$. We call~$f_e$ the \emph{pocket} of $e$.
We apply the result of Section~\ref{sec:convex-outer} to obtain a strictly convex drawing  of~$G \cup A$, see Fig.~\ref{fig:outerFace3con}(a). 
Note that the techniques used in that section give us a drawing with no horizontal or vertical edges.

We remark that this step is the reason why we
limit ourselves to $3$-connected graphs in this subsection:
Adding the convex hull edges in a drawing of an \emph{internally} $3$-connected graph may create non-external separations pairs, see Fig.~\ref{fig:int3conPe}(a).
This would prevent us from using the algorithm from Section~\ref{sec:convex-outer} as this algorithm uses Lemma~\ref{lem:H&N} which requires the input graph to be internally $3$-connected.

\noindent{\bf Popping a pocket outward.} 
In this final step, we describe a way to remove an edge of $A$ and ``pop'' out the vertices of its pocket so that they become part of the convex hull.
Lemma~\ref{lem:H&N}
serves once again as an important subroutine.
We make ample use of the fact that we may freely specify the desired subdrawing of the outer face after each application of Lemma~\ref{lem:H&N},
as long as we maintain either the $x$-coordinates or the $y$-coordinates of all vertices.

We remark that the following lemma applies to \emph{internally} $3$-connected graphs, not only $3$-connected graphs.
We plan to use it in the following section, in which we prove Theorem~\ref{thm:morph-to-convex} in its general form.
In fact, the final algorithm will use the entire procedure described in this section as a subroutine (we will augment the internally $3$-connected graph such that adding the convex hull edges does not create non-external separation pairs).

\begin{lemma}
Let $\Gamma$ be a strictly convex drawing of an internally $3$-connected graph $G$ without vertical edges and let $e$ be an edge on the outer face.  If $G - e$ is internally 3-connected, then $\Gamma - e$ can be morphed to a strictly convex drawing of $G-e$ without vertical edges via at most \changed{three convexity-increasing morphs, each of which is horizontal or vertical.} Furthermore, the morphs can be found in  time $O(n^{\omega/2})$.
\label{lem:outerFace3con}
\end{lemma}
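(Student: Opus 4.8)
Write $e=(a,b)$; since $\Gamma$ has no vertical edge, assume $a$ lies strictly above $b$. Let $f_e$ be the inner face of $\Gamma$ bounded by $e$ (the \emph{pocket}), and let $P=(a=u_0,u_1,\dots,u_m=b)$ be the rest of $\partial f_e$. Deleting $e$ merges $f_e$ into the outer face, whose new boundary is $Q=(\partial f_o\setminus\{e\})\cup P$; since $f_e$ is strictly convex, $P$ bulges into the old outer polygon, so $Q$ has a ``dent'' along $P$, while every \emph{inner} face of $\Gamma-e$ is still strictly convex. It is tempting to straighten the dent with a single ($y$-preserving) application of Lemma~\ref{lem:H&N} to $\Gamma-e$, choosing a fresh strictly convex outer boundary at the same $y$-coordinates; the trouble is that Lemma~\ref{lem:H&N} needs \emph{all} faces of the input to be $y$-monotone, whereas $Q$ need be neither $y$-monotone nor $x$-monotone (the arc $P$ can turn by almost $2\pi$), so we cannot even start a planar horizontal or vertical morph out of $\Gamma-e$ via Observation~\ref{lem:uni-morph2}. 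The plan is therefore to first reshape $\Gamma$ --- while $e$ is still present, so the outer face is convex and hence monotone in every direction --- so that deleting $e$ yields a $y$-monotone outer face; this reshaping costs two morphs, and the final straightening one more.

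Concretely, I would take three steps. \emph{Step 1 (horizontal morph).} Apply Lemma~\ref{lem:H&N} to $\Gamma$ with a new strictly convex outer polygon $C_1$ that keeps all $y$-coordinates but makes $x(a),x(b)$ the two largest $x$-coordinates in the drawing, i.e.\ makes $e$ the rightmost edge of the outer polygon; this is realizable because the only constraints on $C_1$ are the inherited $y$-coordinates and the fixed cyclic order of $f_o$, which are consistent with such a shape. Call the result $\Gamma_1$. \emph{Step 2 (vertical morph).} Apply the $x$-preserving version of Lemma~\ref{lem:H&N} (obtained by swapping the two axes) to $\Gamma_1$ with a new strictly convex outer polygon $C_2$ that keeps all $x$-coordinates but makes $y(a)$ the largest and $y(b)$ the smallest $y$-coordinate in the drawing; such a $C_2$ exists precisely because, $e$ being the rightmost edge of $\Gamma_1$, the $x$-coordinates along $\partial f_o\setminus\{e\}$ form a single valley and can thus serve as the left chain of a convex ``D''-shaped polygon whose right side is the near-vertical edge $e$ from the topmost to the bottommost vertex. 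Call the result $\Gamma_2$, and now delete $e$: in $\Gamma_2-e$ the vertex $a$ is the globally highest and $b$ the globally lowest point, so both $\partial f_o\setminus\{e\}$ and $P$ are $y$-monotone chains from $a$ to $b$, whence the outer face of $\Gamma_2-e$ is $y$-monotone (and every inner face is, being inherited from strictly convex $\Gamma_2$). \emph{Step 3 (horizontal morph).} Apply Lemma~\ref{lem:H&N} to $\Gamma_2-e$ --- legitimate since $G-e$ is internally $3$-connected --- with any strictly convex outer polygon at the same $y$-coordinates and free of vertical edges, obtaining the desired strictly convex drawing $\Gamma^\star$ of $G-e$.

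For correctness, note that $\Gamma,\Gamma_1,\Gamma_2$ are all strictly convex and every internal angle of $\Gamma_2-e$ is convex, so by Lemma~\ref{lem:convexity-inc} each of the three linear morphs is convexity-increasing once we know it is a planar unidirectional morph; and Steps 1 and 3 are planar horizontal and Step 2 a planar vertical morph by Observation~\ref{lem:uni-morph2} together with Lemma~\ref{lem:uni-morph}, since in each case source and target share the embedding, the preserved coordinate, and monotonicity of all faces in the relevant direction. I would also use Lemma~\ref{lem:union} to absorb into the neighbouring steps any tiny shears needed to delete the occasional axis-parallel edge produced by Lemma~\ref{lem:H&N} (and, if $\Gamma$ itself has horizontal edges, a preliminary one), keeping the total at three horizontal/vertical morphs; each step is one invocation of Lemma~\ref{lem:H&N}, so the running time is $O(n^{\omega/2})$. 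I expect the main obstacle to be the geometric bookkeeping of Steps 1 and 2: proving that the outer polygons $C_1$ and $C_2$, with the prescribed extreme vertices and the inherited coordinates, can always be realized as strictly convex polygons respecting the fixed cyclic order --- and then checking that after these two reshapings the deletion of $e$ genuinely produces a $y$-monotone outer face.
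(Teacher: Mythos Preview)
Your approach mirrors the paper's proof with the roles of horizontal and vertical exchanged: the paper uses a vertical--horizontal--vertical sequence, whereas you propose horizontal--vertical--horizontal. That swap introduces a real gap. The hypothesis only forbids \emph{vertical} edges, which makes every face of $\Gamma$ $x$-monotone and thus licenses an initial application of Lemma~\ref{lem:H&N} in the \emph{vertical} direction. Your Step~1, a horizontal morph, instead needs every face to be $y$-monotone, i.e.\ no horizontal edges---and this is not assumed (indeed your opening ``assume $a$ lies strictly above $b$'' already fails when $e$ itself is horizontal). The remedy you suggest, a preliminary shear, must alter $y$-coordinates and is therefore a \emph{vertical} morph; Lemma~\ref{lem:union} merges only consecutive morphs of the \emph{same} direction, so this shear cannot be absorbed into the horizontal Step~1, and the count rises to four. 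The paper avoids the issue simply by making its first morph vertical, matching the hypothesis.

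A secondary difference concerns the goal of the first morph. The paper aims only to make a \emph{single} endpoint $u$ topmost or bottommost---automatically realizable for any outer-face vertex once the $x$-coordinates are fixed (pick ``topmost'' or ``bottommost'' according to which $x$-monotone chain of the outer boundary contains $u$). Your Step~1 tries to make \emph{both} endpoints $x$-extreme at once; this is feasible, but it requires exactly the delicate construction of $C_1$ that you flag as the main obstacle. The paper defers the second endpoint to the middle morph: once $u$ is topmost, the two $y$-monotone chains of the outer face both start at $u$, so one can redraw the outer face (preserving $y$-coordinates) with $u$ leftmost and $v$ rightmost; then $u$ and $v$ are the $x$-extremes of the convex pocket $f_e$, the path $p_{uv}$ is $x$-monotone, and a final vertical morph finishes after deleting $e$. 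The paper also gains a small bonus from a case split: if $p_{uv}$ is already $x$-monotone after the first morph, the middle morph is skipped and only two morphs are used.
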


\begin{proof}  Our morph will be specified by a sequence of drawings, 
$\Gamma$, $\Gamma_1$, $\Gamma_2$, $\Gamma_3$, where the first and the last morph are vertical and the second morph, which we can sometimes skip, is horizontal.

Let $e=(u,v)$.  We first perform a vertical morph from $\Gamma$ to a strictly convex drawing $\Gamma_1$  in which  vertex $u$ is top-most or bottom-most and which does not contain vertical or horizontal edges.
Since $\Gamma$ is strictly convex and has no vertical edges, it is also $x$-monotone.
Therefore, the desired drawing~$\Gamma_1$ can be found by choosing some strictly convex drawing of the outer face in which $u$ is extreme while
maintaining the $x$-coordinates of all vertices,
and then using one application of Lemma~\ref{lem:H&N} (for vertical morphs).
Additional, we may need to apply a vertical shearing transformation in order to get rid of horizontal edges.
This is easily done while still guaranteeing that~$u$ is extreme in the~$y$-direction.
Analogous to Section~\ref{sec:t1-monotone}, by combining Observation~\ref{lem:uni-morph2}, Lemma~\ref{lem:uni-morph} and Lemma~\ref{lem:convexity-inc} we conclude that the horizontal morph $\langle \Gamma,\Gamma_1\rangle$ is planar and convexity-increasing.

For the remainder of the proof, assume without loss of generality that $u$ is the top-most vertex and that $v$ lies to the right of $u$ in $\Gamma_1$.  The other cases are symmetric.
Let~$p_{uv}$ 
denote the path from  $u$ to $v$ in $f_e -e$. We distinguish two cases 
depending on the shape of~$f_e$ in $\Gamma_1$.

\noindent\textbf{Case 1:} 
The path~$p_{u,v}$ is
$x$-monotone in~$\Gamma_1$, see Fig.~\ref{fig:outerFace3con}(b).
In this case we can skip the second step of the morph sequence. 
We will remove~$e$ and compute a vertical morph from $\Gamma_1-e$ to a strictly convex drawing $\Gamma_3$ of~$G-e$ without vertical edges.  Once again, this can be done by combining Lemma~\ref{lem:H&N} (for vertical morphs),  Observation~\ref{lem:uni-morph2}, Lemma~\ref{lem:uni-morph} and Lemma~\ref{lem:convexity-inc} as long as we can specify 
a strictly convex drawing of the outer face of $\Gamma_1 - e$ in which the~$x$-coordinates match those of~$\Gamma_1$.
It suffices to compute a suitable new reflex chain for~$p_{uv}$, see Fig.~\ref{fig:outerFace3con}(b).
\begin{figure}[tb]
  \centering
\ifeurocg
  \includegraphics{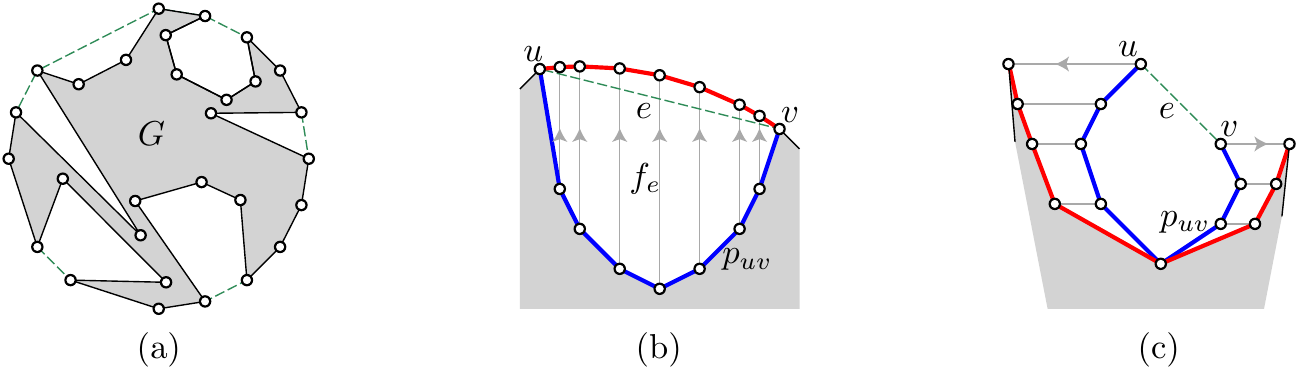}
\else
  \includegraphics{fig/outerFace3con}
\fi
  \caption{(a) Schematic of the convex drawing of~$G \cup A$. Graph $G$ is depicted in gray, edges of $A$ are dashed, and the pockets are white.  
  (b)--(c) Cases 1 and 2 for Lemma~\ref{lem:outerFace3con}, where faint gray arrows indicate explicit placements on the convex hull.
}
  \label{fig:outerFace3con}
 \end{figure}

\noindent\textbf{Case 2:} 
The path~$p_{uv}$ is not $x$-monotone.
In this case we will compute a horizontal morph from $\Gamma_1$ to  
a strictly convex drawing $\Gamma_2$ in which $u$ and~$v$ are the unique left-most and the unique right-most vertices and which does not contain a vertical edge.  
Once again, this can be done by combining Lemma~\ref{lem:H&N}, a horizontal shearing transformation, Observation~\ref{lem:uni-morph2}, Lemma~\ref{lem:uni-morph} and Lemma~\ref{lem:convexity-inc} as long as we can specify 
a strictly convex drawing of the outer face of $\Gamma_1$ in which~$u$ and~$v$ are the left-most and right-most vertices and the $y$-coordinates match those of~$\Gamma_1$.  This is possible because~$u$ is top-most, see Fig.~\ref{fig:outerFace3con}(c).

In the drawing $\Gamma_2$ 
the pocket $f_e$ is convex with extreme points $u$ and $v$ so the path~$p_{uv}$ is $x$-monotone and, hence, by Case~1, there is a vertical morph from $\Gamma_2 -e$ to a strictly convex drawing $\Gamma_3$ of $G-e$.
\end{proof}

Observe that each application of Lemma~\ref{lem:outerFace3con} increases the number of 
vertices
of $G$ on the convex hull.
Thus, the proof of Theorem~\ref{thm:morph-to-convex} follows by induction on the number of convex hull vertices.
Let~$\rho$ denote the number of pockets of the strictly convex drawing of~$G\cup A$ obtained by applying the algorithm from Section~\ref{sec:convex-outer}.
We observe that~$\rho \le n/2$, since each pocket can be associated with two private vertices of $G$: the clockwise first of the two convex hull vertices defining the pocket and its clockwise successor, which is private to the pocket.

We use at most three horizontal and vertical morphs to pop out a pocket.
Hence, the number of morphs needed to deal with all the pockets can be bounded by~$2\rho +1$ by observing that each application of Lemma~\ref{lem:outerFace3con} involves a vertical-horizontal-vertical morph sequence, and \ifeurocg we can \else Lemma~\ref{lem:union} allows us to \fi{}compress two consecutive vertical morphs into one.
Additionally, obtaining the strictly convex drawing of~$G\cup A$ requires at most~$\mathrm{max}\lbrace 2,r+1\rbrace$ morphing steps, where~$r$ is the number of internal reflex angles in the drawing~$\Gamma\cup \Gamma_A$.
This number can be bounded by $\mathrm{max}\lbrace 2,r+1\rbrace\le n-\rho +1$ since each pocket can be associated with a private vertex of the convex hull, that cannot have an internal reflex angle.
Hence, the total number of morphs is at most $n-\rho +1 + 2\rho +1=n+\rho+2\le 1.5n+2$, where the last inequality uses the fact that~$\rho \le n/2$.
The run time of the algorithm is $O(n^{1 + \omega/2})$.

\subsection{Morphing  Drawings of Internally 3-Connected Graphs}
 \label{sec:int3con}
  
 So far,  if the given drawing does not have a convex outer face, we have restricted our attention to the class of 3-connected graphs.
 The only reason why the approach described in Section~\ref{sec:non-convex-outer} is not able to handle the case of \emph{internally} $3$-connected graphs is that the addition of the convex hull edges~$A$ may create non-external separation pairs, see Fig.~\ref{fig:int3conPe}(a). As a result, the augmented drawing~$\Gamma_G\cup \Gamma_A$ is no longer a valid input for Lemma~\ref{lem:H&N}.
 In this section we extend our algorithm such that it is able to convexify drawings of internally 3-connected graphs and, thus, we conclude the proof of  Theorem~\ref{thm:morph-to-convex} in its general form.
 
   \begin{figure}[bht]
  \centering
  \includegraphics[page=1]{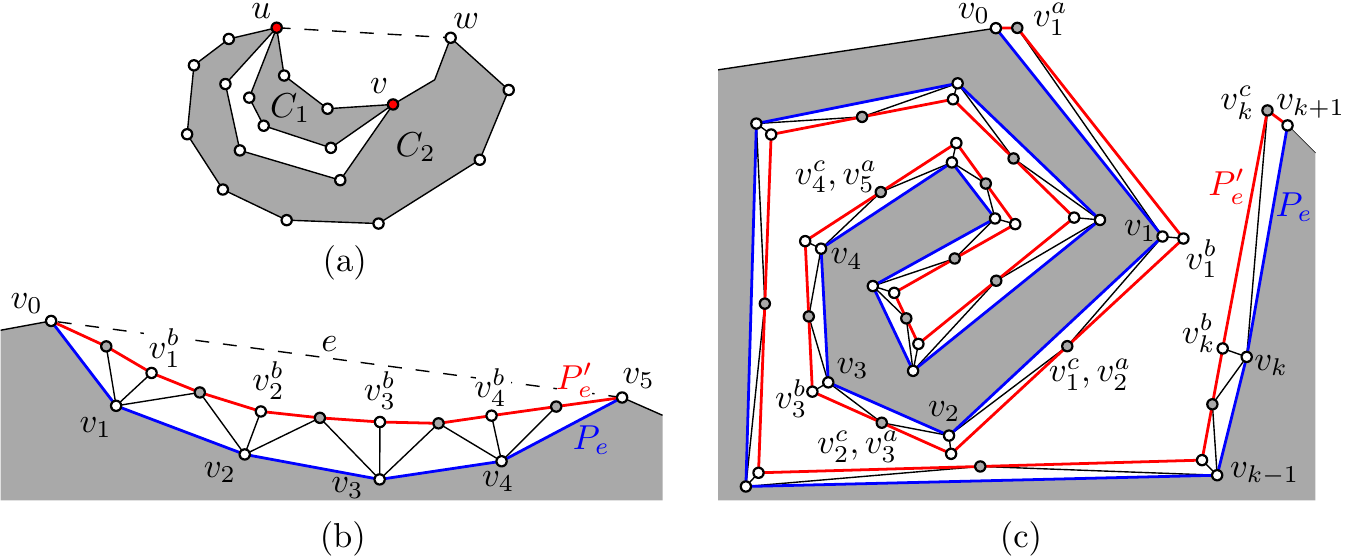}
  \caption{(a) An internally 3-connected graph with an external separation pair~$\lbrace u,v\rbrace $. Adding the convex hull edge $uw$ turns $\lbrace u,v\rbrace $ into a non-external separation pair both since $v$ becomes an internal vertex and since $C_1$ no longer has a vertex on the outer face.
  (b) Schematic drawing of the path $P_e'$ created for the pocket defined by the convex hull edge $e=\lbrace v_0,v_5\rbrace$.
  (c) Geometrically, we embed $P_e'$ very close to $P_e$. This is possible regardless of the shape of the pocket.
}
  \label{fig:int3conPe}
 \end{figure}
 
 \noindent{\bf Overview.} Let~$\Gamma$ be a straight-line planar drawing of an internally 3-connected graph~$G=(V,E)$.
 As a first step, we augment the outer face of~$\Gamma$ by adding new edges and vertices near each pocket.
 The goal of this step is to ensure that we can add convex hull edges without introducing non-external separation pairs.
 We then apply the algorithm from Section~\ref{sec:non-convex-outer}, which results in a drawing in which all the new vertices appear on the strictly convex outer face.
 Finally, we remove the new vertices one-by-one; gradually turning the strictly convex drawing of the augmented graph into a strictly convex drawing of~$G$.

In slightly more detail, our method involves the following steps:
 
\noindent{\bf Step 1: Augmenting the outer face.}  
 We will augment the graph $G$ to $G'$, and then augment the drawing $\Gamma$ to $\Gamma'$, 
 which involves geometric arguments.
 The goal is to ensure that $G'$ is internally 3-connected and has the additional property that adding convex hull edges to $\Gamma'$ does not introduce non-external separation pairs, as, for example, in Fig.~6(a).  We will show that bad cases only arise when a pocket has a vertex that is part of a separating pair.  Thus, our method will be to add an extra ``bufffer'' layer of vertices to each pocket boundary, while ensuring that each buffer vertex is not part of a separating pair.  
   
\noindent{\bf Step 2: Convexifying the augmented drawing.}
 The goal of this step is to find a convexity-increasing morph from the drawing $\Gamma'$ of the augmented graph~$G'$ to a strictly convex drawing of~$G'$.
In principal, the idea for accomplishing this task is very simple: we just apply the algorithm described in Section~\ref{sec:non-convex-outer} to $\Gamma'$.
The challenging aspect is, that this algorithm uses Lemma~\ref{lem:H&N} as a subroutine and, thus, we need to ensure internal 3-connectivity of the input graph.

\noindent{\bf Step 3: Removing the additional vertices.}
 At this point we have a strictly convex drawing of $G'$.  We must now 
``reverse'' the augmentation process, removing vertices of $G'$ to get back to $G$. 
 After each vertex is removed we will morph to obtain a strictly convex drawing.
 As in Step 2, the graph must remain internally 3-connected at each step.
 We will therefore treat the augmentation process of Step 1 as an iterative process, adding vertices one-by-one and ensuring
 that the graph is internally 3-connected at each step.

\medskip
We now give further details on these three steps.

\noindent{\bf Step 1: Augmenting the outer face.} 
 We compute the convex hull of $\Gamma$.
 Let~$e$ be a convex hull edge with~$e\notin E$.
 The following steps are illustrated in Fig.~\ref{fig:int3conPe}(b).
 Let~$P_e=(v_0, v_1,\dots,v_{k+1})$ be the unique path on the outer face of~$\Gamma$ such that~$P_e+e$ is a cycle with~$G-P_e$ in its exterior.
 We introduce~$2k+1$ new vertices that form a path~$P_e'$ connecting~$v_0$ and~$v_{k+1}$:
 \begin{align*}
P_e'=(v_0,~v_1^a,v_1^b,v_1^c~=~v_2^a,v_2^b,v_2^c~=~v_3^a,\dots ,v_{k-1}^c~=~v_k^a,v_k^b,v_k^c,~v_{k+1})
\end{align*}
Note that every inner vertex~$v_i$ of~$P_e$ gets a ``private copy'' $v_i^b$ in~$P_e'$.
Two consecutive copies~$v_i^b$ and~$v_{i+1}^b$ are connected via another vertex which is equipped with two labels $v_i^c=v_{i+1}^a$, which will simplify the notation later on.
Additionally, we add the edges $\lbrace v_i,v_i^a\rbrace$, $\lbrace v_i,v_i^b\rbrace$ and $\lbrace v_i,v_i^c\rbrace$ for $i=1,\dots,k$.

Geometrically, the new path~$P_e'$ is embedded in a planar fashion very close to~$P_e$, see Fig.~\ref{fig:int3conPe}(c). This can be accomplished regardless of the shape of~$P_e$:
Assume that~$k\ge 2$ and let~$\varepsilon$ be the smallest distance between any
pair of disjoint edges on the cycle~$P_e+e$.
For $i=1,\dots,k$, we place~$v_i^b$ 
on the angular bisector of the outer angle at $v_i$ such that its distance to~$v_i$ is smaller than~$\varepsilon/2$.
If $i>1$, the vertex~$v_i^a$ is placed in the center of the line-segment $v_{i-1}^bv_i^b$.
Similarly, if~$i<k$, the vertex~$v_i^c$ is placed  in the center of the line-segment $v_{i}^bv_{i+1}^b$.

We make sure that during this procedure no vertex is placed in the exterior of the convex hull of~$\Gamma$; with the following exceptions:
the vertices~$v_1^a$ and~$v_k^c$ play a special role and are placed close to~$v_0$ and~$v_{k+1}$, respectively, such that they \emph{do} appear on the convex hull of the augmented drawing.

Note that for the special case of small pockets with~$k=1$, the value~$\varepsilon$ would not be well-defined, as there are no disjoint edges on~$P_e+e$.
However, in this case it is easy to directly compute an embedding of~$P_e'$ with the desired properties (i.e.~planarity; and only $v_1^a$ and $v_k^c$ appear on the convex hull).

We repeat the process for all convex hull edges~$e\notin E$ of~$\Gamma$ and use $G'=(V',E')$ and~$\Gamma'$ to denote the resulting plane graph and drawing respectively.
The total run-time for this step is dominated by the time to compute the values~$\varepsilon$, which can be done in~$O(n^2)$ total time.
We remark that this step could be implemented more efficiently, for example by computing a medial axis~\cite{DBLP:journals/dcg/ChinSW99}. However, since our total runtime is $\omega(n^2)$, we refrain from stating the details.

We will now prove that $G'$ is internally 3-connected.  Keeping in mind our plan for Step 3, we will add the vertices of $G'$ one-by-one, showing that each addition preserves the property of being internally 3-connected.
We begin with two basic operations that preserve internal 3-connectivity.  

\begin{lemma} Let $H$ be an internally 3-connected graph with an edge $(a,b)$ on the outer face.  Construct $H'$ by adding a new vertex $x$ in the outer face connected to $a$ and $b$.  Then $H'$ is internally 3-connected.
\label{claim:add-V}
\end{lemma}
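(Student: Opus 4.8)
The plan is to verify the characterization \ref{I3} from Lemma~\ref{lem:int3conDefs}: I want to show that in $H'$, every internal vertex $w$ has three paths to the outer face that are pairwise disjoint except at $w$. First I would observe what the outer face of $H'$ looks like. Adding $x$ in the outer face connected to $a$ and $b$ means the old outer face boundary cycle $C$ of $H$ gets split: one of the two $a$--$b$ arcs of $C$ becomes an internal face together with the two new edges $(x,a),(x,b)$, and the other $a$--$b$ arc of $C$ together with $(x,a),(x,b)$ becomes the new outer face $f_o'$ of $H'$. (Which arc becomes the outer face depends on where $x$ is placed; either choice works, but it's cleanest to just fix one.) So the internal vertices of $H'$ are exactly the internal vertices of $H$ together with all the vertices of $C$ that lie strictly on the "absorbed" arc.

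Next I would check the path condition vertex by vertex. For a vertex $w$ that was already internal in $H$, Lemma~\ref{lem:int3conDefs}\ref{I3} gives three paths from $w$ to $C$, pairwise disjoint except at $w$. Each such path ends at some vertex of $C$; if that endpoint is on the new outer face $f_o'$ we are done with that path, and if it is on the absorbed arc we can extend it by walking along the absorbed arc of $C$ to reach $a$ or $b$ (hence $f_o'$). The only subtlety is that two of the three paths might want to use the same stretch of the absorbed arc; but since the three paths are internally disjoint and hit $C$ at (at least) three distinct points, at most... actually I should be a little careful here — I would instead argue directly that after adding $x$, the outer face $f_o'$ is reachable, or simply note that $C \cup \{x\}$ with its incident edges still forms a connected subgraph of $H'$ all of whose vertices are on $f_o'$ or are $C$-vertices, and that any path from $w$ reaching $C$ can be completed within this subgraph without colliding with the other paths by routing it through $x$ or along disjoint arcs. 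For a vertex $w$ on the absorbed arc of $C$: in $H$ it had two disjoint paths along $C$ to $a$ and $b$ (namely the two sub-arcs of $C$), plus $w$ has degree $\ge 2$ in $H$ and $H$ is $2$-connected so there is at least one more path; more cleanly, $w$ lies on the cycle $C$ and $H$ is $2$-connected, so deleting any one vertex leaves $w$ connected to $f_o'$; combined with the fact that $C$ itself provides two internally disjoint $w$-to-$\{a,b\}$ paths and $x$ is adjacent to both $a$ and $b$, one gets the required three internally disjoint paths to $f_o'$ (two along the arcs of $C$ to $a$ and to $b$, and then a third one using $x$ is not independent of those — so the honest third path must dive into the interior of $H$; this is where $2$-connectivity of $H$, or rather the fact that $C$ is chordless-or-not doesn't matter, gives an ear/path avoiding $a,b$).

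I think the cleanest route, and the one I would actually write, is via characterization \ref{I2} instead of \ref{I3}: $H$ is internally $3$-connected iff adding an apex vertex $y$ adjacent to all of $f_o(H)$ yields a $3$-connected graph. So I would take $H'' = H' + y$ where $y$ is adjacent to all vertices of the new outer face $f_o'$, and show $H''$ is $3$-connected, which by Lemma~\ref{lem:int3conDefs} is equivalent to $H'$ being internally $3$-connected. Now $H''$ contains $x$ adjacent to $a,b,y$, so $x$ has degree $3$; and $H''$ restricted to $V(H) \cup \{y\}$ is essentially $H$ with an apex over its outer face (note $y$ is adjacent to $a$ and $b$ and to all the absorbed-arc vertices, which are precisely $f_o(H) \setminus (\text{far arc interior})$... hmm, this needs $y$ adjacent to \emph{all} of $f_o(H)$ which it is not). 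So that shortcut also needs care. Given these complications, the main obstacle I anticipate is bookkeeping about which old outer-face vertices become internal and ensuring the three disjoint paths can be completed to the \emph{new} outer face without two of them being forced to share a vertex — the underlying topological fact is clear (adding a degree-$2$ vertex or a vertex adjacent to two outer-face vertices "can't hurt" connectivity) but making it rigorous requires either a clean Menger-type argument or a careful case analysis on the location of the three path-endpoints relative to $a,b$. I would handle this by picking, among the three disjoint $w$-to-$C$ paths in $H$, the endpoints and noting that a path can always be rerouted along $C$ to $a$ or $b$ and that at most one path's endpoint lies strictly between $a$ and $b$ on the absorbed arc in a way that could conflict — and conflicts are resolved by sending that path to $a$ and the others to $b$ or into $f_o'$ directly.

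Finally I would note the run-time / constructiveness is immediate (no computation needed; it is a purely structural claim) and state that the symmetric statement — which will presumably be Lemma~\ref{claim:add-V}'s companion for subdividing an edge or adding a path — follows by iterating this single-vertex operation, since a path of new vertices attached as in Step~1 is built by repeatedly adding one vertex adjacent to two consecutive outer-face vertices.
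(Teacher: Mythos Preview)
You have missed the one observation that makes this lemma trivial: $(a,b)$ is an \emph{edge} of the outer face, not merely a pair of outer-face vertices. Hence one of the two $a$--$b$ arcs of the outer cycle $C$ is the single edge $(a,b)$ itself. When $x$ is inserted in the outer face adjacent to $a$ and $b$, the new outer face of $H'$ is obtained from $C$ by replacing the edge $(a,b)$ with the path $a,x,b$; the triangle $a,b,x$ becomes an inner face. In particular, every vertex of $C$ remains on the outer face of $H'$, so the set of internal vertices of $H'$ equals the set of internal vertices of $H$.

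Once you see this, the verification of \ref{I3} is immediate: for any internal vertex $w$ of $H'$ (equivalently, of $H$), the three pairwise internally disjoint $w$-to-$f_o(H)$ paths guaranteed by \ref{I3} for $H$ already terminate on $f_o(H')$, with no extension or rerouting needed. This is exactly the paper's one-line proof. Your Case~2 (``$w$ on the absorbed arc'') is vacuous, and the ``subtlety'' you worry about in Case~1 --- two extended paths colliding on the absorbed arc --- cannot occur, because there is nothing to extend. Your attempted shortcut via \ref{I2} runs into trouble for the same reason: you were trying to account for outer-face vertices becoming internal, and there are none.
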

\begin{proof}
By internal $3$-connectivity, Property~\ref{I3} of Lemma~\ref{lem:int3conDefs} holds for~$H$.
Thus, it is clear that~$H'$ also satisfies Property~\ref{I3} and, hence,~$H'$ is internally $3$-connected.
\end{proof}
\begin{lemma} Let $H$ be an internally 3-connected graph with two consecutive edges $(a,b)$ and $(b,c)$ on the outer face.  Construct $H'$ by adding a new vertex $x$ in the outer face connected to $a, b$ and $c$.  Then $H'$ is internally 3-connected.
\label{claim:add-claw}
\end{lemma}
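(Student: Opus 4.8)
The plan is to reduce Lemma~\ref{claim:add-claw} to Lemma~\ref{claim:add-V} by adding the new vertex incrementally. First I would introduce $x$ connected to only $a$ and $b$ (two consecutive outer-face edges), obtaining an intermediate graph $H^*$ which is internally $3$-connected by Lemma~\ref{claim:add-V}. Note that in $H^*$, the outer face now contains the path $(a, x, b, c)$, so the edges $(x,b)$ and $(b,c)$ are two consecutive edges on the outer face. It remains to add the edge $(x,c)$. Since both $x$ and $c$ lie on the outer face of $H^*$ and are at distance $2$ along it (via $b$), adding $(x,c)$ in the outer face simply splits off a triangular face $xbc$ while leaving $H^*$ drawn inside; the resulting graph is exactly $H'$.

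The remaining work is to verify that adding the edge $(x,c)$ preserves internal $3$-connectivity. I would use characterization~\ref{I3} of Lemma~\ref{lem:int3conDefs}: adding an edge never destroys the three-disjoint-paths-to-$f_o$ property for existing internal vertices, provided the outer face does not shrink. Here the outer face of $H'$ is $(a, x, c, \ldots)$ where we have removed $b$ from the outer boundary — so $b$ becomes internal. Hence the one thing I must check is that $b$ itself satisfies~\ref{I3} in $H'$: it needs three paths to the outer face of $H'$ that are internally disjoint. But $b$ is adjacent to $a$, $x$, and $c$, all of which lie on the outer face of $H'$, and these give three such paths immediately (each of length one, trivially internally disjoint). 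For every other internal vertex $w$ of $H'$, the three disjoint paths witnessing~\ref{I3} in $H^*$ either already avoid $b$, or — if one of them passes through $b$ — can be rerouted: since $b$ now has a direct edge to a vertex on the outer face of $H'$, any path reaching $b$ can be completed. Actually the cleanest argument is to note directly that $H'$ is obtained from $H^*$ by adding an edge between two outer-face vertices, which keeps $H^*$'s drawing valid and only moves $b$ strictly inside; then $H'$ restricted appropriately still has all the old disjoint-path structure available, and $b$ is handled by its three edges to $a,x,c$.

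The main obstacle — and it is a minor one — is being careful about the outer face changing when we add the chord $(x,c)$: vertex $b$ transitions from the outer face to the interior, so one cannot simply quote monotonicity of~\ref{I3} under edge additions. The fix is the explicit observation above that $b$'s three neighbors $a, x, c$ all remain on the outer face of $H'$, so $b$ trivially satisfies~\ref{I3}, and all previously-internal vertices retain their certifying path systems (possibly rerouted through $b$'s new outer neighbors if a path used to terminate by reaching $b$ on the old outer face). Combining this with Lemma~\ref{claim:add-V} for the first step completes the proof.

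\begin{proof}
We add the vertex $x$ in two stages.  First, add $x$ in the outer face of $H$ adjacent only to $a$ and $b$; call the result $H^*$.  By Lemma~\ref{claim:add-V}, $H^*$ is internally $3$-connected.  In $H^*$ the outer face traverses $\dots, a, x, b, c, \dots$, so $(x,b)$ and $(b,c)$ are consecutive outer edges and $x,c$ both lie on the outer face.

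Now form $H'$ from $H^*$ by adding the edge $(x,c)$ inside the outer face, creating the triangular face $xbc$; the rest of $H^*$ lies on the other side.  The outer face of $H'$ is that of $H^*$ with $b$ removed, i.e.~$b$ becomes an internal vertex of $H'$ and all other outer vertices of $H^*$ remain on the outer face of $H'$.  We verify Property~\ref{I3} of Lemma~\ref{lem:int3conDefs} for $H'$.  For the vertex $b$: its neighbors $a$, $x$, and $c$ all lie on the outer face of $H'$, so the three single-edge paths $b\,a$, $b\,x$, $b\,c$ are pairwise internally disjoint and reach the outer face.  For any other internal vertex $w$ of $H'$ (which was also internal in $H^*$): $H^*$ provides three paths from $w$ to the outer face of $H^*$ that are internally disjoint except at $w$.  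Each such path that already terminates on the outer face of $H'$ is kept.  If some path terminates at $b$, it can be extended by one of the edges $b a$, $b x$, $b c$ to a vertex on the outer face of $H'$; since at most one of the three paths passes through $b$, we may extend it using an edge of $b$ not used by the others, keeping the three paths internally disjoint.  Thus $w$ satisfies~\ref{I3} in $H'$.  Since $H'$ is $2$-connected and satisfies~\ref{I3}, Lemma~\ref{lem:int3conDefs} implies $H'$ is internally $3$-connected.
\end{proof}
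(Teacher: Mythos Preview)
Your proof is correct but takes a somewhat different route from the paper. The paper argues directly from $H$ via Property~\ref{I3}: any internal vertex $v\neq b$ of $H'$ is also internal in $H$ and hence has three disjoint paths to the outer face of $H$; since $x$ does not exist in $H$, the at most one such path that ends at $b$ can always be safely extended by the edge $\{b,x\}$ without touching the other two paths. You instead pass through an intermediate graph $H^*$ via Lemma~\ref{claim:add-V}, so your three paths live in $H^*$ and may already visit $x$; this forces the small pigeonhole step among $\{a,x,c\}$ to find a safe extension. That step is valid (the other two path-endpoints cannot block all three of $a,x,c$), but your phrase ``an edge of $b$ not used by the others'' is imprecise---the other paths contain no edge incident to $b$ at all, so the real constraint is that the chosen neighbour of $b$ must not coincide with the endpoint of either of the other two paths. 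The paper's direct argument is marginally cleaner precisely because the freshness of $x$ in $H$ makes the extension by $\{b,x\}$ automatic and avoids this case analysis.
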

\begin{proof}
By internal $3$-connectivity, Property~\ref{I3} of Lemma~\ref{lem:int3conDefs} holds for~$H$.
It suffices to show that Property~\ref{I3} also holds for~$H'$.
Let~$f_0$ and~$f_0'$ denote the outer faces of~$H$ and~$H'$, respectively.
Clearly,~$b$ has three paths to~$f_0'$ that are disjoint except for~$b$.
So let~$v\neq b$ be some internal vertex of~$H'$ and note that~$v$ is also internal in~$H$.
Hence, by Property~\ref{I3} of~$H$,~$v$ has three paths to~$f_0$ that are disjoint except for~$v$.
At most one of these paths does not end at~$f_0'$, namely if its endpoint on~$f_0$ is~$b$.
However, appending the edge~$\lbrace b,x\rbrace$ to this paths yields the desired three paths from~$v$ to~$f_0'$ that are disjoint except for~$v$.
Hence, Property~\ref{I3} holds for~$H'$.
\end{proof}

With these operations in hand, we can show that $G'$ is internally 3-connected, and---more strongly---that we can build $G'$ by adding one vertex at a time, preserving internal 3-connectivity.
Let~$V'=V\cup V^b\cup V^{ac}$ where $V^b$ is the set of all
vertices whose upper index is $b$, and where $V^{ac}$ is the set of the
remaining vertices (whose upper index is $a$ and/or $c$). 

\begin{lemma}\label{lem:removeLabelB}
Starting with $G$ and adding 
the vertices of $V^{ac}$ one-by-one in any order and then the vertices of $V^b$ one-by-one in any order 
produces an internally 3-connected graph at each step.
\end{lemma}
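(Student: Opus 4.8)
The plan is to verify internal 3-connectivity incrementally using Lemmas~\ref{claim:add-V} and~\ref{claim:add-claw}, which tell us exactly how single-vertex additions in the outer face preserve the property. Since $G$ is internally 3-connected to begin with, it suffices to show that, in the specified order, each newly added vertex is connected in the current outer face either to two consecutive outer-face vertices (the hypothesis of Lemma~\ref{claim:add-V}) or to three consecutive outer-face vertices (the hypothesis of Lemma~\ref{claim:add-claw}). So the core of the argument is a careful bookkeeping of which vertices lie on the outer face at each stage.

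First I would add the vertices of $V^{ac}$. Recall that for each relevant pocket with path $P_e=(v_0,v_1,\dots,v_{k+1})$, the vertices of $V^{ac}$ attached to this pocket are $v_1^a, v_1^c=v_2^a, v_2^c=v_3^a, \dots, v_k^c$. Crucially, each such vertex $v_i^c=v_{i+1}^a$ is, at the moment we add it, adjacent to the two consecutive outer-face vertices $v_i$ and $v_{i+1}$ (and the endpoint copies $v_1^a$, $v_k^c$ are adjacent to $v_0,v_1$ and $v_k,v_{k+1}$ respectively, which are consecutive on the outer face since $P_e$ is an outer-face path of $\Gamma$ and we have not yet changed the outer boundary along $P_e$). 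Hence each addition is an instance of Lemma~\ref{claim:add-V}, provided we check that the two attachment vertices really are still consecutive on the current outer face — this holds because adding a degree-2 vertex $x$ adjacent to consecutive outer vertices $a,b$ replaces the outer-boundary edge $ab$ by the path $a x b$, so $a$ and $x$, and $x$ and $b$, remain consecutive, and all $v_i$ stay on the outer face throughout this phase. Different pockets are vertex-disjoint except possibly at the $v_0,v_{k+1}$ endpoints, so the additions for distinct pockets do not interfere, and the order among them (and within each pocket) is irrelevant.

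Next I would add the vertices of $V^b$. After the previous phase, the outer face along pocket $e$ reads $(v_0, v_1^a, v_1^c{=}v_2^a, v_2^c{=}v_3^a, \dots, v_{k-1}^c{=}v_k^a, v_k^c, v_{k+1})$, and $v_i$ itself is now an internal vertex adjacent to the three consecutive outer-face vertices $v_i^a, v_i^c$ — wait, that is only two; the three consecutive outer vertices around where $v_i^b$ will sit are $v_i^a$, then (going outward) $v_i^b$ is to be inserted, but before insertion the outer boundary near $v_i$ is $v_i^a, \dots$. Let me restate: when we add $v_i^b$, it is adjacent to $v_i^a$, to $v_i$, and to $v_i^c$; of these, $v_i^a$ and $v_i^c$ are consecutive on the current outer face (with the edge $v_i^a v_i^c$ on the outer boundary since $v_i$ has already become internal in the $V^{ac}$ phase — here I use that the geometric embedding places $v_i^b$ outside the chord $v_i^a v_i^c$). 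So $v_i^b$ is added adjacent to the two consecutive outer vertices $v_i^a, v_i^c$ and additionally to the internal vertex $v_i$; this is exactly the situation of Lemma~\ref{claim:add-claw} (with $(a,b,c)=(v_i^a, ?, v_i^c)$) — more precisely it is an addition adjacent to two consecutive outer vertices plus one further vertex, which is handled by the proof technique of Lemma~\ref{claim:add-claw}: the new vertex trivially has three disjoint paths to the new outer face, every previously-internal vertex $\neq v_i$ keeps its three paths (at most one ended at $v_i^a$ or $v_i^c$, still on the outer face), and $v_i$ now gains the path $v_i v_i^b$ to the new outer face, giving it three disjoint paths. Again distinct pockets are independent, and the order within $V^b$ does not matter since each $v_i^b$ has degree 3 and never lies between two other not-yet-added $V^b$ vertices on the boundary.

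The main obstacle is purely the bookkeeping: one must be careful that "consecutive on the outer face" is maintained as claimed at the exact moment of each insertion, and in particular that in the $V^b$ phase the vertex $v_i$ has genuinely become internal (which requires all three of $v_i^a, v_i^b, v_i^c$ eventually, but only $v_i^a$ and $v_i^c$ at the moment $v_i^b$ is added, together with the original outer-face edges $v_{i-1}v_i$ and $v_iv_{i+1}$ having been subdivided away in the $V^{ac}$ phase). Once these adjacency/outer-face facts are pinned down, the result is an immediate induction: the base case is $G$ (internally 3-connected by hypothesis), and each inductive step invokes Lemma~\ref{claim:add-V} or Lemma~\ref{claim:add-claw}. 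I would phrase the write-up as that induction, stating the outer-face structure after each phase as an explicit invariant.
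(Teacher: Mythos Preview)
Your overall strategy---invoke Lemma~\ref{claim:add-V} for each $V^{ac}$ vertex and Lemma~\ref{claim:add-claw} for each $V^b$ vertex---is exactly the paper's approach, and your treatment of the $V^{ac}$ phase is correct.

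However, your bookkeeping in the $V^b$ phase is wrong. You claim that after the $V^{ac}$ phase the outer face along the pocket reads $(v_0, v_1^a, v_1^c, v_2^c,\dots, v_k^c, v_{k+1})$ and that each $v_i$ has become internal. This cannot be: at that stage the only edges incident to a $V^{ac}$ vertex $v_i^c=v_{i+1}^a$ are the fan edges to $v_i$ and $v_{i+1}$ (its path-neighbours $v_i^b$ and $v_{i+1}^b$ have not yet been added), so there is no edge $v_i^a v_i^c$ in the graph and these two vertices are not consecutive on the outer face. The outer face after the $V^{ac}$ phase actually reads
\[
(\dots, v_0,\; v_1^a,\; v_1,\; v_1^c{=}v_2^a,\; v_2,\; v_2^c{=}v_3^a,\; \dots,\; v_k^a,\; v_k,\; v_k^c,\; v_{k+1},\dots),
\]
so every $v_i$ is still on the outer face, sitting between $v_i^a$ and $v_i^c$.

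With this corrected picture the $V^b$ phase becomes immediate and no ad-hoc variant of Lemma~\ref{claim:add-claw} is needed: when you add $v_i^b$, its three neighbours $v_i^a, v_i, v_i^c$ are three \emph{consecutive} outer-face vertices with outer-face edges $(v_i^a,v_i)$ and $(v_i,v_i^c)$, which is precisely the hypothesis of Lemma~\ref{claim:add-claw} with $(a,b,c)=(v_i^a,v_i,v_i^c)$. Adding $v_i^b$ replaces the outer-boundary subpath $(v_i^a,v_i,v_i^c)$ by $(v_i^a,v_i^b,v_i^c)$, so neighbouring triples $(v_j^a,v_j,v_j^c)$ for $j\neq i$ remain consecutive on the outer face and the order within $V^b$ is indeed irrelevant.
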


\begin{proof}
The addition of each vertex of $V^{ac}$ maintains internal 3-connectivity by
Lemma~\ref{claim:add-V}.
The addition of each vertex of $V^b$ maintains internal 3-connectivity
by Lemma~\ref{claim:add-claw}.
\end{proof}

\noindent{\bf Step 2: Convexifying the augmented drawing.}
As mentioned above, the plan is to apply the algorithm described in Section~\ref{sec:non-convex-outer} to $\Gamma'$.
That algorithm 
 adds the convex hull edges of~$\Gamma'$ and then iteratively removes these edges while performing some morphing steps before and after each removal.
Each morphing step requires one application of Lemma~\ref{lem:H&N}.
Therefore, in order to prove the correctness of Step 2, we need to ensure that before and after each removal of a convex hull edge, the input graph is internally 3-connected.

We begin by observing that the new vertices of $G'$ are not part of separation pairs.  Then we show that adding a convex hull edge is safe when none of the vertices of its pocket are in separating pairs.

\begin{obs}
\label{obs:NEWnoNewVertexSep}
No vertex of $V'\setminus V$ is in a separating pair of $G'$.
\end{obs}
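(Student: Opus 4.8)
The plan is to verify the structural claim directly from the construction of $G'$ and the characterization of external separation pairs in Observation~\ref{lem:externalSepPair}. Fix a vertex $x \in V' \setminus V$; either $x = v_i^b$ for some inner vertex $v_i$ of a pocket path $P_e$, or $x$ is one of the shared $a/c$-vertices (including the special vertices $v_1^a, v_k^c$). Suppose for contradiction that $\{x,y\}$ is a separating pair of $G'$ for some vertex $y$. Since $G'$ is internally $3$-connected (Lemma~\ref{lem:removeLabelB}), such a pair must be external, so in particular $x$ lies on the outer face of $G'$ and the outer face decomposes into two internally disjoint paths through $x$ and $y$, each component of $G'-x-y$ hanging off one of these paths (conditions~\ref{E1}--\ref{E6} of Observation~\ref{lem:externalSepPair}).

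First I would pin down which vertices of $V'\setminus V$ even lie on the outer face of $G'$: by construction the only new vertices that reach the convex hull (hence the outer face) are the special endpoints $v_1^a$ and $v_k^c$ of each pocket; every other new vertex $v_i^a, v_i^b, v_i^c$ was placed strictly inside the convex hull of $\Gamma$, and more to the point, strictly inside the pocket region bounded by $P_e + e$, so it is an internal vertex of $G'$. An internal vertex cannot be a member of an external separating pair by~\ref{E1}, so the claim is immediate for all new vertices except the special ones. For a special vertex, say $x = v_1^a$, I would examine its neighborhood in $G'$. By construction $v_1^a$ has exactly the neighbors $v_0$, $v_1^b$ (since $v_1^a = v_2^\bullet$ only when $k\ge 2$, and the shared-label identifications give it its two path-neighbors plus the spoke to the corresponding original vertex $v_1$), so $v_1^a$ has degree $3$; I would check each incident triple of paths to the outer face and show that deleting $v_1^a$ together with any single vertex $y$ still leaves $G'$ connected, because the remaining structure of the pocket path $P_e'$, the original path $P_e$, and the spokes $\{v_i, v_i^a\},\{v_i,v_i^b\},\{v_i,v_i^c\}$ provide enough redundant connectivity. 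Concretely, the pocket gadget is built so that each $v_i^b$ has three internally disjoint paths to the rest of $G'$ (two along $P_e'$ and one spoke to $v_i$, then from $v_i$ into $G$), which is exactly the Property~\ref{I3} certificate; removing one of the special endpoint vertices does not destroy this for the remaining gadget vertices.

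The cleanest route is therefore to argue via Property~\ref{I3} (Lemma~\ref{lem:int3conDefs}): I would show that for every internal vertex $w$ of $G'-x$ there are still three internally disjoint paths from $w$ to the outer face of $G'-x$, and that $G'-x$ remains $2$-connected; then no single additional vertex $y$ can separate $G'-x$, so $\{x,y\}$ is not a separating pair of $G'$. For $x$ a special vertex this is a finite local check on the pocket gadget: the only vertices whose Property-\ref{I3} certificate used $x$ are nearby gadget vertices, and each has an alternative certificate routing through the original outer path $P_e$ and the spokes. The main obstacle I anticipate is bookkeeping the degenerate small-pocket case $k=1$ (where $P_e' = (v_0, v_1^a, v_1^b, v_1^c, v_{k+1})$ and the $\varepsilon$-construction degenerates), and making sure the argument is uniform across $k=1$, $k\ge 2$, and the interaction between adjacent pockets that share an original outer-face vertex $v_0$ or $v_{k+1}$; but since each of these is a bounded-size configuration, it reduces to checking Property~\ref{I3} certificates by hand. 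Alternatively, and perhaps more slickly, one can observe that every vertex of $V' \setminus V$ has all of its neighbors on a common face of $G'$ (the outer face, for internal gadget vertices this is their ``pocket face'' together with the outer side), and a vertex whose entire neighborhood lies on one face together with the fact that $G'$ minus that vertex is still internally $3$-connected (which follows by running Lemma~\ref{claim:add-V} and Lemma~\ref{claim:add-claw} in reverse) forces any separating pair containing it to fail condition~\ref{E2} or~\ref{E5}; I would pick whichever of these two phrasings makes the write-up shortest.
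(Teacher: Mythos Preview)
Your main argument rests on a misreading of the construction. You write that ``the only new vertices that reach the convex hull (hence the outer face) are the special endpoints $v_1^a$ and $v_k^c$,'' and that every other new vertex is ``an internal vertex of $G'$.'' But the convex hull of $\Gamma'$ is \emph{not} the outer face of $G'$: the convex hull edges are only added later, in Step~2. In $G'$ itself, the outer face runs along each path $P_e'$, so \emph{every} vertex of $V'\setminus V$ lies on the outer face of $G'$ (while the original pocket vertices $v_1,\dots,v_k$ become internal). Consequently your shortcut via condition~\ref{E1} does not apply to any new vertex, and the bulk of your case analysis collapses.

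The paper's proof is a one-liner exploiting the definition directly: $\{x,y\}$ is a separating pair if and only if $y$ is a cut vertex of $G'-x$, so it suffices to observe that removing any single new vertex from $G'$ leaves a $2$-connected graph. That is an easy local check on the gadget (each new vertex has its neighbours mutually connected through $P_e$, $P_e'$, and the spokes), and it sidesteps any discussion of external versus non-external pairs. Your final alternative---``running Lemma~\ref{claim:add-V} and Lemma~\ref{claim:add-claw} in reverse'' to see that $G'-x$ is still internally $3$-connected---is essentially this idea and would have been the right thing to lead with; the rest of the proposal is both longer and built on the incorrect premise above.
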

\begin{proof}
By construction, removing any one of these vertices can not introduce a cut vertex.
\end{proof}

We now give one more operation that preserves internal 3-connectivity.

\begin{lemma} Let $H$ be an internally 3-connected graph, with vertices $a$ and $b$ on the outer face.  Let $P$ be one of the paths from $a$ to $b$ along the outer face.  Assume that no vertex of $P$ is part of a separating pair in $H$.  
Let $H'$ be the result of adding the edge $(a,b)$ embedded such that $P$ becomes internal.  Then $H'$ is internally 3-connected.
\end{lemma}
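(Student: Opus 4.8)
The plan is to check the two defining conditions from Lemma~\ref{lem:convConn} for $H'$: that $H'$ is $2$-connected and that every separating pair of $H'$ is external. The first is immediate, since $H$ is $2$-connected and adding an edge to a $2$-connected graph (on at least three vertices) preserves $2$-connectivity, so the entire argument concerns separating pairs. Let $Q$ be the second outer path of $H$ from $a$ to $b$, so that the outer face of $H$ is the simple cycle $P \cup Q$ (simple because $H$ is $2$-connected) with $P \cap Q = \{a,b\}$, and the outer face of $H'$ is the simple cycle $Q + (a,b)$.

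First I would show that any separating pair $\{x,y\}$ of $H'$ is also a separating pair of $H$: indeed $H - x - y$ is a spanning subgraph of $H' - x - y$ obtained by deleting at most one edge (namely $(a,b)$, and only when $a,b \notin \{x,y\}$), so $H - x - y$ is disconnected whenever $H' - x - y$ is. Since $H$ is internally $3$-connected, $\{x,y\}$ is external in $H$, so $x$ and $y$ lie on $P \cup Q$ and every component of $H - x - y$ meets $P \cup Q$. Now the hypothesis that no vertex of $P$ lies in a separating pair of $H$ forbids $x$ or $y$ from lying on $P$; hence both lie in the interior of $Q$. In particular $x,y \notin \{a,b\}$ and $x,y$ lie on the outer face $Q + (a,b)$ of $H'$, which is the first requirement for $\{x,y\}$ to be external in $H'$.

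It remains to check that every component of $H' - x - y$ contains a vertex of the outer face $Q + (a,b)$. Since $x$ and $y$ avoid $P$, the whole path $P$ survives in $H - x - y$, carrying $a$ and $b$; and deleting the two interior vertices $x,y$ from $Q$ leaves at most three subpaths, of which the one ending at $a$ and the one ending at $b$ attach to $P$. Hence a single component $C_0$ of $H - x - y$ contains all of $P$, both $a$ and $b$, and these two outer $a$- and $b$-ends of $Q$. Any other component $C$ meets $P \cup Q$ by externality in $H$; it cannot meet $P$ (else $C = C_0$) nor the $a$- or $b$-end of $Q$ (same reason), so it meets the middle subpath of $Q$, all of whose vertices lie on the outer face of $H'$. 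Finally, passing from $H - x - y$ to $H' - x - y$ only adds the edge $(a,b)$, both of whose endpoints already lie in $C_0$, so the component structure is unchanged. Thus every component of $H' - x - y$ has a vertex on the outer face of $H'$, so $\{x,y\}$ is external in $H'$; as $\{x,y\}$ was arbitrary, $H'$ is internally $3$-connected.

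The step I expect to require the most care is the component analysis: the danger is a component of $H - x - y$ whose only contact with the outer cycle of $H$ lies on $P$, since such a component would persist untouched in $H' - x - y$ and would contain no vertex of $H'$'s outer face. This is ruled out precisely by the observation that $P$, together with the $a$- and $b$-ends of $Q$, all lie in the one component $C_0$, so no other component can touch $P$ at all. The remaining verifications (the spanning-subgraph argument, the reduction to externality in $H$, and the unchanged-components remark) are routine.
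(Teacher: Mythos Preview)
Your proof is correct and follows essentially the same approach as the paper: take a separating pair of $H'$, argue it is also separating in $H$, use internal $3$-connectivity of $H$ together with the hypothesis on $P$ to locate the pair in the interior of $Q$, and then verify externality in $H'$. The paper compresses the last verification into a reference to the six conditions of Observation~\ref{lem:externalSepPair}, whereas you spell out the component analysis directly; the content is the same.
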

\begin{proof}
Suppose that~$H'$ has a separation pair~$\lbrace u,v\rbrace$.
This pair is also separating in~$H$.
Further, since~$H$ is internally $3$-connected,~$\lbrace u,v\rbrace$ is an \emph{external} separation pair of~$H$.
Since both~$u$ and~$v$ do not belong to~$P$, Observation~\ref{lem:externalSepPair} implies that~$a$ and~$b$ belong to the same component of $(H-u-v)$.
Thus, is easy to verify that adding the edge between~$a$ and~$b$ maintains the six conditions of Observation~\ref{lem:externalSepPair} (in particular, Property~\ref{E2} holds as~$a$ and~$b$ remain on the outer face).
Therefore, $\lbrace u,v\rbrace$ is an external separation pair of~$H'$.
\end{proof} 

Recall that the construction of~$\Gamma'$ ensures that all convex hull edges~$e'$ which do not correspond to edges of~$G'$ have the form $e'=\lbrace v_1^a,v_k^c\rbrace$, where $v_1^a,v_k^c$ are the first and last internal vertex of one of the paths~$P_e'$.
By Observation~\ref{obs:NEWnoNewVertexSep} the interior vertices of~$P_e'$ form a path~$P$ on the outer face which does not contain any vertices that are part of a separation pair. Further, adding the edge~$e'$ encloses~$P$ in an internal face. Thus, we obtain:

\begin{corollary}
\label{cor:canAddCh}
Let~$A$ denote the set of convex hull edges of~$\Gamma'$ which do not correspond to edges of~$G'$.
Then, for any~$S\subseteq A$ the plane graph~$G'+S$ is internally 3-connected.
\end{corollary}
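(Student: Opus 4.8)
The plan is to prove Corollary~\ref{cor:canAddCh} by induction on $|S|$, adding the convex hull edges of $A$ one at a time and invoking the preceding lemma at each step. The base case $S=\emptyset$ is immediate, since $G'$ is internally $3$-connected by Lemma~\ref{lem:removeLabelB}. For the inductive step, suppose $G'+S'$ is internally $3$-connected for some $S'\subsetneq A$, and let $e'\in A\setminus S'$ be the next convex hull edge to add. By the construction of $\Gamma'$ in Step~1, every edge of $A$ has the form $e'=\lbrace v_1^a,v_k^c\rbrace$, where $v_1^a$ and $v_k^c$ are the first and last interior vertices of a path $P_e'$ lying on the outer face of $\Gamma'$; the interior vertices of $P_e'$ strictly between $v_1^a$ and $v_k^c$ form a path $P$ whose vertices all lie in $V'\setminus V$.

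\smallskip\noindent The key point is to verify the hypothesis of the preceding lemma for the graph $H:=G'+S'$ with the edge $(a,b):=e'$ and the path $P$: namely that no vertex of $P$ is part of a separating pair of $H$. This follows from Observation~\ref{obs:NEWnoNewVertexSep}, which states that no vertex of $V'\setminus V$ is in a separating pair of $G'$; since $H=G'+S'$ is a supergraph of $G'$ on the same vertex set, and adding edges can only destroy separating pairs (not create new ones involving a given vertex), no vertex of $V'\setminus V$---in particular no vertex of $P$---is in a separating pair of $H$ either. We also need that $P$ is a path from $a$ to $b$ along the outer face of $H$: adding the edges of $S'$ encloses some other pockets in internal faces, but because the pockets corresponding to distinct edges of $A$ are interior-disjoint and $e'\notin S'$, the path $P$ together with its endpoints $v_1^a=a$, $v_k^c=b$ still lies on the outer face of $H$, and embedding $e'$ so that $P$ becomes internal is consistent with the planar embedding of $\Gamma'$.

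\smallskip\noindent Applying the preceding lemma to $H$ with this choice of $a$, $b$ and $P$ then gives that $H'=H+e'=G'+(S'\cup\lbrace e'\rbrace)$ is internally $3$-connected, completing the inductive step. Since $S$ is an arbitrary subset of $A$, we may order its elements arbitrarily and apply the induction, concluding that $G'+S$ is internally $3$-connected for every $S\subseteq A$.

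\smallskip\noindent The main obstacle is the bookkeeping needed to confirm that, after adding an arbitrary subset $S'$ of convex hull edges, the path $P$ associated with the next edge $e'$ still lies entirely on the outer face and that the hypothesis ``no vertex of $P$ is in a separating pair'' survives the passage from $G'$ to $G'+S'$; both reduce to the observation that adding edges cannot introduce a separating pair at a vertex that had none, together with the interior-disjointness of the pockets guaranteed by the construction in Step~1. Everything else is a routine application of the lemma just proved.
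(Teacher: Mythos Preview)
Your argument is correct and follows essentially the same route as the paper: apply the preceding lemma once per edge of $S$, using Observation~\ref{obs:NEWnoNewVertexSep} (together with the fact that adding edges cannot create new separating pairs) to verify its hypothesis. The paper leaves the induction and the bookkeeping about the outer face implicit, whereas you spell them out; one small wording slip is that the path $P$ in the lemma should run \emph{from} $a=v_1^a$ \emph{to} $b=v_k^c$ (hence include them), not consist only of the vertices strictly between them, but since $v_1^a,v_k^c\in V'\setminus V$ as well this does not affect the argument.
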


%%%%%%%%%%%%%%%%%%%%%%%%%%%%%%%%%%%%%%%%
\noindent{\bf Step 3: Removing the additional vertices.}
At this point we have a strictly convex drawing of $G'$ and want to convert it to a strictly convex drawing of $G$.
We will remove the vertices of~$V^b$ iteratively; one-by-one.
After each such removal, we will perform up to two morphing steps in order to recover a strictly convex drawing of the reduced graph.
Recall that by Lemma~\ref{lem:removeLabelB}, all the intermediary graphs are internally $3$-connected and, thus, they are valid inputs for Lemma~\ref{lem:H&N}.

\begin{lemma}\label{lem:removeLabelBmorph}
Let $B\subseteq V^b$ and let~$\Gamma'_B$ be a strictly convex drawing of~$G'-B$ without vertical or horizontal edges.
Further, let~$v_i^b\in V^b\setminus B$.
Then, there is a convexity-increasing morph from~$\Gamma'_B-v_i^b$ to a strictly convex drawing~$\Gamma''$ of $G'-B-v_i^b$ without vertical or horizontal edges.
Moreover, there is such a morph which consists of a sequence of up to $2$ horizontal / vertical morphs.
The morphing sequence can be found in $O(n^{\omega /2})$ time.
\end{lemma}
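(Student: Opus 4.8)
The plan is to realize the removal of a single vertex $v_i^b$ by a short sequence of applications of Lemma~\ref{lem:H&N}, using the fact that $v_i^b$ has degree $3$ in $G'-B$ (its neighbors are $v_i$, $v_i^a$, and $v_i^c$, the latter two being on the outer face, and $v_i^b$ itself being on the outer face). Concretely, $v_i^b$ sits on the strictly convex outer boundary of $\Gamma'_B$ between its two outer neighbors $v_i^a$ and $v_i^c$, and it has one interior edge to $v_i$. Deleting $v_i^b$ from $\Gamma'_B$ creates a single new face $f$ obtained by merging the outer face with the two inner faces formerly incident to $v_i^b$; the boundary of $f$ now contains the path $v_i^a,v_i,v_i^c$ in place of $v_i^a,v_i^b,v_i^c$, and possibly $v_i$ was previously an interior vertex that is now on the outer face. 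All other faces are unchanged and still strictly convex. The goal is to morph $\Gamma'_B - v_i^b$ to a strictly convex drawing $\Gamma''$.

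First I would perform a vertical morph. Since $\Gamma'_B$ is strictly convex with no vertical edges it is $x$-monotone, so the same holds for $\Gamma'_B - v_i^b$; hence Observation~\ref{lem:uni-morph2}, Lemma~\ref{lem:uni-morph}, and Lemma~\ref{lem:convexity-inc} apply to any vertical morph that preserves $x$-coordinates and keeps all faces $y$-monotone in the target. I would choose a new strictly convex drawing of the outer face of $G'-B-v_i^b$ in which the $x$-coordinates of all vertices are preserved, and such that the newly-exposed vertex $v_i$ (if it became an outer vertex) and its neighbors are placed so that the new outer chain is strictly convex; this is possible because $x$-monotonicity of the old outer boundary is inherited. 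Applying Lemma~\ref{lem:H&N} (in the vertical direction) to $G'-B-v_i^b$ with this outer face yields a strictly convex straight-line drawing with the prescribed outer face and the same $x$-coordinates; a vertical shear removes any horizontal edges without disturbing strict convexity, and Lemma~\ref{lem:union} fuses the shear into the morph. This gives one convexity-increasing vertical morph to an intermediate strictly convex drawing; if no horizontal morph is then needed we stop, accounting for the "up to $2$" steps. In general we may need one further horizontal morph, again via Lemma~\ref{lem:H&N} (horizontal direction) composed with a horizontal shear, to eliminate vertical edges and arrive at $\Gamma''$ with neither vertical nor horizontal edges; that is the second of the two morphing steps.

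The main obstacle I anticipate is arguing that the outer face of $G'-B-v_i^b$ can actually be prescribed as a strictly convex polygon compatible with the preserved coordinates — i.e., that after deleting $v_i^b$ the graph remains internally $3$-connected (so Lemma~\ref{lem:H&N} applies at all), which is exactly what Lemma~\ref{lem:removeLabelB} guarantees, and that the newly-exposed vertex $v_i$ does not obstruct choosing a valid convex outer drawing with the required fixed coordinate. The delicate point is the order of operations: we must do the vertical morph \emph{before} worrying about horizontal edges, and we must verify that a single such vertical step already makes every internal angle convex — this follows because the target produced by Lemma~\ref{lem:H&N} is strictly convex, and Lemma~\ref{lem:convexity-inc} then certifies that the morph from $\Gamma'_B - v_i^b$ is convexity-increasing since no internal angle is convex in the source and reflex in the target. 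The running time is dominated by the $O(n^{\omega/2})$ cost of the (at most two) invocations of Lemma~\ref{lem:H&N}.
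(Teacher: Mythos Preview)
Your proposal has a genuine gap at the key geometric step. You claim that after deleting $v_i^b$ you can always apply Lemma~\ref{lem:H&N} in the vertical direction, because ``$x$-monotonicity of the old outer boundary is inherited.'' This is false. After deletion, the outer cycle now passes through $v_i^a,\,v_i,\,v_i^c$ instead of $v_i^a,\,v_i^b,\,v_i^c$. If $x(v_i)$ does \emph{not} lie strictly between $x(v_i^a)$ and $x(v_i^c)$, then $v_i$ is a reflex local $x$-extremum of the new outer face; the outer face of $\Gamma'_B-v_i^b$ is then not $x$-monotone, so Lemma~\ref{lem:H&N} (vertical version) does not apply, and---equivalently---no strictly convex drawing of the new outer cycle with the current $x$-coordinates exists. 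The same obstruction occurs for a horizontal morph when $y(v_i)$ is not between $y(v_i^a)$ and $y(v_i^c)$. Both obstructions can hold simultaneously: since $v_i$ lies on the opposite side of the chord $v_i^av_i^c$ from $v_i^b$, it can sit in the ``corner'' region where it is extreme relative to $v_i^a,v_i^c$ in both coordinates. In that case neither a single vertical nor a single horizontal application of Lemma~\ref{lem:H&N} is available.

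The paper's proof resolves exactly this with a four-case analysis on the position of $v_i$ relative to $v_i^a$ and $v_i^c$: if $v_i$ is already on the convex side (Case~1) nothing is needed; if $y(v_i)$ lies between (Case~2) one horizontal morph suffices; if $x(v_i)$ lies between (Case~3) one vertical morph suffices; and in the remaining Case~4 a preliminary shear (a horizontal or vertical linear morph) is used to bring $v_i$ into the range required by Case~2 or~3, after which one further H\&N morph finishes. The shear is the missing idea in your argument---your second step is phrased as cosmetic (removing vertical edges), whereas in the hard case it has to do real work \emph{before} Lemma~\ref{lem:H&N} can be invoked at all.
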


\begin{proof}
Without loss of generality, we may assume that~$v_i^c$ is located to the bottom-right of~$v_i^a$ and that~$v_i^b$ is located to the right of the oriented line $\overrightarrow{v_i^cv_i^a}$, see Fig.~\ref{fig:int3conRmVertices}(a).
We distinguish four cases regarding the position of the vertex~$v_i$, for an illustration see Fig.~\ref{fig:int3conRmVertices}(b).
 \begin{figure}[htb]
  \centering
  \includegraphics[page=2]{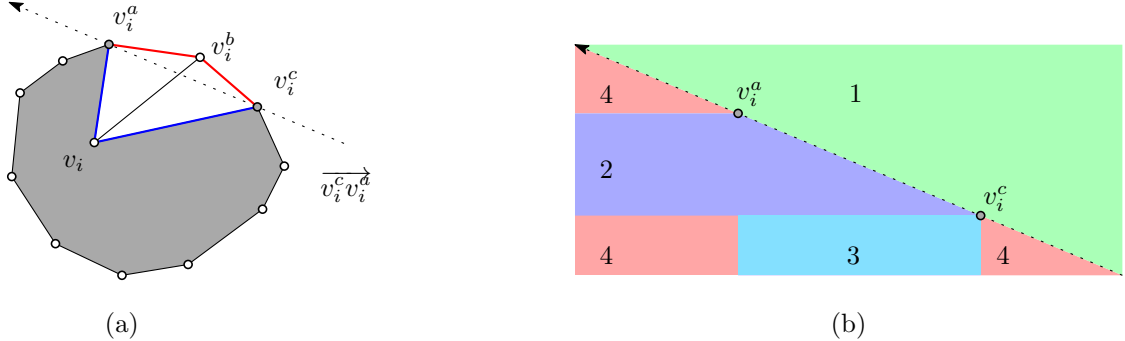}
  \caption{(a) In Step 3, we iteratively remove the vertices $v_i^b$ causing their counterparts $v_i$ to become part of the outer face. (b) The regions corresponding to the Cases 1--4.
}
  \label{fig:int3conRmVertices}
 \end{figure}

\noindent{\bf Case 1:} $v_i$ is located to the right of the oriented line $\overrightarrow{v_i^cv_i^a}$. We observe that the subdrawing~$\Gamma_B'-v_i^b$ is already a convex drawing of~$G'-B-v_i^b$, so there is nothing to show.

\noindent{\bf Case 2: } $\textrm{y}(v_i^a)>\textrm{y}(v_i)>\textrm{y}(v_i^c)$; and we are not in Case 1.
Let $C$ be a strictly convex drawing of the outer face of~$G'-B-v_i^b$, such that  every vertex in $C$ has the same~$y$-coordinate as in~$\Gamma'_B-v_i^b$ (we can easily find such a drawing~$C$ by adding~$v_i$ to the convex hull of~$\Gamma'_B-v_i^b$ in a strictly convex fashion).
Then, Lemma~\ref{lem:H&N} applied to~$\Gamma'_B-v_i^b$ and~$C$ (potentially followed by a horizontal shearing transformation in order to remove vertical edges), yields the desired drawing~$\Gamma''$.
Analogous to Section~\ref{sec:t1-monotone}, by combining Observation~\ref{lem:uni-morph2}, Lemma~\ref{lem:uni-morph} and Lemma~\ref{lem:convexity-inc} we conclude that the horizontal morph $\langle \Gamma'_B-v_i^b,\Gamma''\rangle$ is planar and convexity-increasing.

\noindent{\bf Case 3: } $\textrm{x}(v_i^a)<\textrm{x}(v_i)<\textrm{x}(v_i^c)$; and we are not in Case 1 or Case 2.
Let $C$ be a strictly convex drawing of the outer face of~$G'-B-v_i^b$, such that every vertex in $C$ has the same~$x$-coordinate as in~$\Gamma'_B-v_i^b$ (we can easily find such a drawing~$C$ by adding~$v_i$ to the convex hull of~$\Gamma'_B-v_i^b$ in a strictly convex fashion).
Then, Lemma~\ref{lem:H&N} (for vertical morphs) (potentially followed by a vertical shearing transformation in order to remove horizontal edges)  applied to~$\Gamma'_B-v_i^b$ and~$C$ yields the desired drawing~$\Gamma''$.
Analogous to Section~\ref{sec:t1-monotone}, by combining Observation~\ref{lem:uni-morph2}, Lemma~\ref{lem:uni-morph} and Lemma~\ref{lem:convexity-inc} we conclude that the vertical morph $\langle \Gamma'_B-v_i^b,\Gamma''\rangle$ is planar and convexity-increasing.

\noindent{\bf Case 4:} we are not in Case 1, Case 2 or Case 3.
We reduce to Case 2 or Case 3:
using a shearing transformation along the $x$-axis (or along the $y$-axis), we obtain a drawing~$\Gamma''_B$ of~$G'-B$ satisfying the preconditions of Case~3 (or Case~2).
Analogous to Section~\ref{sec:t1-monotone}, by combining Observation~\ref{lem:uni-morph2}, Lemma~\ref{lem:uni-morph} and Lemma~\ref{lem:convexity-inc} we conclude that the linear morph $\langle \Gamma'_B,\Gamma''_B\rangle$ is planar and convexity-increasing.
\end{proof}

Starting with the drawing~$\Gamma'$ of~$G'$ and iterating Lemma~\ref{lem:removeLabelBmorph}, we obtain a strictly convex drawing of~$G'-V^b$.
By construction, we can simply remove all the vertices of~$V^{ac}$ to obtain a strictly convex drawing of~$G$.

\begin{obs}
Let $\Gamma'_{ac}$ be a strictly convex drawing of~$G'-V^b$.
Then, $\Gamma'_{ac}-V^{ac}$ is a strictly convex drawing of~$G$.
\end{obs}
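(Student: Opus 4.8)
The plan is to check directly the two requirements in the definition of a strictly convex drawing for the straight-line drawing $\Gamma'_{ac}-V^{ac}$ of $G$: that every inner face is a strictly convex polygon, and that the outer boundary is a strictly convex polygon. I would first dispose of the routine bookkeeping: all vertices of $V^b\cup V^{ac}$ and all edges incident to them were inserted inside the original outer face $f_o$, so $G'-V^b-V^{ac}=G$ and $\Gamma'_{ac}-V^{ac}$ is a planar straight-line drawing of $G$ with its original combinatorial embedding, including the choice of outer face. In the same way, every inner face of $G$ is already a face of $G'-V^b$ (the augmentation modified only $f_o$), hence a strictly convex polygon in $\Gamma'_{ac}$; since such a face contains no vertex of $V^{ac}$, it is unchanged in $\Gamma'_{ac}-V^{ac}$. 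So the inner faces need no further work.

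The real content is the outer boundary. Fix a vertex $w$ on the outer boundary $C$ of $G$, let $u^-,u^+$ be its two $C$-neighbours, and let $a^-,a^+$ be its two neighbours on the outer boundary of $G'-V^b$. The structural fact I would extract from the construction of $G'$ (using also that deleting the degree-$3$ vertices $v_i^b$ when forming $G'-V^b$ re-exposes each counterpart $v_i$ on the outer boundary, with $v_i^a,v_i^c$ as its outer-boundary neighbours) is this: each of $a^-,a^+$ lies in $V^{ac}$, or equals $u^\pm$ when no pocket was attached at $w$; in the rotation at $w$ in $G'-V^b$ the neighbours occur (cyclically) in the order $a^-,u^-,\dots,u^+,a^+$, where the dots denote the interior neighbours of $w$ in $G$; and the outer face of $G'-V^b$ at $w$ is the wedge bounded by the edges $wa^-$ and $wa^+$ that avoids $u^-$ and $u^+$. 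Granting this, strict convexity of $\Gamma'_{ac}$ makes that wedge have angle larger than $\pi$, so its complement — the wedge from $a^+$ through the interior neighbours of $w$ to $a^-$ — has angle smaller than $\pi$; since this complement contains the sub-wedge from $u^+$ to $u^-$ through those same interior neighbours, the interior angle $\angle u^- w u^+$ of $G$ at $w$ is also below $\pi$. Deleting $V^{ac}$ erases $a^-$ and $a^+$ whenever they differ from $u^\pm$, so in $\Gamma'_{ac}-V^{ac}$ the outer-face angle of $G$ at $w$ is exactly $2\pi-\angle u^- w u^+>\pi$, i.e.\ reflex. As $w$ was arbitrary, the outer boundary of $G$ is a strictly convex polygon, and we are done.

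The main obstacle is entirely the embedding bookkeeping behind that structural fact: one must pin down which wedge at $w$ is the outer face of $G'-V^b$ and verify that $u^-$ and $u^+$ genuinely lie between $a^-$ and $a^+$ in the rotation at $w$. This amounts to checking three kinds of outer vertex $w$ — an interior vertex $v_i$ of a pocket path, a pocket endpoint (flanked either by a second pocket or by an ordinary convex-hull edge of $\Gamma$), and a convex-hull vertex of $\Gamma$ carrying no buffer at all — but in each case the rotation at $w$ inherited from $G'$ has exactly the claimed form, so the angle inequality runs uniformly. No metric reasoning about the $\varepsilon$-placements of Step~1 is needed here; those served only to guarantee that $\Gamma'_{ac}$ exists, whereas the present argument uses nothing but its strict convexity and the fixed combinatorial embedding.
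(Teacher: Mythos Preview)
Your argument is correct. The paper itself offers no proof of this observation; it is stated as self-evident from the construction.

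The reason it is self-evident, and the reason your outer-boundary case analysis can be shortened, is the following structural fact: in $G'-V^b$ every vertex of $V^{ac}$ lies on the outer boundary, has degree exactly~$2$, and its two neighbours are consecutive vertices $v_{j-1},v_j$ of some pocket path $P_e$ (or an endpoint $v_0$ or $v_{k+1}$), hence adjacent in~$G$. Deleting such a degree-$2$ vertex from a strictly convex drawing removes one triangular inner face and one vertex of the strictly convex outer polygon; both operations plainly preserve strict convexity, since any sub-polygon of a strictly convex polygon is strictly convex. Iterating over~$V^{ac}$ yields the claim without examining the rotation at each outer vertex~$w$ of~$G$. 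Your angle-comparison argument is sound and arrives at the same conclusion, but it works harder than necessary: once one notes that the $V^{ac}$ vertices are degree-$2$ convex-hull vertices whose neighbours are already joined by an edge of~$G$, the observation really is immediate, which is presumably why the paper omits a proof.
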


We summarize:

\begin{proof}[\proofof{Theorem~\ref{thm:morph-to-convex}}]
We analyze the three steps of the algorithm individually.

\noindent{\bf Step 1:}
We begin by augmenting~$G$ and~$\Gamma$ to~$G'$ and~$\Gamma'$.
As discussed in the corresponding section, this can be done in~$O(n^2)$ time.

\noindent{\bf Step 2:}
Next, we apply the algorithm from Section~\ref{sec:non-convex-outer} to~$G'$ and~$\Gamma'$.
This algorithm was designed for $3$-connected graphs.
However, note that Lemma~\ref{lem:outerFace3con} applies to internally $3$-connected graphs as well).
Therefore, the algorithm also works for internally $3$-connected graphs as long as adding convex hull edges and then successively removing them never creates a plane graph which is not internally $3$-connected.
This is the case by Corollary~\ref{cor:canAddCh}.
Thereby, we obtain in $O(n'~^{1+\omega /2})\subseteq O(n^{1+\omega /2})$ time a convexity-increasing morph from~$\Gamma'$ to a strictly convex drawing of~$G'$, where~$n'$ is the number of vertices of the augmented graph~$G'$.
This morph is also convexity-increasing with respect to the subdrawing of~$G$ as every internal angle of~$\Gamma$ is also internal in~$\Gamma'$. 

As discussed in the last paragraph of Section~\ref{sec:non-convex-outer}, the upper bound on the number of morphing steps guaranteed by the algorithm is $\mathrm{max}\lbrace  2,r'+1\rbrace + 2\rho' +1$, which can be bounded by~$1.5n'+2$.
Here,~$\rho'$ denotes the number of pockets of~$\Gamma'$.
In fact, the bound can be improved to~$1.5n+2<1.5n'+2$ by observing that~$r=r'$ and~$\rho =\rho'$, where~$n$ and~$\rho$ denote the number of vertices and pockets, respectively, of the original drawing~$\Gamma$.
The latter equality is obvious.
For the former equality, observe that each vertex~$v_i^a\in V^{ac}, 2\le i\le k$ has an angle of~$\pi$ and, hence, it has no reflex angle.
Further, the construction of~$\Gamma'$ ensures that an outer angle at a vertex~$v_i^b\in V^b$ is reflex if and only if the outer angle at the corresponding vertex $v_i$ is reflex in~$\Gamma$.
Other angles at~$v_i^b$ can not be reflex.
Moreover, if the outer angle at~$v_i$ is reflex in~$\Gamma$ then~$v_i$ has no reflex angle in $\Gamma'$.
Consequently, we can charge the inner reflex angles of the vertices~$v_i^b$ to their counterparts~$v_i$.
Finally, the vertices $v_0,v_1^a,v_k^c,v_k+1$ of each pocket~$P_e'$ belong to the convex hull of~$\Gamma'$ and, thus, they do no have any internal reflex angles.
The convexity status of the remaining angles is untouched and, hence,~$r=r'$ as claimed.
Altogether, we obtain the improved bound
\[
\mathrm{max}\lbrace  2,r'+1\rbrace + 2\rho' +1=\mathrm{max}\lbrace  2,r+1\rbrace + 2\rho +1\le 1.5n+2,
\]
where the last inequality was already discussed in the last paragraph of Section~\ref{sec:non-convex-outer}.

\noindent{\bf Step 3:}
Finally, we iteratively apply Lemma~\ref{lem:removeLabelBmorph} to the strictly convex drawing of~$G'$ that was obtained in the previous step.
Each application increases the number of vertices of~$G$ on the convex hull.
Thus, by induction we arrive at a strictly convex drawing of~$G$.
Each of the intermediary morph steps is convexity-increasing with respect to the respective augmented graph.
Once again, since every internal angle of~$\Gamma$ remains internal in (all) the augmented graph(s), we have that the morphing sequence is convexity-increasing for~$G$ as well.
The number of morphing steps is bounded by~$2n$ and the time required to obtain the entire sequence sums up to~$O(n'~^{1+\omega /2})\subseteq O(n^{1+\omega /2})$.

Summing up, we end up with $3.5n+2$ morphing steps and a runtime of~$O(n^{1+\omega /2})$.
 \end{proof}

\section{Using Tutte's Algorithm to Find Convex Drawings Preserving $y$-Coordinates}
\label{sec:Tutte}

In this section we prove
Lemma~\ref{lem:H&N} using Tutte's graph drawing algorithm. 
\changednew{This reduces the problem to solving a linear system. 
Applying a generalized method due to Alon and Yuster~\cite{alon2013matrix} for solving a linear system whose non-zero's in the matrix correspond to a planar graph, we obtain}
an algorithm that produces a straight-line strictly convex redrawing preserving $y$-coordinates and that runs in $O(n^{\omega/2})$ time, where $\omega$ is the matrix multiplication exponent.

In his paper, ``How to Draw a Graph,''~\cite{tutte1963} Tutte showed that any 3-connected planar graph $G=(V,E)$ with a fixed convex drawing $C$ of its outer face has  a convex drawing with outer face~$C$ that can be obtained by solving a system of linear equations.  
\changednew{For each $v \in V$ let the variables $(x_v,y_v)$ represent the coordinates of vertex $v$.
Let $V_I$ be the internal vertices of $G$ and let $V_B$ be the vertices of the outer face. 
For each vertex $v \in V_B$ let $(x_v^b, y_v^b)$ 
be its (fixed) coordinates in $C$.  Let $d_v$ be the degree of vertex $v$.  Consider the system of equations:}
\begin{align}
\forall u \in V_I \ \ \ \ \ \ (x_u, y_u)  &= \sum_{(u,v) \in E}  \frac{1}{d_u} (x_v,y_v),\nonumber\\
\forall u \in V_B \ \ \ \ \ \ (x_u, y_u)  &= (x_u^b, y_u^b).
\label{eq:Tutte-sym}
\end{align}

Tutte proved that this system of equations has a unique solution 
and that the solution gives a convex drawing of $G$ with outer face $C$.
In fact, the drawing is even strictly convex if $C$ is strictly convex.

Tutte's Theorem was originally stated for $3$-connected graphs. However, it is well known that Tutte's proof also applies to the more general class of internally $3$-connected graphs since it only uses Property \ref{I3} of Lemma~\ref{lem:int3conDefs}. For the special case of  \emph{strictly} convex drawings of the outer face, the generalization from $3$-connected to internally $3$-connected graphs is easy to prove:
Suppose graph $G$ has an external separation pair $(u,v)$.
We make use of Observation~\ref{lem:externalSepPair}.  
Vertices $u$ and $v$ lie on the outer face, and their removal separates the graph into two connected components $D$ and $D'$.  
In the strictly convex drawing $C$ of the outer face, a line segment joining $u$ to $v$ will separate $C$ into two strictly convex subpolygons since by Property~\ref{E2} of Observation~\ref{lem:externalSepPair} the vertices $u$ and $v$ are not consecutive on~$C$.
By induction, Tutte's algorithm will draw each of the two components with strictly convex faces in the appropriate subpolygon, and---in case $(u,v)$ is not an edge of the graph---the face between $D$ and $D'$ will also be strictly convex.  
It is not necessary to apply Tutte's algorithm separately to the two components---one system of equations will do.

Tutte's theorem can be generalized to more general ``barycenter''  weights other than $1/d_u$.
Assign a weight  ${w_{u,v}>0}$ to each ordered  pair $(u,v)$ with $(u,v) \in E$ such that for each $u$ it holds that
${\sum_v  w_{u,v} = 1}$.  We emphasize that $w_{u,v}$ may be different from $w_{v,u}$. 
\changednew{Consider the system of equations:
\begin{align}
\forall u \in V_I \ \ \ \ \ \ (x_u, y_u)  &= \sum_{(u,v) \in E}  w_{u,v} (x_v,y_v),\nonumber\\
\forall u \in V_B \ \ \ \ \ \ (x_u, y_u)  &= (x_u^b, y_u^b).
\label{eq:Tutte}
\end{align}
This system also has a unique solution that gives a convex drawing of $G$ with outer face $C$, and a strictly convex drawing of $G$ if $C$ is strictly convex.
This generalization was first stated by Floater in 1997~\cite{floater-triangulation} for triangulations and one year later~\cite{Floater} for general 3-connected planar graphs, though the result is not stated as a theorem in either case.  Floater proved that the constraint matrix is non-singular, and, for the rest, said that Tutte's proof \footnote{ Colin de Verdiere et al.~\cite{Colin-de-Verdiere} point out that Tutte's original proof is complicated because Tutte is also re-proving Kuratowski's theorem, and they recommend the simpler proof by Richter-Gebert~\cite{Richter-Gebert}.}
carries over.

An explicit statement that the linear system (\ref{eq:Tutte}) has a unique solution that gives a strictly convex drawing of $G$ if $C$ is strictly convex is due to 
Gortler, Gotsman, Thurston in 2006 \cite[Theorem 4.1] {Gortler-2006}. They give a new proof using ``one-forms''.}

We can now give an alternate proof of Hong and Nagamochi's result: 

\begin{proof}[\proofof{Lemma~\ref{lem:H&N}}]
We must show that there is a strictly convex drawing of $G$ with outer face~$C$ that preserves the $y$-coordinates of the vertices from drawing $\Gamma$.  
Our idea is to do this in two steps, first choosing the barycenter weights to force vertices to lie at the required $y$-coordinates, and then solving system~(\ref{eq:Tutte}) with these barycenter weights to determine the $x$-coordinates.

For the first step, we solve the following system separately for each $u \in V_I$:
\begin{align}
y_u   = \sum_{(u,v) \in E}  w_{u,v} y_v, \ \ \ \ \ 
1  = \sum_v  w_{u,v}
\label{eq:barycenter}
\end{align}
Here the $y$'s are the known values from $\Gamma$  and the $w_{u,v}$'s are variables.   There are two equations and $d_u > 2$ variables, so the system is under-determined and can easily be solved: 
Because $\Gamma$ has $y$-monotone faces, every vertex $u \in V_I$ has neighbors below and above.  Let $N_u^+$ be the neighbors of $u$ that lie above $u$ in $\Gamma$.
Let $d_u^+ = | N_u^+ |$.  Similarly, let $N_u^-$ be the neighbors of $u$ that lie below $u$ in $\Gamma$ and let $d_u^- = | N_u^- |$.  Compute the average $y$-coordinate of $u$'s neighbors above and below: 
\begin{align*}
y_u^+ = \sum_{v \in N_u^+} \frac{1}{d_u^+} y_v\ \ \ \ \ \ 
y_u^- = \sum_{v \in N_u^-} \frac{1}{d_u^-} y_v
\end{align*}
Observe that $y_u$ lies between $y_u^+$ and $y_u^-$.  Thus we can find a value $t_u$, $0 < t_u < 1$, such that
\begin{align*}
y_u & = t_u y_u^+ + (1-t_u) y_u^- \\
& = \sum_{v \in N_u^+} \frac{t_u}{d_u^+} y_v   +  \sum_{v \in N_u^-} \frac{1-t_u}{d_u^-} y_v
\end{align*} 
Therefore, setting $w_{u,v} =  \frac{t_u}{d_u^+}$ for $v \in N_u^+$ and $w_{u,v} =  \frac{1-t_u}{d_u^-}$ for $v \in N_u^-$, yields a solution to~(\ref{eq:barycenter}).
\changednew{Observe that $w_{u,v} > 0$ for all $(u,v) \in E$.}

Given values $w_{u,v} > 0$ satisfying the constraints~(\ref{eq:barycenter}) for all $u \in V_I$, we then solve equations~(\ref{eq:Tutte}) to find values for the $x_u$'s.
By Tutte's generalized result, this provides a strictly convex drawing of $G$ with outer face $C$ while preserving $y$-coordinates.

\changednew{
It remains to discuss how to obtain the claimed run-time. 
Recall that we assume a real RAM model of computation---in particular, each arithmetic operation takes unit time.
Observe that solving the system~(\ref{eq:barycenter}) to find the appropriate weights $w_{u,v}$ based on the $y$-coordinates takes linear time.  The significant aspect is solving Tutte's generalized system of equations~(\ref{eq:Tutte}).  

Tutte's method gives rise to a linear system $Ax=b$ where $A$ is a matrix with a row and column for each vertex, and where the non-zeros in the matrix correspond to edges in the planar graph. 
In more detail, Tutte's original linear system~(\ref{eq:Tutte-sym}) can be re-written as
\begin{align*}
\forall u \in V_I \ \ \ \ \ \ d_u (x_u, y_u)  &= \sum_{(u,v) \in E}  (x_v,y_v),
\end{align*}
so the non-zeros in the constraint matrix,  $A$, consist of entries $-d_u$ down the main diagonal, and $a_{u,v} = a_{v,u} = 1$ if $(u,v)$ is an edge.   
The equations for vertices in $V_B$ give an extra part of the  constraint matrix that is just an identity matrix.
The crucial property is that the constraint matrix is symmetric.  
In fact, symmetry holds more generally if we 
choose weights or ``stresses'' $s_{u,v} = s_{v,u}$ and define $w_{u,v} = s_{u,v}/ \sum_{(u,z) \in E} s_{u,z}$. %, i.e., the equations become
Tutte's original theorem is the special case where $s_{u,w}=1$ for all $(u,v)$.

When the constraint matrix $A$ is symmetric with non-zero's corresponding  to the edges of a planar graph, 
the system $Ax = b$ can be solved in $O(n^{\omega/2})$ arithmetic operations using the generalized nested dissection method of Lipton, Rose, Tarjan~\cite{Lipton,DBLP:journals/siamcomp/LiptonT80}.
The fact that this applies to Tutte's original algorithm is well-known in graph drawing, 
see for example~\cite{DBLP:conf/compgeom/ChrobakGT96,Ribo-small-grid}. 
However, nested dissection does not apply when the matrix $A$ is not symmetric, so in particular, it does not apply to the linear 
system~(\ref{eq:Tutte}).\footnote{Although we mistakenly claimed this in a preliminary version~\cite{Kleist-WG-2018} of this paper.} 
For the more general case of an asymmetric matrix $A$ we need the following result of 
Alon and Yuster from 2013~\cite{alon2013matrix}.  They consider a linear system $Ax=b$ where $A$ has a row and a column for each vertex of an \emph{associated graph} $G$ and there is an edge $(u,v)$ in $G$ if and only if $a_{u,v} \ne 0$ or $a_{v,u} \ne 0$ (the diagonal entries of $A$ play no role in the definition of $G$).

\begin{theorem}[Theorem 1.1 in~\cite{alon2013matrix}, specialized to $\mathbb Q$ and to planar graphs]
Let $A \in {\mathbb Q}^{n \times n}$ be a nonsingular matrix and let $b \in {\mathbb Q}^n$. If the graph associated with $A$ is planar, then $Ax = b$ can be solved in $O(n^{\omega/2})$ time. 
\end{theorem}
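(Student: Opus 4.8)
The plan is to prove this by \emph{nested dissection}, following Lipton, Rose and Tarjan~\cite{Lipton,DBLP:journals/siamcomp/LiptonT80}, but adapted so that it goes through for an asymmetric $A$. The planarity hypothesis on the graph $G$ associated with $A$ enters only through the planar separator theorem of Lipton and Tarjan: I would recursively split $G$ by balanced separators of size $O(\sqrt{m})$ to obtain a \emph{separator decomposition tree} $\mathcal{T}$ of depth $O(\log n)$ in which a node at depth $i$ is responsible for a piece on $O(n/2^i)$ vertices with a separator of size $O(\sqrt{n/2^i})$, and there are $O(2^i)$ nodes at depth $i$. Listing the separators in post-order (the interior of a piece before its separator, children before their parent) gives an elimination ordering $\pi$ of the rows and columns of $A$. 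We then perform block Gaussian elimination of $A$ in the order $\pi$, producing a factorization $PA = LU$ for a suitable permutation $P$, and solve $Ax=b$ by forward and back substitution.

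The running time follows from the usual fill-in analysis, which I would cite essentially verbatim from nested dissection: a standard induction over $\mathcal{T}$ shows that eliminating the separator block of a node $\nu$ creates fill only among the vertices of that block and the separators on the root-path above $\nu$, so the Schur-complement updates at a depth-$i$ node involve dense matrices of size $O(\sqrt{n/2^i})$, and the dense linear algebra there (inverting a diagonal block and multiplying out the update, via fast matrix multiplication) costs $O((n/2^i)^{\omega/2})$ arithmetic operations. Summing over all nodes,
\[
  \sum_{i\ge 0} 2^i \Big(\tfrac{n}{2^i}\Big)^{\omega/2}
  = n^{\omega/2}\sum_{i\ge 0} 2^{\,i(1-\omega/2)} = O(n^{\omega/2}),
\]
where the geometric series converges because $\omega>2$ is known (if $\omega=2$ one would get $O(n\log n)$ instead). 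The substitution phase is cheaper, so the whole procedure runs in $O(n^{\omega/2})$ time on the real RAM.

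The point where this genuinely departs from the classical statement, and which I expect to be the main obstacle, is that $A$ is not symmetric positive definite, so Gaussian elimination in the fixed order $\pi$ can hit a zero (or structurally singular) pivot block even though $A$ is nonsingular; this is exactly why the classical theorem is restricted to the symmetric case and why it does not apply to Tutte's asymmetric system. To handle it I would allow \emph{local} pivoting. When it is time to eliminate the separator block $B$ of a node $\nu$, the fill-confinement invariant says that the columns of the current reduced matrix indexed by $B$ are supported only on rows indexed by $B$ together with the ancestor separators of $\nu$; since the already-eliminated blocks were chosen nonsingular, the current matrix is nonsingular, so these columns have full column rank $|B|$, and hence one can choose $|B|$ pivot rows from $B$ together with the ancestor separators making the pivot block nonsingular. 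Such row exchanges stay inside the index set that already interacts with $B$ in the fill pattern, so they neither enlarge the fill nor change the asymptotics; they only turn $LU=A$ into $LU=PA$ for a ``block-respecting'' permutation $P$. The remaining work is bookkeeping: verifying the fill bound is preserved when separator rows get consumed out of order. (A randomized shortcut would be to replace $A$ by $A+\varepsilon I$ for a random rational $\varepsilon$, which makes all leading principal minors in the order $\pi$ nonzero almost surely; but recovering $A^{-1}b$ from that needs interpolation in $\varepsilon$ and costs an extra factor of $n$, so the deterministic local-pivoting route is preferable.)
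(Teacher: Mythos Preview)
The paper does not prove this theorem at all: it is quoted verbatim as Theorem~1.1 of Alon and Yuster~\cite{alon2013matrix} and used as a black box inside the proof of Lemma~\ref{lem:H&N}. So there is no ``paper's own proof'' to compare against; the paper simply cites the result and remarks on the model of computation and the implicit assumption $\omega>2$.

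Your proposal, by contrast, is an attempt to reconstruct the Alon--Yuster argument itself. The high-level strategy you describe---nested dissection via planar separators, bounding fill by the ancestor-separator sets, and summing $(n/2^i)^{\omega/2}$ over the tree---is indeed the skeleton of both classical nested dissection and of~\cite{alon2013matrix}, and your cost analysis and the remark about $\omega=2$ match what the paper notes. The place where your sketch is genuinely incomplete is exactly the one you flag: the ``local pivoting'' step. Saying that the $|B|$ columns have full rank over $B$ together with the ancestor separators is correct, but once you borrow rows from ancestor separators as pivots, those rows are consumed and are no longer available when you later process the ancestor; you then have to argue that the remaining rows of the ancestor block still suffice there, and that the fill invariant survives these out-of-order consumptions. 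This ``bookkeeping'' is the actual content of the asymmetric extension and is not automatic---it is precisely why the classical Lipton--Rose--Tarjan theorem does not apply and why a separate paper~\cite{alon2013matrix} was needed. So your plan is on the right track but the hard step is deferred rather than done; for the purposes of the present paper, citing~\cite{alon2013matrix} (as the authors do) is the appropriate move.
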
 

They assume that each arithmetic operation takes unit time, i.e., that algorithms are measured in terms of their algebraic complexity.
Moreover, the matrix is assumed to be given in an implicit form.
They also assumed that~$\omega>2$; otherwise the run-time is in~$O(n\log n)$.

In terms of practicality, they note that although the fastest matrix multiplication algorithms are only theoretical, using naive matrix multiplication gives a run time of $O(n^{1.5})$ and 
all ingredients of the algorithm become practically implementable, and Strassen's algorithm is sensible for larger $n$ and gives a run time of $O(n^{1.41})$.

Note that in our case the matrix $A$ is non-singular because of Tutte's generalized result.  
This completes the proof of Lemma~\ref{lem:H&N}.}
\end{proof}

\section{Lower Bound on the Number of Morphing Steps}
\label{sec:spiral}
In this section, we show a linear lower bound on the number of required morphing steps.
\spiral*
Our proof of Theorem~\ref{thm:spiral} builds on the following result by Alamdari et al.~\cite{alamdari2016morph}.
\begin{theorem}[\cite{alamdari2016morph}]
\label{thm:spiral-2}
There exist two straight-line planar drawings~$\Gamma^\Delta(n')$ and~$\Gamma^-(n')$ of a path with~$n'$ vertices such that any planar morph between them which is composed of a sequence of linear morphing steps requires $\Omega (n')$ steps.
\end{theorem}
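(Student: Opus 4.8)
The plan is to exhibit the two drawings explicitly and then to track a single \emph{global} geometric quantity---the total signed turning of the path---that differs by $\Theta(n')$ between the two drawings but can change by only $O(1)$ during each linear morphing step. For the construction I would let $\Gamma^-(n')$ be the straight monotone drawing $v_i=(i,0)$, and let $\Gamma^\Delta(n')$ be a polyline \emph{spiral}: place the vertices so that consecutive edges turn by a fixed angle $\delta>0$, always in the same rotational direction, while the distance from a common center grows, so that the path winds $\Theta(n')$ times and stays simple. For a drawing $\Gamma$ of the path $P=v_1\cdots v_{n'}$ with no backtracking triple, let $\alpha_j(\Gamma)\in(-\pi,\pi)$ be the signed turning angle at the internal vertex $v_{j+1}$ and set $T(\Gamma)=\sum_{j=1}^{n'-2}\alpha_j(\Gamma)$. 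Then $T(\Gamma^-)=0$, whereas $T(\Gamma^\Delta)=(n'-2)\delta=\Theta(n')$.

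The crux is a per-step bound. Fix one linear step, during which every vertex moves at constant velocity, $v_j(t)=v_j(0)+t\,\vec s_j$ for $t\in[0,1]$, and let $\beta_j(t)$ denote the direction of edge $e_j=v_jv_{j+1}$, tracked continuously in $\mathbb{R}$. The edge vector $w_j(t)=v_{j+1}(t)-v_j(t)$ is affine in $t$ and, by planarity, never vanishes; a nonzero point moving along a line subtends a visual angle $<\pi$, so $|\beta_j(1)-\beta_j(0)|<\pi$ for every $j$. The decisive observation is that planarity also forbids any turning angle from reaching $\pm\pi$: $\alpha_j=\pm\pi$ would mean $e_j$ and $e_{j+1}$ are anti-parallel, forcing these two consecutive edges to overlap, which is impossible in a simple drawing. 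Hence each $\alpha_j(t)=\beta_{j+1}(t)-\beta_j(t)$ remains in $(-\pi,\pi)$, so $T$ is continuous along the step and telescopes to $T(\Gamma_t)=\beta_{n'-1}(t)-\beta_1(t)$. Consequently $|T(\Gamma_1)-T(\Gamma_0)|\le|\beta_{n'-1}(1)-\beta_{n'-1}(0)|+|\beta_1(1)-\beta_1(0)|<2\pi$.

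To conclude I would combine these: $T$ is continuous throughout any planar morph (it is continuous on each step and the drawings agree at step boundaries), and it changes by less than $2\pi$ per linear step; hence a planar morph between $\Gamma^-(n')$ and $\Gamma^\Delta(n')$ consisting of $m$ linear steps satisfies $2\pi m>|T(\Gamma^\Delta)-T(\Gamma^-)|=\Theta(n')$, giving $m=\Omega(n')$. The main obstacle is precisely the per-step bound: the whole argument hinges on the telescoping identity $T=\beta_{n'-1}-\beta_1$, which in turn relies on invoking planarity to rule out $\alpha_j=\pm\pi$, thereby guaranteeing that the continuously tracked turning agrees with the principal-value turning and that $T$ never jumps. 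Verifying that the spiral is simple with linear total turning, and that no turning angle degenerates during the motion, is the routine but essential remaining work.
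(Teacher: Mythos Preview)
Your argument is correct. The present paper does not prove this statement at all---it is quoted from Alamdari et al.~\cite{alamdari2016morph} and used as a black box in the proof of Theorem~\ref{thm:spiral}. So there is no ``paper's own proof'' to compare against here; the paper only describes the two drawings $\Gamma^-(n')$ and $\Gamma^\Delta(n')$ (with the specific turning angle $\delta=2\pi/3$) in order to set up the reduction.

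That said, your total-signed-turning argument is exactly the standard one and is essentially what Alamdari et al.\ do in the cited reference. The two pieces you single out as crucial are indeed the heart of the matter: (i) during a linear step each edge vector $w_j(t)$ traces a segment missing the origin, so its continuously tracked argument varies by less than~$\pi$; and (ii) planarity forbids $\alpha_j=\pm\pi$, so the continuously lifted differences $\beta_{j+1}-\beta_j$ stay in $(-\pi,\pi)$, agree with the principal-value turning angles, and make $T$ telescope. Both are verified correctly. The one point you label ``routine''---simplicity of the spiral with $\Theta(n')$ total turning---is immediate for the concrete construction in the paper (each triple of edges forms a triangle strictly containing the previous one), so you can simply adopt that specific drawing rather than a generic~$\delta$.
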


In the drawing~$\Gamma^-(n')$ the~$n'$ vertices $a_1,\dots,a_{n'}$ are placed on a horizontal line with $a_i$ to the left of $a_{i+1}$ for $1\le i<n'$, see Fig.~\ref{fig:spiral}(a).
In the drawing~$\Gamma^\Delta(n')$ the path forms a spiral, see Fig.~\ref{fig:spiral}(b).
More precisely, let~$e_i$ denote the edge~$\lbrace a_i,a_{i+1}\rbrace$.
Then for each $i$ with $i~\mathrm{mod}~3\equiv 1$, the edge~$e_i$ is horizontal and~$a_i$ is to the left of~$a_{i+1}$.
For each $i$ with $i~\mathrm{mod}~3\equiv 2$, the edge~$e_i$ is parallel to the line $y=\mathrm{tan}(2\pi /3)x$ and~$a_i$ is to the right of~$a_{i+1}$.
Finally, for each $i$ with $i~\mathrm{mod}~3\equiv 0$, the edge~$e_i$ is parallel to the line $y=\mathrm{tan}(-2\pi /3)x$ and~$a_i$ is to the right of~$a_{i+1}$.

In order to prove Theorem~\ref{thm:spiral}, we present a drawing~$\Gamma^{\underline{\Delta}}(n)$ of a cycle on~$n$ vertices which contains a subdrawing of $\Gamma^\Delta(n')$ for some $n'\in \Theta (n)$, see Fig.~\ref{fig:spiral}(c).
The existence of a convexifying planar morph for~$\Gamma^{\underline{\Delta}}(n)$  with~$o(n)$ linear morphing steps would imply the existence of a planar morph~$\Gamma^\Delta(n')$ and~$\Gamma^-(n')$ with~$o(n')$ linear morphing steps, constradicting Theorem~\ref{thm:spiral-2}.

 \begin{figure}[bht]
  \centering
  \includegraphics{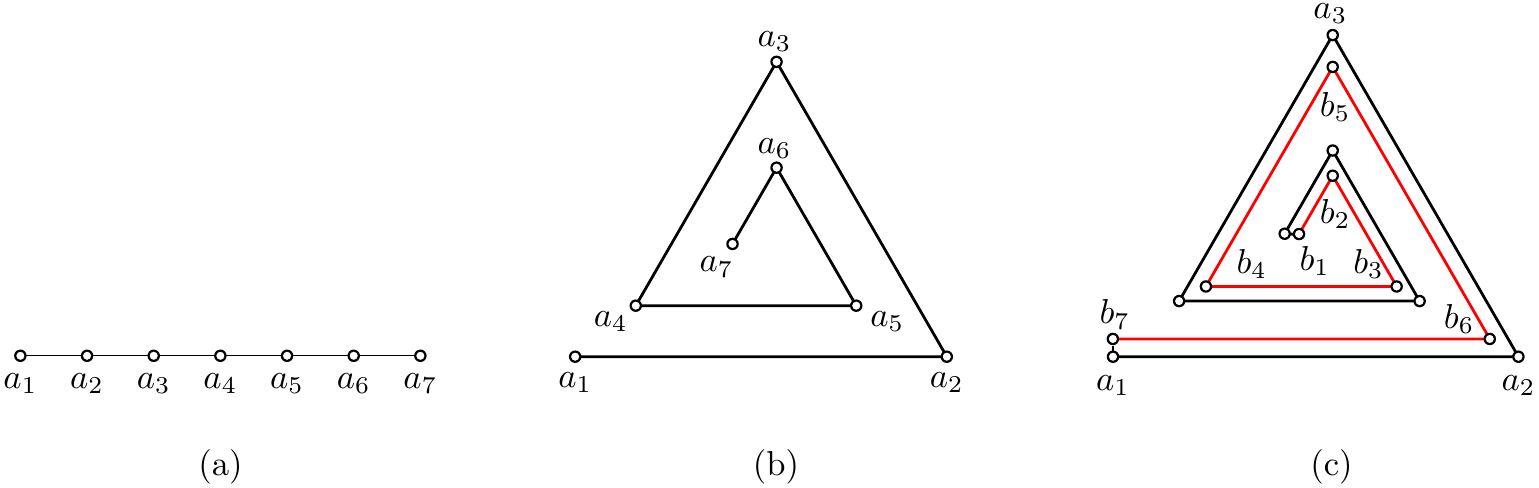}
  \caption{The drawings (a) $\Gamma^-(7)$, (b) $\Gamma^\Delta(7)$, and (c) $\Gamma^{\underline{\Delta}}(14)$.
}
  \label{fig:spiral}
 \end{figure}

\begin{proof}[\proofof{Theorem~\ref{thm:spiral}}]
Let~$n'=\lfloor n/2\rfloor$.
Let~$\Gamma^{\underline{\Delta}}(n)$ be some planar straight-line drawing of the cycle $C=(a_1,\dots,a_{n'},b_1,\dots,b_{n-n'})$ such that the induced subdrawing of the path $P=(a_1,\dots, a_{n'})$ is~$\Gamma^\Delta(n')$.
The exact realization of the path $(b_1,\dots b_{n-n'})$ is irrelevant for the purposes of this proof.
We give an example in Fig.~\ref{fig:spiral}(c).

Assume for a contradiction that there exists a morph~$\mathcal M$ composed of sequence of $o(n)$ linear morphing steps which convexifies~$\Gamma^{\underline{\Delta}}(n)$.
Restricting the morph $\mathcal M$  to the path~$P$ transforms~$\Gamma^\Delta(n')$ into a reflex chain~$\Gamma_P$ on the boundary of a strictly convex polygon.
It is easy to find~$O(1)$ additional morphing steps that transform~$\Gamma_P$ into the drawing $\Gamma^-(n')$; 
for example  we can intermediately aim for coordinates of $a_1$ and $a_n$ which are extreme in some direction 
as in the proof of Lemma~\ref{lem:outerFace3con} (in fact, our situation here is much simpler, as we are not restricted to horizontal and vertical morphs anymore).
Extending~$\mathcal M$ by these additional morphs yields planar morph with $o(n)\subseteq o(n')$ linear morphing steps that transforms $\Gamma^\Delta(n')$ into $\Gamma^-(n')$.
This is a contradiction to Theorem~\ref{thm:spiral-2}.
\end{proof}

\def\w{\ensuremath{\text{w}}_c\xspace}

\section{Lower Bound on Grid Size}
\label{sec:grid}

Every 3-connected planar graph can be drawn with convex faces on a $n \times n$ grid~\cite{DBLP:journals/order/Felsner01} or with strictly convex faces on a $O(n^2)\times O(n^2)$ grid~\cite{barany2006strictly}. It is desirable to find morphs in which the intermediate drawings lie on a polynomial-sized grid, i.e., the coordinates of each vertex have a logarithmic number of bits.
In this section, we show that this is not achievable with our current approach and, more generally, with any approach that uses Hong and Nagamochi's redrawing technique.
To do so, we design a family of drawings
to show that  a single horizontal morph to a convex drawing 
may unavoidably blow up the width of the drawing from $O(n)$ to $\Omega(n!)$.   Thus, there is no hope of restricting to a polynomial-sized grid. 

\changed{
A \emph{grid-drawing} of a planar graph $G$ is a straight-line planar drawing of $G$ in which all vertices are placed at integer coordinates. The \emph{width} of a grid-drawing $\Gamma$ is the length of a horizontal side of the smallest bounding box of~$\Gamma$. 
}

\begin{lemma}\label{lem:grid}
There exists a family of grid-drawings $(\Gamma_k)_k$  of internally 3-connected graphs~$G_k$ on~$n_k$ vertices such that the width of $\Gamma_k$ is $w(\Gamma_k)=2k$ and any grid-drawing of $G_k$ in which the $y$-coordinates match those of $\Gamma_k$ and in which every inner face is convex has at least width $\w(\Gamma_k)\geq 4^{k-1}(2k-2)!\in\Omega(w(\Gamma_k)!)=\Omega(n_k!)$.
\end{lemma}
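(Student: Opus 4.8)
The goal is to build a small family of internally 3-connected grid-drawings $(\Gamma_k)_k$ of width $2k$ such that any convex grid-drawing preserving the $y$-coordinates requires superpolynomial (in fact $\Omega(n_k!)$) width. The natural construction is a ``staircase of nested triangular wedges'' whose apex angles must each be convexified, and where each convexification forces the next apex to be pushed roughly twice as far out. Concretely, I would take a graph consisting of a long path $v_0, v_1, \dots, v_k$ which in $\Gamma_k$ zigzags with shallow reflex angles at each internal $v_i$, together with a second path (or a few edges) closing it off into an internally 3-connected plane graph whose outer face is already convex, so that Lemma~\ref{lem:H&N} (or any Hong--Nagamochi-style redrawing) must be applied with the $y$-coordinates of the $v_i$ fixed at, say, $y(v_i) = i$. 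The reflex vertices $v_i$ are local extrema in neither direction after a generic shear — actually, to use the clean statement I want them to be h-reflex, so I would place them so each $v_i$ has one path-neighbor strictly above and one strictly below, with $v_i$ sitting just to the ``wrong'' side of the segment joining its neighbors.

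The core of the argument is a width-doubling recurrence. Fix the $y$-coordinates $y(v_i)=i$. In any convex redrawing, convexity at $v_i$ forces $x(v_i)$ to lie on a prescribed side of the line through $v_{i-1}$ and $v_{i+1}$; since $v_i$ is at height $i$, which is the midpoint of heights $i-1$ and $i+1$, this says precisely that $x(v_i)$ is on the appropriate side of $\tfrac12(x(v_{i-1})+x(v_{i+1}))$, i.e. $x(v_{i+1})$ is forced at least $2x(v_i)-x(v_{i-1})-c_i$ for a gap $c_i\ge 1$ coming from integrality and from convexity being strict-enough in the grid (we only need convex, not strictly convex, so $c_i\ge 0$ suffices combined with a separate integrality push). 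Iterating, $x(v_k) \ge 2^{k-1}\cdot(\text{initial gap}) \ge 2^{k-1}$, which is already exponential; to reach the stated bound $4^{k-1}(2k-2)!$ I would amplify the construction so that convexifying the $i$-th wedge forces a multiplicative factor of roughly $4(2i)$ rather than $2$ — e.g. by making each ``wedge'' itself a small gadget of a constant number of vertices whose convex redrawing scales horizontal distances by $\Theta(i)$ (a fan of $\Theta(1)$ edges between consecutive heights, where the convex position of the apex relative to $\Theta(i)$ nested constraints multiplies the displacement). Choosing the per-step factor to be $4\cdot(\text{something linear in }i)$ and telescoping gives a product $\prod_{i} 4(2i-?) = 4^{k-1}(2k-2)!$; the arithmetic is routine once the gadget's scaling factor is pinned down, and $n_k = \Theta(k)$ gives $w_c(\Gamma_k) = \Omega(n_k!)$.

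The steps, in order: (1) define the gadget — a constant-size internally 3-connected plane graph $W$ with two designated ``attachment'' vertices at consecutive integer heights and an ``apex'' vertex, such that in any convex drawing with the attachment heights fixed, the apex's horizontal coordinate is displaced outward by a factor $\ge 4(2i-2)$ relative to the width of the attachment interval (this is a small finite geometry computation and is where I expect to spend the real effort); (2) chain $k$ copies of $W$ at heights $1,2,\dots,k$, sharing attachment vertices between consecutive copies, and close the whole thing into an internally 3-connected graph $G_k$ with a convex outer face, drawn in $\Gamma_k$ on a grid of width $2k$ with each apex placed to create an inner reflex angle; (3) verify $G_k$ is internally 3-connected (use Lemma~\ref{lem:int3conDefs}, Property~\ref{I3}: exhibit three vertex-disjoint paths from each internal vertex to the outer face — the gadget is built so this is immediate); (4) in any convex grid-redrawing with the $y$-coordinates of $\Gamma_k$, apply the gadget lemma inductively to get $w_c \ge \prod_{i=1}^{k-1} 4(2i-?) = 4^{k-1}(2k-2)!$, being careful that ``width'' is measured by the bounding box so a lower bound on a single horizontal distance $x(v_k)-x(v_0)$ suffices; (5) note $n_k = \Theta(k)$, so $4^{k-1}(2k-2)! = \Omega(n_k!)$, and $w(\Gamma_k)=2k=\Theta(n_k)$ as claimed.

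**The main obstacle.** The delicate part is step (1): engineering a constant-size gadget whose convex redrawing with fixed endpoint heights provably scales horizontal distance by a factor that is linear in the layer index $i$, rather than just by a constant $2$. The plain zigzag path only gives a factor-$2$ recurrence and hence an $\Omega(2^k)=\Omega(2^{n_k})$ bound, not $\Omega(n_k!)$; getting the factorial requires the forcing at layer $i$ to ``see'' all $i$ previous layers, so the gadget must couple its apex to a bundle of $\Theta(i)$ fixed-height anchors (which can be the apices of the earlier gadgets, reused) in such a way that convexity of all $\Theta(i)$ angles simultaneously pins the apex $\Theta(i)$ times farther out. I would design this by having each apex $v_i$ be the common local maximum (or minimum) of $\Theta(i)$ inner faces, each contributing one convexity constraint of the form $x(v_i) \ge \alpha_j x(v_{j}) + \cdots$, and arrange the heights so these constraints compound multiplicatively; proving they really do compound (rather than merely being simultaneously satisfiable at a small $x(v_i)$) is the crux and needs a careful induction on layers.
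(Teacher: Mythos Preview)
Your high-level strategy is correct: you need a per-step multiplicative factor that grows linearly in the step index so that the telescoping product is factorial rather than merely exponential. You also correctly identify that a plain zigzag only gives a constant factor per step. The gap is in your proposed mechanism for obtaining the linear factor.

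Your idea of making the apex $v_i$ the common local extremum of $\Theta(i)$ inner faces cannot work as stated: that forces $\deg(v_i)=\Theta(i)$, hence $\Theta(k^2)$ edges and (by planarity) $\Theta(k^2)$ vertices, so $n_k$ is no longer $\Theta(k)$ and the claimed relation $w(\Gamma_k)=2k=\Theta(n_k)$ fails; you then cannot conclude $\Omega(n_k!)$. More fundamentally, even if you accepted the larger vertex count, it is not clear that $\Theta(i)$ convexity constraints at a single vertex \emph{compound} multiplicatively rather than just being simultaneously satisfied at the maximum of $\Theta(i)$ lower bounds of comparable size. You flag this yourself, but the proposed fix does not survive scrutiny.

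The paper's mechanism is different and avoids both problems. Instead of adding many constraints at one vertex, it \emph{nests}: $\Gamma_{k+1}$ is obtained from $\Gamma_k$ by wrapping a new outer hexagon (six new vertices, four new inner edges, four new levels) around the whole of $\Gamma_k$. The linear factor then comes not from many faces but from a single large \emph{height difference}. The key geometric fact (their Observation~\ref{obs:increase}) is: if two points $b_1,b_2$ at level $\ell+1$ are distance $D$ apart and meet at a common apex $c$ on level $\ell$, then any points $a_1,a_2$ on level $\ell+j$ lying outside the lines $\overrightarrow{cb_1}$ and $\overrightarrow{cb_2}$ are at distance at least $jD$. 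In the nested construction, the relevant $j$ at step $k$ is roughly $4k$, because the new outer hexagon's extreme levels are $\Theta(k)$ levels away from the inner structure. Two applications of this observation per step give factors $(4k-6)$ and $(4k-4)$, whose product $4(2k-2)(2k-3)$ telescopes exactly to $4^{k-1}(2k-2)!$. Crucially, each step adds only $O(1)$ vertices, so $n_k=\Theta(k)$ as required.

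So the missing idea is: get the linear-in-$i$ amplification from a \emph{single} convexity constraint across a level gap of size $\Theta(i)$, achieved by nesting outward, rather than from $\Theta(i)$ constraints at one vertex.
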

The drawing $\Gamma_1$ of $G_1$ is depicted in Fig.~\ref{fig:blowUpGraphFamilyDef}.
The drawing $\Gamma_{k+1}$ of $G_{k+1}$ is obtained from
$\Gamma_k$ by introducing four new $y$-coordinates (called \emph{levels}) and six new vertices:
Introduce a cycle on six vertices as the new outer face and four new internal edges as shown in Fig.~\ref{fig:blowUpGraphFamilyDef}.
%---------------
\begin{figure}[bht]
\centering
\includegraphics[page=1]{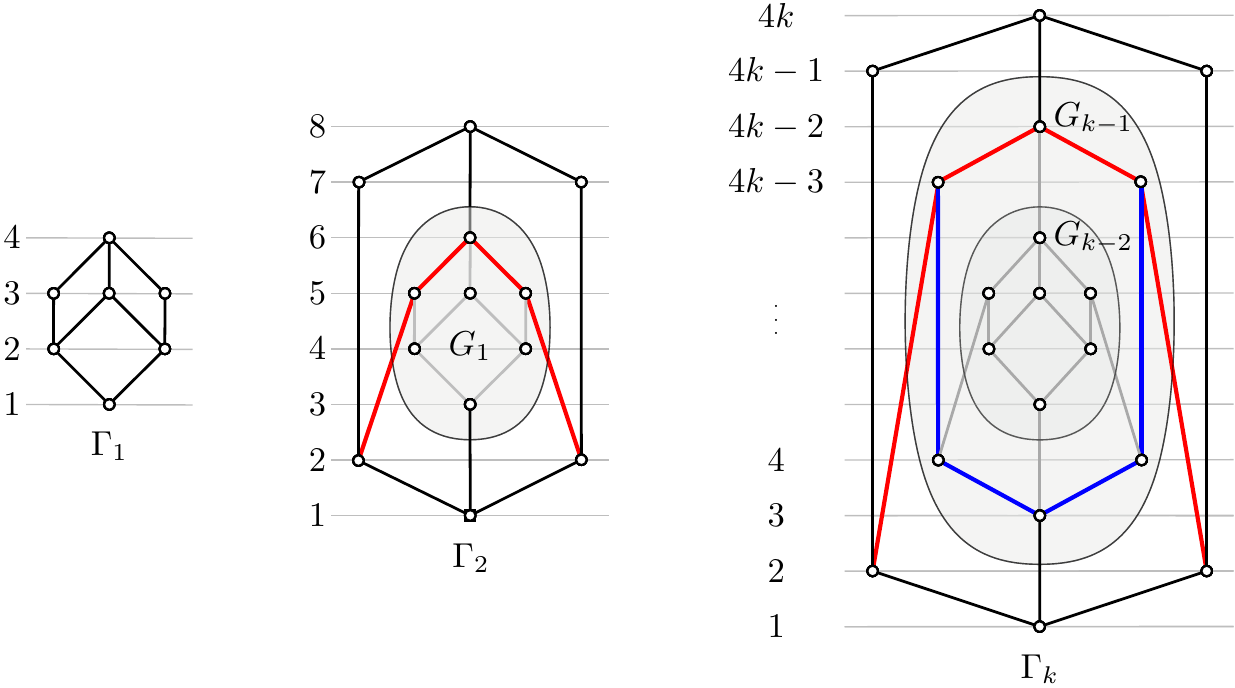}
\caption{Illustration of the drawings $\Gamma_k$ for small values of $k$. Any redrawing of $\Gamma_k$ that preserves the $y$-coordinates and has convex inner faces   needs an exponential increase in width.}
\label{fig:blowUpGraphFamilyDef}
\end{figure}
%-------------------

Note that $G_k$ has exactly two vertices of
degree two on the outer face. It follows directly from the construction that
$G_k$ has $6k+1$ vertices, $4k$ levels, and admits a grid-drawing of width at most $2k$.
It is easy to see that if $G_{k-1}$ has a
drawing with convex inner  faces that preserves the levels of $\Gamma_{k-1}$, then the analogous statement holds for $G_{k}$; we
can simply shift the degree-$2$ vertices in $G_{k-1}$ and the new vertices on
levels $2$ and $4k-1$ far enough outwards.

We now argue that any convex drawing necessarily blows up the width with the following geometric observation.
A similar idea was used by Lin and Eades \cite{lin2003} for the construction of hierarchical drawings where every straight-line drawing has a large width.

\begin{obs}\label{obs:increase}
Consider a grid-drawing of a path $(a_1,b_1,c,b_2,a_2)$ such that~$c$ belongs to level~$k$; $b_1$ is left of~$b_2$ on level~$k+1$; and~$a_1$ is left of~$a_2$ on level~$k+j$ where~$j\ge 2$.
Let~$D$ denote the distance of~$b_1$ and~$b_2$.
If~$a_1$ is not to the right of the oriented line~$\overrightarrow{cb_1}$ and~$a_2$ is not to the left of the oriented line~$\overrightarrow{cb_2}$, then the distance of~$a_1$ and~$a_2$ is at least~$jD$.
\end{obs}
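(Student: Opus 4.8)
The plan is to reduce the statement to a short computation in Cartesian coordinates. First I would normalise so that $c$ lies on the $y$-axis, writing $c=(0,k)$, and set $b_1=(p_1,k+1)$ and $b_2=(p_2,k+1)$ with $p_1<p_2$, so that $D=p_2-p_1>0$; here I am using that consecutive levels differ by one in the $y$-coordinate, so that $a_1$ and $a_2$ both have $y$-coordinate $k+j$. Since $a_1$ and $a_2$ lie on the same level, the distance between them is exactly $|x(a_2)-x(a_1)|$, so it suffices to prove $x(a_2)-x(a_1)\ge jD$.

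The heart of the argument is to locate where the rays $\overrightarrow{cb_1}$ and $\overrightarrow{cb_2}$ cross level $k+j$. Parametrising the ray $\overrightarrow{cb_1}$ as $(t p_1,\,k+t)$ and setting its $y$-coordinate to $k+j$ gives $t=j$, so this ray meets level $k+j$ at the point with $x$-coordinate $j p_1$; symmetrically $\overrightarrow{cb_2}$ meets level $k+j$ at $x$-coordinate $j p_2$. The hypothesis that $a_1$ is not to the right of $\overrightarrow{cb_1}$, together with the fact that $a_1$ lies on level $k+j$, says precisely that $x(a_1)\le j p_1$; likewise ``$a_2$ not to the left of $\overrightarrow{cb_2}$'' gives $x(a_2)\ge j p_2$. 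Subtracting yields $x(a_2)-x(a_1)\ge j(p_2-p_1)=jD$, which is in particular positive and hence consistent with $a_1$ being left of $a_2$. Equivalently, one may phrase the two hypotheses through the signs of the cross products $(b_1-c)\times(a_1-c)$ and $(b_2-c)\times(a_2-c)$, obtaining the same two inequalities $x(a_1)\le j p_1$ and $x(a_2)\ge j p_2$.

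I do not expect any genuine obstacle: this is elementary plane geometry. The only things to be careful about are the orientation bookkeeping --- ensuring that ``not to the right of $\overrightarrow{cb_1}$'' and ``not to the left of $\overrightarrow{cb_2}$'' are translated into inequalities with the correct signs so that the two constraints push $a_1$ and $a_2$ apart rather than together --- and making explicit where the factor $j$ comes from. That factor is the ratio between the vertical gap $j$ from $c$'s level to the $a$-level and the unit vertical gap from $c$'s level to the $b$-level; if $b_1,b_2$ were $m$ levels above $c$ instead of one, the same computation would only give $x(a_2)-x(a_1)\ge (j/m)\,D$, so the lemma genuinely relies on $b_1,b_2$ being exactly one level above $c$.
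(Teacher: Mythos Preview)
Your proof is correct. The paper does not give an explicit proof of this observation; it simply refers to an accompanying figure and proceeds to use the statement in the proof of Lemma~\ref{lem:grid}. Your coordinate computation is precisely the natural justification the figure is meant to convey, and your remark about the factor $j$ arising from the unit vertical gap between $c$ and the $b$-level is exactly the point.
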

%-----------
For an illustration of Observation~\ref{obs:increase} consider Fig.~\ref{fig:obs}. We use it to prove Lemma \ref{lem:grid}.
%-----------
\begin{figure}[htb]
 \centering
\includegraphics[page=3]{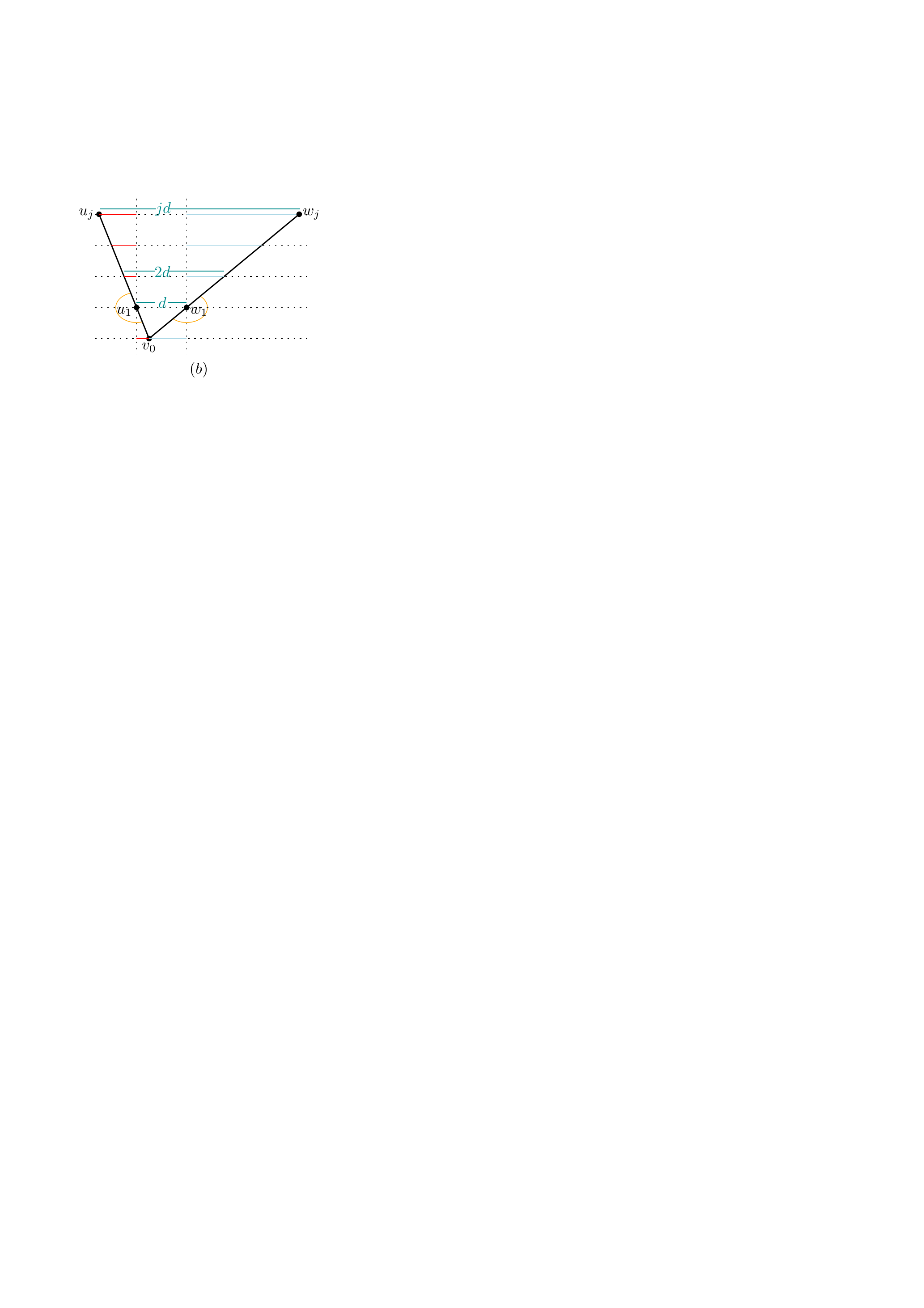}
\caption{Illustration of Observation~\ref{obs:increase}.}
\label{fig:obs}
\end{figure}
%-----------
%----------
\begin{proof}[Proof of Lemma~\ref{lem:grid}]
Since the third level of~$\Gamma_1$ contains three vertices, we have $\w(\Gamma_1)\geq 2$. 
For $k\geq 2$, we prove the claim by induction with the stronger induction hypothesis that in any level-preserving grid-drawing $\Gamma_k'$ of $G_k$ with convex inner faces, the two vertices on the second level have distance at least $\w(\Gamma_k)$; and that $\w(\Gamma_k)\geq 4^{k-1}(2k-2)!$.

For the base case, consider any level-preserving redrawing $\Gamma_2'$ of $\Gamma_2$ with convex inner faces. Note that in any grid-drawing the three vertices on level 5 need a width of at least 2. 
Applying Observation~\ref{obs:increase} to the outermost vertices on levels 6,5,2 in $\Gamma_2'$ (highlighted by the red path in Fig.~\ref{fig:blowUpGraphFamilyDef}), it follows that in $\Gamma_2'$ the distance of the two vertices on level 2 is at least 8 and, thus, also $\w(\Gamma_{2})\geq 8$.

For the induction step with $k\geq 3$, consider a level-preserving redrawing $\Gamma_{k}'$ of $\Gamma_{k}$ in which all inner faces are convex. We apply  Observation~\ref{obs:increase} twice in $\Gamma_{k}'$.
By the induction hypothesis, the distance of the vertices on level 4 in $\Gamma_k'$ is at least $\w(\Gamma_{k-1})
$ since $\Gamma_k'$ contains a level-preserving redrawing of $\Gamma_{k-1}$ in which all inner faces are convex.
Applying Observation~\ref{obs:increase} to the outermost vertices on levels $3$, $4$, and $4k-3$ in $\Gamma_k'$ (highlighted by the blue path in Fig.~\ref{fig:blowUpGraphFamilyDef}) shows that the vertices on level $4k-3$ in $\Gamma_k'$ have a distance of at least $(4k-6)\cdot \w(\Gamma_{k-1})$.
Applying Observation~\ref{obs:increase} to the outermost vertices on levels $2$, $4k-3$, and $4k-2$  (highlighted by the red path in Fig.~\ref{fig:blowUpGraphFamilyDef}) shows that the vertices on level $2$  in $\Gamma_k'$ have distance at least $(4k-4)\cdot (4k-6)\w(\Gamma_{k-1})=:\delta$. Since $\w(\Gamma_{k-1})\geq 4^{k-2}(2k-4)!$, a simple calculation yields that $\delta= 4^{k-1}(2k-2)!$ and, thus, the distance of the vertices in level~2 in $\Gamma_k'$ is at least $\w(\Gamma_k)\geq \delta= 4^{k-1}(2k-2)!$. 
\end{proof}

Note that each graph $G_k$ resulting from this construction is internally 3-connected by Lemma~\ref{lem:convConn}
since it has a convex drawing. However, $G_k$ is not
3-connected. We can obtain the same result for 3-connected graphs simply by
\changed{contracting the two outer edges incident to the vertex on level~$4k$. Note that these contractions remove the two degree-$2$ vertices on the outer face of~$G_k$.}

\section{Conclusions}\label{sec:openProblems}

We have shown how to morph any straight-line planar drawing of an internally 3-connected graph to a strictly convex drawing while preserving planarity and increasing convexity throughout the morph. Moreover, our morph is composed of a linear number of horizontal and vertical steps, which is asymptotically optimal.
The following questions are open:

\begin{enumerate}
\item 
Recall that during a convexity-increasing morph, the set of \emph{internal} convex angles never decreases. 
We conjecture that every straight-line planar drawing of a (internally) 3-connected graph admits a convexity-increasing morph to a strictly convex drawing such that during the morph the set of \emph{external} reflex angles also never decreases.

\item Our algorithm for finding a convexity-increasing morph to a convex drawing can be executed in $O(n^{1 + \omega/2})$ time on a Real-RAM. 
Our new version of Lemma~\ref{lem:H&N} also speeds up the run-time of the algorithm of Alamdari et al.~\cite{alamdari2016morph} for morphing between two given drawings from $O(n^3)$ to $O(n^{1 + \omega/2})$. For both these problems, it would be interesting to find even more efficient algorithms,  or to establish non-trivial lower bounds on the run-time.

\item 
A main open question is to design piece-wise linear morphs with a polynomial bound on the bit complexity of the intermediate drawings.  
This would be a step towards having 
intermediate drawings that lie on a polynomial-sized grid, i.e.~with a logarithmic number of bits for each vertex's coordinates.
This is open both for our problem of morphing to a convex drawing and for the problem of morphing between two given planar straight-line drawings.

\item We have introduced the idea of using horizontal and vertical morphs.  
It would be interesting to further explore their visual quality or to explore what can be accomplished with this restriction. 
Horizontal and vertical morphs were very recently used by Da Lozzo et al.~\cite{da-lozzo2018upward} for a special case of morphing between two upward planar drawings.

\end{enumerate}

\medskip\noindent{\bf{Acknowledgments.}}
We 
thank Andr\'e Schulz for helpful discussions on generalizations of Tutte's algorithm.
This work was begun at Dagstuhl workshop 17072, ``Applications of Topology to the Analysis of 1-Dimensional Objects.''  We thank Dagstuhl, the organizers, and the other participants for a stimulating workshop.
In particular, we 
thank Carola Wenk and Regina Rotmann for joining some of our discussions, and Irina Kostitsyna for contributing 
many valuable ideas.

\bibliographystyle{abbrv} % (uses file "plain.bst")
\bibliography{convexify}

\appendix\clearpage

\section{\changed{Previous Versions of Lemma~\ref{lem:H&N}}}
%\section{Hong and Nagamochi}
\label{appendix:H&N}

In this appendix we give more details on the 
\changed{versions of  Lemma~\ref{lem:H&N} proved by Hong and Nagamochi~\cite{hn-2012} and Angelini et al.~\cite{angelini-convex-drawings-2015}.}  
First we introduce some terminology from~\cite{hn-2012}.

A \emph{hierarchical graph} is a graph with vertices assigned to \emph{layers} which are horizontal lines.  
A \emph{level drawing} of a hierarchical graph maps each vertex to a point on its layer line, and each edge to a $y$-monotone curve.  Each edge can be directed upwards, so there is an underlying directed graph, and the standard notions of sink and source.  
A \emph{hierarchical-st plane graph} is a hierarchical graph that has a level drawing that is planar and that has only one source and one sink.  This is equivalent to our definition that all faces are $y$-monotone.

\begin{lemma} [Hong and Nagamochi~\cite{hn-2012}]
\label{lem:H&N-original}
Let $G$ be 
\changed{an internally 3-connected}  
hierarchical-st plane graph.  Then a drawing of the outer face of $G$ on a strictly convex polygon $C$ can be extended in $O(n^2)$ time to a convex drawing of $G$ by choosing an $x$-coordinate for each internal vertex.
\end{lemma}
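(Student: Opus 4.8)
The plan is to derive Lemma~\ref{lem:H&N-original} directly from the strengthened Lemma~\ref{lem:H&N}, which is stated earlier; the only real work is to translate between the two formulations. By definition a hierarchical-st plane graph is a plane graph admitting a planar level drawing with a single source and a single sink, and as noted above this is the same as saying that every face is $y$-monotone, with the layer of a vertex playing the role of its $y$-coordinate. Hence the hypotheses of Lemma~\ref{lem:H&N-original} (internally $3$-connected, hierarchical-st) coincide with those of Lemma~\ref{lem:H&N} (internally $3$-connected, all faces $y$-monotone). Moreover, the outer boundary of a hierarchical-st plane graph consists of two $y$-monotone paths from the source to the sink, so a strictly convex polygon $C$ on which the outer face is drawn by straight-line segments necessarily places each outer vertex on its prescribed layer line; equivalently, $C$ assigns each outer vertex exactly its $y$-coordinate in the given level drawing $\Gamma$.

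With this dictionary, the proof is immediate: apply Lemma~\ref{lem:H&N} to $\Gamma$ and the strictly convex polygon $C$. It yields a straight-line drawing $\Gamma'$ of $G$ that is strictly convex, has $C$ as its outer face, and preserves the $y$-coordinate of every vertex. Since $C$ and the layer assignment are fixed, $\Gamma'$ differs from $\Gamma$ only in the $x$-coordinates of the internal vertices, which is exactly the form of extension required; and a strictly convex drawing is in particular convex, so all faces are convex. Finally, Lemma~\ref{lem:H&N} runs in $O(n^{\omega/2})$ time, and since $\omega\le 3$ we have $\omega/2\le 3/2<2$, so $O(n^{\omega/2})\subseteq O(n^2)$, matching (indeed beating) the stated bound.

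For completeness I would also indicate the self-contained route taken by Hong and Nagamochi, which avoids linear algebra and is why their bound is a clean $O(n^2)$. One argues by induction along a recursive decomposition of $G$. The key case is that $G$ has an external separation pair $\{u,v\}$: one cuts $C$ along the chord $uv$ into two strictly convex sub-polygons (possible since, by Observation~\ref{lem:externalSepPair}, $u$ and $v$ are non-consecutive on $C$), recurses on the two sides, and observes that the face, if any, created between them is convex, being bounded by two reflex chains and the chord $uv$; this is precisely the inductive step used for the strictly convex case in Section~\ref{sec:Tutte}. The main obstacle on this route is the bookkeeping---verifying that the sub-polygons remain strictly convex and that the recursive placements stay compatible across the whole decomposition, together with a direct placement for the base case (roughly, the $3$-connected pieces). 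This bookkeeping is the technical core of Hong and Nagamochi's original argument; the Tutte-based proof of Section~\ref{sec:Tutte} sidesteps it entirely by solving one global linear system, at the cost of needing a fast planar solver to keep the running time small.
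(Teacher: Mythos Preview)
The paper does not actually prove this lemma: it is stated in the appendix as a citation of Hong and Nagamochi's result, purely for context, with no proof given. So there is no ``paper's own proof'' to compare against. What the paper \emph{does} prove is Lemma~\ref{lem:H&N}, its strengthened version, in Section~\ref{sec:Tutte}; your proposal to recover Lemma~\ref{lem:H&N-original} as a corollary of Lemma~\ref{lem:H&N} is a perfectly legitimate route, and the runtime inclusion $O(n^{\omega/2})\subseteq O(n^2)$ is correct.

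One small correction: your sentence ``a strictly convex polygon $C$ on which the outer face is drawn by straight-line segments necessarily places each outer vertex on its prescribed layer line'' is not right. Strict convexity of $C$ together with the outer face being two $y$-monotone chains does not force the $y$-coordinates of $C$ to match the layer assignment; you could draw the same cyclic sequence of outer vertices on many strictly convex polygons with different $y$-coordinates. That the polygon $C$ respects the layers is an \emph{assumption}, not a consequence---the paper says exactly this in the sentence following the lemma (``Implicit in the above statement is that the polygon $C$ and the convex drawing of $G$ respect the layers''). Once you treat it as an assumption rather than a derived fact, your translation to the hypotheses of Lemma~\ref{lem:H&N} goes through and the rest of your argument is fine. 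Your second paragraph sketching Hong and Nagamochi's recursive decomposition is a reasonable informal summary of their original approach, and your contrast with the Tutte-based proof matches how the paper frames the two routes.
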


Implicit in \changed{the above} statement is that the polygon $C$ and the convex drawing of $G$ respect the layers of the hierarchical graph.  
Hong and Nagamochi 
\changed{also gave} necessary and sufficient conditions to deal with the case where the drawing of $C$ need not be strictly convex. 

The drawings generated by Lemma~\ref{lem:H&N-original} are, in general, not strictly convex since the corresponding recursive algorithm draws certain paths of the graph by placing all its vertices on a common line.
\changed{Angelini et al.~\cite[Theorem 6]{angelini-convex-drawings-2015} describe an extension of the algorithm that perturbs the vertex coordinates in order to avoid angles of size $\pi$ and, thereby, strengthen the above lemma to obtain \emph{strictly} convex drawings.
They do not analyze the runtime of their extension.}

\section{Morphing Algorithm by Alamdari et al.}
\label{appendix:previousMorphing}

In this section we give some details to justify Theorem~\ref{thm:improved-morph}.  
The morphing algorithm of Alamdari et al.~\cite{alamdari2016morph} depended on the
basic result that it is possible to morph a triangulation so that a given
quadrilateral formed by two adjacent triangles becomes convex.  This
problem---called ``Quadrilateral Convexification''---was solved in Section 6 of the
paper by means of a single unidirectional morph achieved by applying Hong and
Nagamochi's algorithm to redraw the triangulation so that the quadrilateral is
convex.  The run-time is $O(n^2)$ because of Hong and Nagamochi's algorithm.  This
is improved to $O(n^{\omega/2})$ by our Lemma~\ref{lem:H&N}.  

In fact, Quadrilateral Convexification, applied $O(n)$ times,  was the bottleneck
that gave a run-time of $O(n^3)$ for the algorithm of Alamdari et
al.~\cite{alamdari2016morph}, as we now justify briefly.
Their algorithm consisted of a preliminary step to find a compatible triangulation,
thereby reducing the general problem to the case of triangulations.  
Then the main part of the algorithm iterated a contraction step that reduced the
number of vertices by 1 in each step.  Finally, there was a ``clean-up'' step that
replaced contraction by ``almost-contraction.''  
The preliminary compatible triangulation step (Theorem 4.1) involved $O(n)$ calls to
Quadrilateral Convexification.  As noted at the end of the proof of Theorem 4.1,
this dominated the other work.  Thus our  Lemma~\ref{lem:H&N} improves the run-time
of the preliminary step from $O(n^3)$ to $O(n^{1 + \omega/2})$.
The main iterative algorithm (Theorem 5.1) consisted of $O(n)$ iterations, and each
iteration involved a constant number of applications of Quadrilateral
Convexification.  Thus the original run-time of $O(n^3)$ is reduced to  $O(n^{1 +
\omega/2})$.
Finally, the clean-up step (Theorem 3.3) did not use Quadrilateral Convexification
and ran in time $O(n)$ per iteration, for a total of $O(n^2)$.

\end{document}